\documentclass[11pt]{article}

\usepackage[utf8]{inputenc}
\usepackage[T1]{fontenc}
\usepackage[american]{babel}

\usepackage[a4paper,margin=1.0in]{geometry}
\usepackage{amsmath,amssymb,amsthm,mathtools}
\usepackage{amsfonts}
\usepackage{thmtools}
\usepackage[authoryear,round,comma]{natbib}
\usepackage{enumitem} 

\usepackage{etoolbox}
\newtoggle{journal}

\usepackage{microtype}

\usepackage[dvipsnames]{xcolor}
\usepackage{hyperref}
\hypersetup{
  colorlinks=true,
  linkcolor=blue!50!black,
  citecolor=blue!50!black,
  urlcolor=blue!50!black
}

\usepackage[capitalize,noabbrev]{cleveref}
\usepackage{needspace}

\theoremstyle{plain}
\newtheorem{theorem}{Theorem}[section]
\newtheorem{proposition}[theorem]{Proposition}
\newtheorem{lemma}[theorem]{Lemma}

\theoremstyle{definition}
\newtheorem{definition}[theorem]{Definition}

\newtheorem{remark}[theorem]{Remark}

\newcommand{\R}{\mathbb{R}}
\newcommand{\F}{\mathcal{F}}
\newcommand{\U}{\mathcal{U}}
\newcommand{\Lc}{\mathcal{L}}
\newcommand{\simplex}{\Delta^\circ}
\newcommand{\clr}{\mathrm{clr}}
\newcommand{\Aitch}{d_A}
\newcommand{\fref}{f_{\mathrm{ref}}}
\newcommand{\Prod}{\Pi}
\newcommand{\Sum}{\Sigma}
\newcommand{\dSum}{d_\Sum}
\DeclareMathOperator{\median}{median}

\title{Fairness and Strategy-Proofness in Automated Market Makers}
\iftoggle{journal}{%
\author{Frank M.\ V.\ Feys \\ \small Independent researcher, Antwerp, Belgium \\ \small \texttt{frank@fmvfeys.com}}
}{\author{Frank M.\ V.\ Feys}}

\date{June 3, 2026}

\begin{document}

\emergencystretch=3em

\maketitle


\begin{abstract}
No deployed automated market maker lets its liquidity providers vote on the trading function.
We show this is structural, not an oversight.
On the weighted-product family with $n \geq 3$ assets, no aggregation rule is at once fair and strategy-proof.
Arrovian fairness forces a unique form, the \emph{weighted Aitchison centroid}, the weighted geometric mean of the providers' preferred pools.
But fairness forces \emph{mean-type} aggregation and strategy-proofness forces \emph{median-type}, and the only rule that is both is a single-provider dictator.
The obstruction is sharp: it vanishes at $n = 2$, where a fair strategy-proof rule exists.
Under the Frongillo--Papireddygari--Waggoner equivalence, the centroid is Genest's logarithmic opinion pool, and the impossibility transfers to externally Bayesian pooling.
\end{abstract}

\medskip

\noindent\textbf{JEL classification:}
C70  (Game Theory and Bargaining Theory: General);
D47  (Market Design);
D71  (Social Choice; Clubs; Committees; Associations);
D82  (Asymmetric and Private Information; Mechanism Design);
G10  (General Financial Markets: General).

\medskip

\noindent\textbf{Keywords:}
Automated market makers;
Decentralized finance;
Arrovian aggregation;
Mechanism design;
Strategy-proofness;
Impossibility theorems;
Weighted Aitchison centroid;
Aitchison geometry;
Logarithmic opinion pool;
Externally Bayesian pooling;
Cauchy functional equation;
Constant function market makers.


\section{Introduction}\label{sec:intro}

\subsection{The Question}\label{sec:intro-question}

\iftoggle{journal}
{
Automated market makers have become the dominant trading venue on public blockchains, pricing tens of billions of dollars a day (2025--2026) by formula rather than by a market maker.
At the heart of the dominant class, constant-function market makers (CFMMs), sits one mathematical object: a strictly increasing quasiconcave \emph{trading function} $f \colon \R^A_{>0} \to \R$ on inventories in $n = |A|$ assets.
Trades preserve its level sets, so this single choice fixes the pool's entire price-formation rule.
The constant-product $f(I) = \prod_i I_i$ and the constant-sum $f(I) = \sum_i I_i$ mark the extremes: the former quotes a price that moves steeply as reserves deplete, the latter one that holds fixed until an asset runs out.
The choice is a fundamental design decision, not a technical detail.
}
{ 
Automated market makers have become the dominant trading venue on public blockchains, intermediating tens of billions of dollars a day (2025--2026), pricing every trade by formula rather than by a market maker.
At the heart of the dominant class, constant-function market makers (CFMMs), sits one mathematical object: a strictly increasing quasiconcave \emph{trading function} $f \colon \R^A_{>0} \to \R$ on inventories in $n = |A|$ assets.
Trades preserve its level sets, so this single choice fixes the entire price-formation rule of the pool, quoting every price deterministically from the current reserve vector.
The constant-product function $f(I) = \prod_i I_i$ and the constant-sum function $f(I) = \sum_i I_i$ mark the range: the former quotes a price that moves steeply as reserves deplete, the latter one that holds fixed until an asset runs out.
Different trading functions encode different bets about how the pool should price risk, so the choice is a fundamental design decision rather than a technical detail.
}

Yet every deployed protocol fixes its trading function at launch, and LPs simply deposit into pre-built pools they did not design, with no mechanism to express preferences over the risk profile that function implies.
From a market-design standpoint, this is a striking omission.
LPs are the residual claimants of every pool: they bear price risk through their pro-rata share of the reserves, and the choice of trading function shapes that risk directly.
A constant-product LP and a constant-sum LP face very different exposures to price moves: stablecoin liquidity providers, for instance, reasonably prefer a Curve-style invariant that concentrates liquidity near the peg to a Uniswap-style one that spreads it uniformly across the entire price range.

It would seem natural, then, to let the people bearing the risk have a say in the formula that governs it.
So why does no major protocol let LPs collectively choose the trading function?
The obvious answer is  that this is an engineering oversight awaiting fix.
We shall show that it is in fact a structural constraint that no protocol can engineer around without giving up something essential. 
A scope caveat is in order at the outset: 
 the main \emph{impossibility} of this paper requires $n \geq 3$ assets, 
 which covers the multi-asset Balancer and Curve pools and the multi-asset Uniswap v2 generalization but \emph{excludes} the canonical two-asset Uniswap v2/v3 pool, which is the most deployed AMM by volume.\footnote{By aggregate daily spot volume, two-asset pools substantially exceed multi-asset weighted-product pools: as of 1 June 2026, the Uniswap v2/v3/v4 protocols (all two-asset CPMMs) transacted approximately \$3 billion in 24-hour spot volume against approximately \$100 million for Balancer v2/v3 (the leading multi-asset weighted-product venue), a ratio of order $30$ to $1$ \citep{defillama-2026}. Curve's comparable daily volume is largely StableSwap, which is structurally outside the present framework.}
At $n = 2$ the design space is one-dimensional and the cocycle/Cauchy machinery breaks down for structural reasons.
The impossibility is concentrated where $n \geq 3$ makes pair-restrictions lossy, and the two-asset case is treated separately in \Cref{sec:two-asset}, where the Arrovian core together with  strategy-proofness is shown to be compatible.

By a ``democratic'' rule we mean, informally,  one that aggregates LP preferences anonymously and nontrivially.
The formal hypotheses are the full Arrovian core of \Cref{sec:axioms} (of which anonymity and nontriviality are two axioms), and the title's impossibility is precisely the incompatibility of that core with manipulation-resistance.
Nontriviality rules out the constant rule that ignores LP input,  and the deployed practice of fixing the trading function in advance is exactly that constant rule.

\subsection{Contributions}\label{sec:intro-results}

We treat AMM design as a problem in mechanism design.
LPs are the principals whose preferences over trading functions are aggregated, and the trading function is the social choice.
Concretely, an \emph{aggregation  rule} (or \emph{mechanism}) is a function that takes as input each LP's most-preferred trading function and each LP's deposit, and  produces as output a single trading function for the pool.
Our axioms constrain this aggregation rule, not the trading function it produces.
This is the paper's central methodological move.
Prior axiomatic work on AMMs imposes conditions on $f$ as a self-standing object, while we impose conditions on the procedure that selects $f$ from LP inputs.
Strategy-proofness in particular is a rule-level property.
 Within this framing we obtain two results, a characterization and an impossibility.

 The first concerns the weighted-product  family $\F_\Prod = \{f_\alpha \mid \alpha \in \simplex\}$, 
 where $f_\alpha(I) = \prod_i I_i^{\alpha_i}$ and $\simplex$ denotes the open probability simplex.
Six axioms characterize the admissible aggregation rules on this family:  four substantive fairness conditions (Pareto efficiency over LP preferences, anonymity of LPs, continuity in profiles, and \emph{mechanism IIA}, 
 a mechanism-level analog of Arrow's independence of irrelevant alternatives axiom from \citet{arrow-1951}) and two regularity conditions (nontriviality and unrestricted domain).
Mechanism IIA requires that the structural exchange-rate ratio between assets $i$ and $j$ depend only on LPs' $(i,j)$-restricted preferences and the valuation profile $V$, not on their preferences about other asset pairs.
Continuity is substantive rather than merely technical, since it rules out discontinuous tie-breaking schemes that would otherwise reintroduce admissible but unstable rules (\Cref{app:independence}).
We refer to these six axioms collectively  as the \emph{Arrovian core}.
They appear in \Cref{sec:axioms} under the labels (A0)--(A4) together with continuity (C).
The aggregators on $\F_\Prod$ with $n \geq 3$ assets selected by the Arrovian core are precisely the \emph{weighted Aitchison centroids}, namely the weighted geometric means on the simplex of the LP-preferred weight vectors $\alpha_\ell^* \in \simplex$, with weights $w_\ell(V)$ that are continuous equivariant functions of the LPs' deposit valuations $V_\ell$.
This is the content of \Cref{thm:characterization-F0}.

The same mathematical object reappears in Bayesian statistics, identified in \Cref{prop:correspondence}  and put to work in \Cref{sec:pooling-transfer}.
Under the Frongillo--Papireddygari--Waggoner equivalence between CFMMs and cost-function prediction markets \citep{frongillo-papireddygari-waggoner-2024}, an LP-preference profile on $\F_\Prod$ corresponds to a profile of probability distributions, and the weighted Aitchison centroid corresponds to the \emph{logarithmic opinion pool} of \citet{genest-1984}, the unique externally Bayesian pooling rule under his  regularity hypotheses.
The substantive axioms correspond on the two sides and the characterized rule families coincide.
Running the impossibility across the equivalence yields a transferred manipulability theorem for externally Bayesian opinion pools under the Aitchison metric (\Cref{thm:pooling-impossibility}), which complements a recent line of impossibilities for probability aggregation pursued under different nonmanipulability notions and different pool families (\citealp{dietrich-list-2025, chambers-echenique-hayashi-2024, laraki-varloot-2022}; see \Cref{sec:pooling-theorem}).
Concretely, no externally Bayesian, anonymous, continuous, nondegenerate pool on $n \geq 3$ outcomes is Aitchison-strategy-proof, where Aitchison-strategy-proofness is the manipulation-resistance condition under which the log pool is the metric (Fr\'echet) mean of the experts' distributions.

The second theorem establishes the impossibility the title names.
The natural strategy-proofness axiom on $\F_\Prod$ requires that no LP can shift the mechanism's output closer to their true preference by misreporting it, where ``closer'' is measured in the Aitchison metric.
We show in \Cref{thm:impossibility} that on $\F_\Prod$ with $n \geq 3$, adding this requirement to the Arrovian core yields an impossibility, in that no aggregator mechanism satisfies both.

The impossibility is a collision between two geometries on the same space.
The Arrovian core, via mechanism IIA and the cocycle of structural price ratios, forces \emph{mean-type} aggregation by magnitude
 (a weighted geometric mean of LP-preferred  weight vectors, equivalently, a weighted arithmetic mean in centered log-ratio coordinates), whereas strategy-proofness in the Euclidean geometry that the Aitchison metric induces on those coordinates forces \emph{median-type} aggregation by rank, since the linear aggregator already pinned down by the Arrovian core is Euclidean strategy-proof if and only if it is a single-LP dictator (\Cref{lem:sp-forces-projection}).
Anonymity then pins the weights to the uniform vector at any symmetric valuation profile, and the uniform vector is not a single-LP dictator, so for $m \geq 2$ no rule can be both fair and strategy-proof.
The two-asset case is treated separately in \Cref{sec:two-asset},  where (A3) goes vacuous and a  Moulin generalized-median characterization replaces the impossibility (under the additional modeling assumption that deposit-dependence factors through the valuation profile, which on $\F_\Prod$ at $n \geq 3$ is a consequence of the axioms but must be imposed separately at $n = 2$).

The proofs of both theorems rest on a Cauchy-type functional equation on the cocycle of structural price ratios on the simplex, the multiplicative analog of the additive cocycle identified by \citet{aczel-wagner-1980} in their characterization of weighted arithmetic mean aggregators.

\subsection{The Design Fork and Scope}\label{sec:intro-fork}

The impossibility forces a sharp fork.
Any protocol aggregating LP preferences on  $\F_\Prod$ must give up one of four properties, namely mechanism IIA, anonymity, nontriviality, or strategy-proofness;
every deployed protocol takes the third route, fixing the trading function in advance (forfeiting nontriviality (A0) outright, and Pareto efficiency (A2) at any profile whose LP peaks do not coincide with the fixed function), and \Cref{thm:impossibility} explains why this is forced rather than chosen.
The four routes and their real-world instances are developed in \Cref{sec:practical-implications}.
The two theorems are stated on $\F_\Prod$ with $n \geq 3$ assets and Aitchison-single-peaked preferences, and three \emph{design subspaces} delimit their reach.
Restricting to $\F_\Prod$ rather than the full space $\F$ is a modeling choice: $\F_\Prod$ is where the cocycle/Cauchy machinery applies and where the leading deployed protocols (Uniswap, Balancer) sit, while $\F$ remains open (\Cref{sec:open-questions}, Q1).
On $\F_\Sum$, mechanism IIA is vacuous and the impossibility does not arise (\Cref{lem:A3-vacuous-Fgamma,rem:Fgamma-SP-included}); the same holds on $\F_\Prod$ at $n = 2$ (\Cref{thm:characterization-F0-n-two}), and on $\U = \F_\Prod \cup \F_\Sum$ the impossibility extends to the $\F_\Prod$-arm but not the $\F_\Sum$-arm (\Cref{rem:union}).
Hanson's LMSR \citep{hanson-2003} and Curve's StableSwap both sit outside the framework, the former because it is translation- rather than scale-invariant (\Cref{rem:LMSR-outside}), the latter because its pricing function is nonseparable across asset pairs (\Cref{rem:curve-nonseparable}).

\subsection{Related Literature}\label{sec:intro-related}

The closest precedent to our work is \citet{schlegel-kwasnicki-mamageishvili-2023}, who initiated the modern axiomatic treatment of CFMMs at the trading-function level, characterizing the constant-elasticity family by independence and scale invariance and the LMSR by independence and translation invariance.
Our paper addresses a different question: not which trading functions are admissible, but which aggregation rules are admissible when the trading function is the output of LP preference aggregation.
Their independence axiom and our mechanism IIA agree on constant mechanisms but diverge on nonconstant ones.
Theirs restricts $f$ itself, asking its marginal $(i,j)$-rate  to ignore third-asset inventories, whereas ours restricts the \emph{rule} producing $f$.
The impossibility (\Cref{thm:impossibility}) is structurally absent from the trading-function-level setting.
\citet{frongillo-papireddygari-waggoner-2024} establish the CFMM--prediction-market equivalence on which our correspondence with opinion pooling rests.
\citet{bichuch-feinstein-2025} axiomatize trading functions using a utility-theoretic framework with an explicit price-impact measure.
Their axioms operate at the trading-function level and do not include an analog of mechanism IIA, so their characterization and ours are complementary, with no direct overlap.

Real-world instances of the aggregation problem appear  in AMM governance: Balancer's governance system \citep{balancer-governance-2026} allows weight adjustments to its weighted-CPMM pools, and Uniswap's fee-tier governance \citep{uniswap-governance-2021} determines the available fee schedules; in the taxonomy of \Cref{sec:practical-implications}, these token-weighted votes take the second route (nonanonymous influence scaling with governance-token holdings).
Neither system asks LPs to report a preferred trading function directly.
Instead, governance-token holders vote on discrete parameter changes, so the aggregation operates over a finite menu of options rather than the full continuous design space $\F_\Prod$.

On the social-choice and functional-equations side, our argument builds on the toolkit of \citet{aczel-1966},  the opinion-pool characterizations of \citet{aczel-wagner-1980},   \citet{genest-1984}, and \citet{mcconway-1981}, the Moulin theorem on strategy-proofness for single-peaked preferences \citep{moulin-1980,moulin-axioms-of-cooperative-decision-making}, the multidimensional strategy-proofness results of \citet{border-jordan-1983},  and the classical impossibilities of \citet{arrow-1951}, \citet{gibbard-1973}, and \citet{satterthwaite-1975}.
The continuum-of-alternatives formulation of Moulin's characterization that we use in \Cref{thm:characterization-Fgamma} is due to \citet{ching-1997}.
The Acz\'el--Wagner additive cocycle on the simplex and its multiplicative counterpart on the open simplex are the two functional-equation skeletons that organize our proofs, the latter giving the weighted geometric mean of \Cref{thm:characterization-F0} and the former the weighted arithmetic mean that would govern the LMSR analog of Q2.
A more general framework for strategy-proof social choice \citep{nehring-puppe-2007}  subsumes our $\F_\Sum$-side results as an instance on a one-dimensional median space, but the $\F_\Prod$-side impossibility lies outside it: 
the centered log-ratio transform is an isometry from the Aitchison simplex onto the Euclidean hyperplane \mbox{$H \cong \R^{n-1}$},
 and $\R^d$ is not a median space for $d \geq 2$, so for $n \geq 3$ the simplex falls outside the median-space framework, and the median-space property that enables generalized-median rules on $\F_\Sum$ is precisely what is absent on $\F_\Prod$.
The pooling-side manipulability theorem (\Cref{sec:pooling-theorem}) joins a recent line of impossibilities for probability aggregation \citep{dietrich-list-2025, chambers-echenique-hayashi-2024, laraki-varloot-2022}, from which it is distinguished by its use of the Aitchison metric on the simplex and by taking external Bayesianity as primitive.

\subsection{Executive Summary}\label{sec:intro-exsum}

Our axioms constrain the aggregation rule that produces the trading function, not the trading function itself.
This rule-level vantage makes strategy-proofness a meaningful axiom and yields the picture below.

\begin{itemize}

\item \emph{What is impossible.}
On the weighted-product family $\F_\Prod$ with $n \geq 3$ assets, no rule aggregating LP preferences can be both fair (Pareto, anonymity, continuity, mechanism IIA, nontriviality, unrestricted domain) and strategy-proof (\Cref{thm:impossibility}).

\item  \emph{What fairness alone forces.}
The Arrovian axioms pin down the rule to a unique functional form:
 a weighted Aitchison centroid of the LP-preferred weight vectors, with weights $w_\ell(V)$ that are continuous equivariant functions of the valuation profile (\Cref{thm:characterization-F0}).

\item  \emph{Why fairness and strategy-proofness are incompatible.}
Fairness forces \emph{mean-type} aggregation. 
Within that family,  strategy-proofness selects only single-LP dictators. 
Anonymity then rules out dictators for $m \geq 2$.

\item \emph{The four escape routes.}
Mechanism IIA, anonymity, nontriviality, or strategy-proofness: one must go. 
Every deployed protocol drops nontriviality, at the cost of Pareto wherever the fixed function falls outside the peaks' Aitchison hull.

\item \emph{Where the impossibility does not arise.}
On $\F_\Sum$ and on $\F_\Prod$ at $n = 2$, mechanism IIA is vacuous and the core is consistent with strategy-proofness; under the factoring hypothesis (V), the admissible rules are the Moulin generalized medians (\Cref{rem:Fgamma-SP-included,thm:characterization-F0-n-two}).

\item \emph{A consequence in statistics.}
The characterized rule is Genest's logarithmic opinion pool, and the impossibility transfers to a manipulability theorem for externally Bayesian pools under the Aitchison metric (\Cref{thm:pooling-impossibility}), complementing recent impossibilities for probability aggregation \citep{dietrich-list-2025, chambers-echenique-hayashi-2024, laraki-varloot-2022}.

\item \emph{AMM families outside the framework.}
Hanson's LMSR sits outside because it is translation- rather than scale-invariant; Curve's StableSwap sits outside because its pricing function is nonseparable across asset pairs (\Cref{rem:LMSR-outside,rem:curve-nonseparable}).

\end{itemize}

The impossibility is thus structurally confined to multi-asset weighted-product design spaces, where mechanism IIA does substantive work. 
The two-asset case and $\F_\Sum$ both admit Arrovian-fair strategy-proof rules.

\subsection{Outline}\label{sec:intro-outline}

\Cref{sec:setup,sec:axioms} fix the formalism and state the axioms.
\Cref{sec:F0-characterization} proves the characterization on $\F_\Prod$ with $n \geq 3$, identifying the weighted Aitchison centroids; \Cref{sec:impossibility} derives the impossibility.
\Cref{sec:two-asset} treats $n = 2$, where Moulin generalized medians replace the impossibility.
\Cref{sec:scope} delimits the reach of both theorems on $\F_\Sum$ and on $\U$, characterizing the generalized medians on $\F_\Sum$.
\Cref{sec:practical-implications} reads the picture for an AMM designer, with the four routes and a calibration cross-check.
\Cref{sec:pooling-transfer} transfers the impossibility to Bayesian opinion pooling, where the centroid reappears as Genest's logarithmic pool.
\Cref{sec:discussion} records scope and open questions.
\Cref{app:independence,app:identity,app:general-pareto} verify logical independence, record an elementary identity, and extend the impossibility beyond Aitchison-single-peakedness, respectively.


\section{Setup}\label{sec:setup}

Throughout, we write $n := |A|$ for the number of assets with  $A = \{1, 2, \ldots, n\}$ fixed.
We assume $n \geq 2$ and shall indicate where $n \geq 3$ is needed.
The number $m$ of LPs is fixed with $m \geq 2$.

\subsection{Trading Functions}\label{sec:trading-functions}

A \emph{trading function} is a function $f \colon \R^A_{>0} \to \R$ satisfying (T1)--(T4) below.
We write $I \in \R^A_{>0}$ for a pool's  \emph{inventory} or \emph{reserve} vector, with $I_i$ the number of units of asset $i$ held by the pool.
This is the typical input to $f$, and $\R^A_{>0}$ is the inventory space.
Operationally, a pool holding reserves $I$ accepts a trade to reserves $I' \in \R^A_{>0}$ exactly when $f(I') = f(I)$, so the level sets of $f$ are the admissible trades and the choice of $f$ fixes how the pool quotes every price.

\begin{description}
\item[(T1)] $f \in C^2(\R^A_{>0})$ (twice continuously differentiable).
 \item[(T2)] $\partial_i f > 0$ on $\R^A_{>0}$ for every $i \in A$ (strict monotonicity).
\item[(T3)] For every $I \in \R^A_{>0}$  and every $v \in \ker Df(I)$, it holds that $v^\top D^2f(I)\, v \leq 0$ (the Hessian of $f$, restricted to the tangent space of the level set, is negative semidefinite).
\item[(T4)] $f$ is $1$-homogeneous: $f(\lambda I) = \lambda f(I)$ for every $\lambda > 0$ and every $I \in \R^A_{>0}$.
\end{description}

Conditions (T1)--(T4) are standard in the CFMM literature \citep{schlegel-kwasnicki-mamageishvili-2023,frongillo-papireddygari-waggoner-2024}, with one departure: those sources include a ``sufficient funds'' or ``aversion to permanent loss'' axiom controlling $f$ near $\partial \R^A_{\geq 0}$, which we drop, since $\F_\Prod$ and $\F_\Sum$ exhibit opposite boundary behaviors (level sets of weighted-product functions asymptote to the axes without touching them; the constant-sum function $f_1$ has level sets that meet the axes) and no proof in what follows uses the condition.
Dropping it lets $\F_\Prod$ and $\F_\Sum$ sit uniformly inside the framework, the constant-sum endpoint of $\F_\Sum$ included.
We use the semidefinite form in (T3) rather than strict concavity because the constant-sum function $f_1 = \tfrac{1}{n}\sum_i I_i$ (the $\gamma = 1$ endpoint of $\F_\Sum$, see \Cref{def:Fgamma} below) is linear, with flat level sets; every other function in $\F_\Prod$ and in $\F_\Sum$ with $\gamma < 1$ is  strictly quasiconcave.
The $1$-homogeneity (T4) is the DeFi convention making LP tokens fungible, in that scaling reserves by $\lambda$ scales pool value by $\lambda$.
It is substantive only on the general space $\F$, holding automatically on $\F_\Prod$ and $\F_\Sum$.

Two trading functions $f, g$ are \emph{equivalent} if there exists a $C^2$ diffeomorphism $\Phi$ on the range of $f$ (equivalently, a $C^2$ function with $\Phi' > 0$ throughout the range of $f$) such that $g = \Phi \circ f$.
This is a genuine equivalence relation: reflexivity by $\Phi = \mathrm{id}$, symmetry because the inverse of a $C^2$ diffeomorphism is again a $C^2$ diffeomorphism (the inverse function theorem applies since $\Phi' > 0$), and transitivity by composition.
The strict positivity of $\Phi'$ also ensures $g = \Phi \circ f$ inherits (T1)--(T4) from $f$ (in particular (T2), as $\partial_i g = \Phi'(f)\,\partial_i f > 0$).
Equivalence preserves level sets and hence preserves marginal exchange rates, no-arbitrage conditions, and all economically meaningful properties.
We denote the resulting set of equivalence classes by $\F$ and refer to its elements as trading functions, identifying each class with a representative when convenient.
The key structural quantities are well-defined on equivalence classes, not merely on representatives.

\begin{lemma}\label{lem:well-defined-quotient}
Let $f$ satisfy (T1)--(T4) and let $g = \Phi \circ f$ with $\Phi$ a $C^2$ diffeomorphism on the range of $f$.
Then: (a) the marginal exchange rate  satisfies $p_{ij}(g, I) = p_{ij}(f, I)$ for every $I$; (b) condition (T3) holds for $g$ if and only if it holds for $f$.
\end{lemma}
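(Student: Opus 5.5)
The plan is a direct chain-rule computation for both parts. The marginal exchange rate has not been formally defined at this point, so I take the standard CFMM definition $p_{ij}(f,I) = \partial_i f(I)/\partial_j f(I)$, which is well defined by (T2). Throughout, write $g = \Phi\circ f$ and abbreviate $\phi_1 := \Phi'(f(I)) > 0$ and $\phi_2 := \Phi''(f(I))$. Since $\Phi$ is $C^2$ and $f$ is $C^2$, $g$ is $C^2$, so every derivative used below exists.

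For part (a), the chain rule gives $\partial_i g(I) = \phi_1\,\partial_i f(I)$ for every $i$. Because $\phi_1 > 0$, this factor cancels in the ratio, so $p_{ij}(g,I) = \phi_1\partial_i f(I)/(\phi_1\partial_j f(I)) = p_{ij}(f,I)$.

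For part (b), two observations suffice.
\begin{itemize}
\item The tangent spaces coincide. From part (a), $Dg(I) = \phi_1\,Df(I)$ with $\phi_1 \neq 0$, so $\ker Dg(I) = \ker Df(I)$.
\item The restricted Hessians differ by a positive factor. Differentiating once more gives
\begin{equation*}
D^2 g(I) = \phi_2\, Df(I)^\top Df(I) + \phi_1\, D^2 f(I).
\end{equation*}
For $v \in \ker Df(I)$ the rank-one term vanishes, because $v^\top Df(I)^\top Df(I) v = (Df(I)v)^2 = 0$. Hence $v^\top D^2 g(I) v = \phi_1\, v^\top D^2 f(I) v$.
\end{itemize}
Since $\phi_1 > 0$, the sign condition $v^\top D^2 g(I)v \le 0$ holds exactly when $v^\top D^2 f(I) v \le 0$. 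This holds for every $I$ and every $v$ in the common kernel, which gives the equivalence in (b).

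I expect no real obstacle. The only point needing care is that the second-order term $\phi_2\,Df^\top Df$ involves $\Phi''$, whose sign is unknown, and that term must be shown to vanish on the tangent space. Restricting to $\ker Df(I)$, as above, is exactly what kills it. Everything else is bookkeeping.
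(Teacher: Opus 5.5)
Your proof is correct and follows the paper's argument step for step. For (a), the factor $\Phi'(f)>0$ cancels in the ratio. For (b), the kernels of $Dg(I)$ and $Df(I)$ coincide, and the term $\Phi''(f)\,Df^\top Df$ vanishes on $\ker Df(I)$, leaving $v^\top D^2 g\,v = \Phi'(f)\,v^\top D^2 f\,v$.

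You implicitly use $\Phi'>0$ rather than merely $\Phi'\neq 0$, and that is exactly the paper's convention for equivalence; an orientation-reversing $\Phi$ would still give (a) but would flip the sign in (b).
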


\begin{proof}
Both are direct computations.
(a) We have $\partial_i g = \Phi'(f)\,\partial_i f$ with $\Phi' > 0$, so the factor $\Phi'(f)$ cancels in the ratio $p_{ij}(g,I) = \partial_i g/\partial_j g = \partial_i f/\partial_j f = p_{ij}(f,I)$.
(b) By the chain rule, it holds that $D^2 g = \Phi''(f)\,Df^\top Df + \Phi'(f)\,D^2 f$, and $Dg = \Phi'(f) \, Df$ with $\Phi'(f) > 0$, so $\ker Dg(I) = \ker Df(I)$ (the tangent space to the level set is the same).
Observe that on $\ker Df(I) =  \ker Dg(I)$ the first term vanishes, so $v^\top D^2 g\, v = \Phi'(f)\,v^\top D^2 f\, v$ has the same sign as $v^\top D^2 f\, v$ since $\Phi' > 0$; 
therefore, (T3) holds for $g$ on its kernel if and only if it holds for $f$ on its (same) kernel, and (T3) is reparametrization-invariant.
\end{proof}

\begin{remark}[LMSR Sits Outside]\label{rem:LMSR-outside}
Hanson's LMSR $f(I) = b \log \sum_i e^{I_i/b}$ is translation-invariant (\mbox{$f(I + \lambda \mathbf{1}) = f(I) + \lambda$}) rather than scale-invariant, so it fails (T4).
The translation-invariant axiom system one would obtain by replacing (T4) with translation invariance is the natural setting for LMSR.
In that system the multiplicative cocycle becomes additive, and the weighted geometric mean of \Cref{thm:characterization-F0} would be replaced by a weighted arithmetic mean characterized by \citet{aczel-wagner-1980}.
A precise characterization-plus-impossibility  statement in that system is open and is formulated as Q2 of \Cref{sec:open-questions}.
We do not pursue it here.
\end{remark}

We equip $\F$ with the $C^2$ topology of uniform convergence on compact subsets for conceptual completeness, but the theorems below use this topology only on the finite-dimensional subspaces \mbox{$\F_\Prod \cong  \simplex$} and \mbox{$\F_\Sum \cong (-\infty,1]$} (\Cref{def:F0,def:Fgamma}), where it reduces to the standard Euclidean topology.

\subsection{Design Subspaces}\label{sec:design-subspaces}

The mechanism and LP preferences are both stated relative to a subset of $\F$ carrying a metric.
We name such a pair as follows.

\begin{definition}  \label{def:design-subspace}
A \emph{design subspace} is a pair  $(\F', d)$, where $\F' \subseteq \F$ is a subset of trading-function equivalence classes and $d$ is a metric on $\F'$.
\end{definition}

The mechanism (\Cref{def:mechanism}) is defined as a map on $\F'$, and LP preferences  (\Cref{def:profile}) are single-peaked in $d$.
We work with three concrete design subspaces, each defined together with its metric.
The first two we label by their defining algebraic operator:  $\F_\Prod$ for the product-based family and $\F_\Sum$ for the sum-based family.
The third is their union.

\paragraph{The weighted-product family.}
We begin with the weighted constant-product market makers.

\begin{definition}\label{def:F0}
The \emph{weighted-CPMM family} is
\[
\F_\Prod := \{ f_\alpha \mid \alpha \in \simplex \},  \quad   f_\alpha(I) := \prod_{i=1}^n I_i^{\alpha_i},
\]
where $\simplex := \{\alpha \in \R^n_{>0} \mid \sum_i \alpha_i = 1\}$  is the open probability simplex on the $n$ assets.
\end{definition}

We fix in each class the normalized representative $f_\alpha(I) = \prod_{i=1}^n I_i^{\alpha_i}$ with $\alpha \in \simplex$, i.e., with $\sum_i \alpha_i = 1$.
Each class contains exactly one such  $\alpha$, since rescaling the exponent vector by $c > 0$ gives an equivalent function.
The map $\alpha \mapsto f_\alpha$ is a homeomorphism onto  $\F_\Prod$ in the $C^2$ topology (the structural ratios recover $\alpha$, and $f_\alpha \mapsto \partial_i f_\alpha(\mathbf{1})$ inverts it continuously), so we identify $\F_\Prod$ with $\simplex$.
Each $f_\alpha \in \F_\Prod$ satisfies (T1)--(T4):  $f_\alpha$ is $C^\infty$ on $\R^A_{>0}$ (T1); its partials $\partial_i f_\alpha = \alpha_i f_\alpha/I_i > 0$ are strictly positive (T2); $\log f_\alpha = \sum_i \alpha_i \log I_i$ is concave in $\log I$ and hence $f_\alpha$ is log-concave, giving (T3); and $f_\alpha(\lambda I) =  \prod_i (\lambda I_i)^{\alpha_i} = \lambda^{\sum_i \alpha_i} f_\alpha(I) = \lambda f_\alpha(I)$ confirms $1$-homogeneity (T4).
Hence, Uniswap v2's constant product ($\alpha = (1/n,  \ldots, 1/n)$)  and Balancer's weighted products ($\alpha$ general) are members of $\F_\Prod$ and conform to our axiomatic setup.

We equip $\F_\Prod$ with the Aitchison metric \citep{aitchison-1986}, the natural Riemannian metric on the simplex from the standpoint of compositional data analysis and information geometry.
Let $H := \{ x \in \R^n \mid \sum_i  x_i = 0 \}$ be the zero-sum hyperplane.
The \emph{centered log-ratio} (clr) map is
\[
\clr \colon \simplex \to H, \quad \clr(\alpha)_i := \log \alpha_i - \frac{1}{n} \sum_{j=1}^n \log \alpha_j,
\]
which is a homeomorphism with inverse equal to the softmax map  $\clr^{-1} \colon H \to \simplex$, given by $\clr^{-1}(x)_i := e^{x_i}/\sum_{k=1}^n e^{x_k}$  (the standard softmax on $\R^n$, restricted to $H$).

\begin{definition}[Aitchison Metric]\label{def:aitchison-metric}
The \emph{Aitchison metric} on \mbox{$\F_\Prod \cong \simplex$} is the pullback of the Euclidean metric on $H$,
\[
\Aitch(\alpha, \beta) := \| \clr(\alpha) - \clr(\beta) \|_2,
\]
where $\|\cdot\|_2$ denotes the standard Euclidean norm on $\R^n$.
Since $\clr(\alpha) - \clr(\beta) \in H$ (both vectors are zero-sum, so their difference is as well), this equals the norm induced on the subspace $H \subset \R^n$ by the ambient Euclidean structure.
The two agree and we use $\|\cdot\|_2$ throughout.
\end{definition}

\begin{remark}[Codomain of the LP Weighting]\label{rem:weighting-codomain}
Weighting functions over LPs (introduced in \Cref{thm:characterization-F0}) are written  with codomain $\Delta^{m-1}$, the \emph{closed} $(m-1)$-simplex of weight vectors over the $m$ LPs.
This is distinct from the asset simplex $\simplex$ and lives in a different ambient space.
We do not assume strict positivity of LP weights:  
 a continuous equivariant weighting may vanish on part of its domain without making the mechanism constant, so the relative  interior is not the right codomain.
What equivariance does force, and all we shall use (in the impossibility of \Cref{thm:impossibility}), is that at a symmetric valuation profile $V = (v,  \ldots, v)$ the weight is the uniform vector $(1/m,  \ldots, 1/m)$.
\end{remark}

\paragraph{The constant-elasticity family.}
The second design subspace is the symmetric constant-elasticity family.

\begin{definition}\label{def:Fgamma}
The \emph{symmetric CEMM family} is
\[
\F_\Sum := \{ f_\gamma  \mid \gamma \in (-\infty, 1] \}, \quad f_\gamma(I) :=
\begin{cases} 
 \left( \tfrac{1}{n}\sum_{i=1}^n I_i^\gamma \right)^{1/\gamma} & \gamma \neq 0, \\[2pt]
  \left( \prod_{i=1}^n I_i \right)^{1/n} & \gamma = 0.
\end{cases}
\]
\end{definition}

For each $\gamma \in (-\infty, 1]$,  the formula in \Cref{def:Fgamma} defines a concrete function $f_\gamma \colon \R^A_{>0} \to \R$,
  the element of $\F$ associated to $\gamma$ is its equivalence class $[f_\gamma]$.
Any other representative of $[f_\gamma]$ differs from $f_\gamma$ by a $C^2$ diffeomorphism on the range.
In particular, $g_\gamma(I) := (\sum_{i=1}^n I_i^\gamma)^{1/\gamma}$ satisfies $g_\gamma = n^{1/\gamma} f_\gamma$, so $g_\gamma$ represents the same class $[f_\gamma]$.
The choice between $f_\gamma$ and $g_\gamma$ as the concrete representative matters for one purpose only: we want the map $\gamma \mapsto f_\gamma$ from $(-\infty, 1]$ into the space of $C^2$ functions on $\R^A_{>0}$ to be continuous, including at $\gamma = 0$.
With the prefactor $\tfrac{1}{n}$,  this indeed holds: 
 $f_\gamma(I) \to (\prod_i I_i)^{1/n}$ as  $\gamma \to 0$, uniformly on compact subsets of $\R^A_{>0}$
  (this can be seen  by applying L'H\^opital to $\log f_\gamma(I) = \tfrac{1}{\gamma} \log\bigl(\tfrac{1}{n}\sum_i I_i^\gamma\bigr)$; numerator and denominator both vanish at $\gamma = 0$) \citep[Ch.~2]{hardy-littlewood-polya-1934}.
Without it, $\log g_\gamma(I) = \tfrac{\log n}{\gamma} + \tfrac{1}{n}\sum_i \log I_i + o(1)$ shows $g_\gamma$ has no limit at $\gamma = 0$.
The bound $\gamma \leq 1$ is forced by (T3), as $f_\gamma$ is strictly convex rather than quasiconcave for $\gamma > 1$ (economically, increasing returns to liquidity, inadmissible).
The map $\gamma \mapsto [f_\gamma]$ is a homeomorphism from $(-\infty, 1]$ onto $\F_\Sum$, and following \citet{schlegel-kwasnicki-mamageishvili-2023} we call $1/(1-\gamma)$ the \emph{elasticity}.

\begin{definition}[Metric  on $\F_\Sum$]\label{def:dSum-metric}
The metric $\dSum$ on $\F_\Sum$, identified  with $(-\infty, 1]$ via $\gamma \mapsto [f_\gamma]$, is
\[
\dSum(\gamma_1, \gamma_2) := |\arctan \gamma_1 - \arctan \gamma_2|.
\]
\end{definition}

The arctan compactifies the half-line $(-\infty, 1]$ homeomorphically onto a bounded interval, yielding a complete metric.
Any equivalent metric giving the same topology yields the same characterization in \Cref{thm:characterization-Fgamma}, since Moulin's theorem depends only on the ordering of peaks, not on the specific metric.
The strategy-proofness statement (SP) is, by contrast, metric-specific, stated relative to $\dSum$.

\paragraph{The union.}
The two families intersect at $\gamma = 0$, where $f_0(I) = (\prod_i I_i)^{1/n}$ is the geometric mean of the reserves and coincides with $f_\alpha$ for the uniform weight vector $\alpha = (1/n,  \ldots, 1/n) \in \simplex$.
We denote this common element by  $c$ and call it the \emph{symmetric CPMM}.

\begin{lemma}\label{lem:intersection}
The two  families meet in a single point, $\F_\Prod \cap \F_\Sum = \{c\}$.
\end{lemma}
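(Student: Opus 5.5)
The plan is to compare the two families by their marginal exchange rates, which are invariants of the equivalence classes by \Cref{lem:well-defined-quotient}(a). Two classes that share a common element must in particular have the same rate function $p_{ij}(\cdot, I)$ for every pair $i,j$ and every inventory $I$. So I will compute these rates on each family and see when they can agree.

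The computation is direct. For $f_\alpha \in \F_\Prod$ the partials are $\partial_i f_\alpha = \alpha_i f_\alpha / I_i$, so
\[
p_{ij}(f_\alpha, I) = \frac{\alpha_i I_j}{\alpha_j I_i}.
\]
For $f_\gamma \in \F_\Sum$ with $\gamma \neq 0$, differentiating $\bigl(\tfrac1n\sum_k I_k^\gamma\bigr)^{1/\gamma}$ gives partials proportional to $I_i^{\gamma-1}$, with a common positive factor. Hence
\[
p_{ij}(f_\gamma, I) = \left(\frac{I_i}{I_j}\right)^{\gamma-1},
\]
and the $\gamma = 0$ case gives $I_j/I_i$.

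Now suppose $[f_\alpha] = [f_\gamma]$. Fix a pair $i \neq j$, which is possible since $n \geq 2$. Equating the two rates gives
\[
\frac{\alpha_i}{\alpha_j} = \left(\frac{I_i}{I_j}\right)^{\gamma}
\]
for every $I \in \R^A_{>0}$. At $I = \mathbf{1}$ this forces $\alpha_i = \alpha_j$. Then choosing $I$ with $I_i/I_j = e$ forces $e^\gamma = 1$, so $\gamma = 0$. Since the pair was arbitrary, $\alpha$ is the uniform vector, and the common class is $c$. Conversely, $f_0 = f_{(1/n,\ldots,1/n)}$ holds identically, as the paper already observes, so $c$ lies in the intersection and $\F_\Prod \cap \F_\Sum = \{c\}$.

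The one point that needs care is that the argument compares classes, not representatives. This is exactly what \Cref{lem:well-defined-quotient}(a) supplies: if $g = \Phi \circ f$ then $g$ and $f$ have the same marginal rates, so equality of classes implies equality of rates at every $I$. With that in hand, everything else is a routine calculus check, including the $\gamma = 1$ endpoint, which the general formula already covers.
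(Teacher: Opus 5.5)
Your proof is correct and takes the same route as the paper: both compare the marginal exchange rates, which are class invariants by \Cref{lem:well-defined-quotient}(a), and conclude $\gamma = 0$ and $\alpha$ uniform. Evaluating at $I = \mathbf{1}$ and at $I_i/I_j = e$ simply makes explicit the paper's step of matching the exponents in $I_j/I_i$ and then using pair-independence.
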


\begin{proof}
At $\gamma = 0$, $f_0(I) = (\prod_i I_i)^{1/n}$ is $f_\alpha$ with $\alpha = (1/n, \ldots, 1/n)$, so $c \in \F_\Prod \cap \F_\Sum$.
Conversely, let $f_\alpha \sim f_\gamma$ for some $\alpha \in \simplex$ and $\gamma \in (-\infty, 1]$.
Equivalence forces equal marginal prices.
On the $f_\alpha$-side, $p_{ij}(f_\alpha, I) = (\alpha_i/\alpha_j)(I_j/I_i)$ has exponent $1$ in $I_j/I_i$.
On the $f_\gamma$-side, \mbox{$p_{ij}(f_\gamma, I) = (I_j/I_i)^{1-\gamma}$} has exponent $1-\gamma$ and is independent of the pair $(i,j)$.
Matching exponents gives $\gamma = 0$; pair-independence forces  $\alpha_i/\alpha_j$ constant for all pairs, so \mbox{$\alpha = ( 1/n, \ldots, 1/n)$}.
Both equal $c$.
\end{proof}

The third design subspace is the union
\[
 \U  :=  \F_\Prod \cup \F_\Sum,
\]
endowed with the pushout topology along the inclusions $\{c\} \hookrightarrow \F_\Prod$ and $\{c\} \hookrightarrow \F_\Sum$ (the finest topology making both inclusion maps $\F_\Prod \hookrightarrow \U$ and $\F_\Sum \hookrightarrow \U$ continuous, equivalently, the quotient topology from $\F_\Prod \sqcup \F_\Sum$ under the identification of the two copies of $c$).
The space $\U$ is path-connected and contractible,  but is not a manifold near $c$.
A neighborhood of $c$ in $\U$ consists of an $(n-1)$-dimensional disk (the $\F_\Prod$-side) wedged to a $1$-dimensional segment (the $\F_\Sum$-side), and this local asymmetry plays a substantive role in the union analysis (\Cref{rem:union}).

\begin{definition}[Wedge Metric on $\U$]\label{def:wedge-metric}
The \emph{wedge metric} $d_\U$ on $\U$ restricts to $\Aitch$ on $\F_\Prod$ and to $\dSum$ on $\F_\Sum$.
For $f \in \F_\Prod$ and $g \in \F_\Sum$, the distance is the length of the shortest path through the common point $c$,
\[
d_\U(f, g)  := \Aitch(f, c) + \dSum(c, g).
\]
\end{definition}

\subsection{Marginal Exchange Rates and Structural Ratios}\label{sec:exchange-rates}

For $f \in \F$ and $I \in \R^A_{>0}$, the  \emph{marginal exchange rate} of asset $i$ for asset $j$ at $I$ is
\[
p_{ij}(f, I) := \frac{\partial_i f(I)}{\partial_j f(I)}.
\]
Using (T2) it follows that $p_{ij} > 0$ everywhere on $\R^A_{>0}$.
The cocycle identity $p_{ij} \cdot p_{jk} = p_{ik}$  in fact holds for \emph{every} $f \in \F$ (it follows directly from $p_{ij} = \partial_i f/\partial_j f$ by cancellation), not just on $\F_\Prod$.
We call it the \emph{price cocycle}.
In the finance literature this identity is the \emph{no-arbitrage condition} on triangular exchange rates, ensuring that one cannot profit by cycling through three assets at the marginal prices the pool quotes.

By (T4), $p_{ij}$ is invariant under  simultaneous rescaling $I \mapsto \lambda I$ for $\lambda > 0$, so $p_{ij}$ descends to a function on the projective inventory space $\mathbb{P}^A_{>0} := \R^A_{>0}/\R_{>0}$.
We write $[I]$ for the class of $I$.

For $f_\alpha \in \F_\Prod$, we compute
\[
p_{ij}(f_\alpha, I) = \frac{\alpha_i I_j}{\alpha_j I_i} = \rho_{ij}(\alpha) \cdot \frac{I_j}{I_i}, \quad \rho_{ij}(\alpha) := \frac{\alpha_i}{\alpha_j}.
\]
We call the value $\rho_{ij}(\alpha)$ the \emph{structural ratio} of the pair $(i,j)$ at $\alpha$.
The factorization \mbox{$p_{ij}(f_\alpha, I) = \rho_{ij}(\alpha) \cdot (I_j/I_i)$} into an $\alpha$-dependent factor and an $I$-dependent factor is special to $\F_\Prod$, and we shall exploit it in \Cref{sec:F0-characterization}.
The structural ratios satisfy their own multiplicative  cocycle $\rho_{ij}(\alpha) \cdot \rho_{jk}(\alpha) = \rho_{ik}(\alpha)$, distinct from the price cocycle of $p_{ij}$.

For $f_\gamma \in \F_\Sum$, we compute
\[
p_{ij}(f_\gamma, I) = \left( \frac{I_j}{I_i} \right)^{1 - \gamma},
\]
a power law in the inventory ratio with exponent independent of $i, j$,   and with no $\F_\Prod$-style factorization into a $\gamma$-part times an $I$-part.

Two properties of the Aitchison embedding $\clr$ are central to the analysis on $\F_\Prod$.
First,  $\clr(\alpha)_i - \clr(\alpha)_j = (\log \alpha_i - \bar g) -  (\log \alpha_j - \bar g) = \log(\alpha_i/\alpha_j) = \log \rho_{ij}(\alpha)$ 
(where \mbox{$\bar g = \tfrac{1}{n}\sum_k \log \alpha_k$}) 
implies that  $\log \rho_{ij}(\alpha) = \clr(\alpha)_i - \clr(\alpha)_j$.
This shows that the logarithmic structural ratio is a linear functional on $H$.
Second, the Aitchison hull of a finite set $\{\alpha_\ell\} \subset \simplex$ corresponds, under $\clr$, to the Euclidean convex hull of \mbox{$\{\clr(\alpha_\ell)\} \subset H$}, since $\clr$ is, by the definition of $\Aitch$ as the pullback, an isometry between $(\simplex, \Aitch)$ and the Euclidean hyperplane $H$ \citep{egozcue-pawlowsky-glahn-mateu-figueras-barcelo-vidal-2003,aitchison-1986,pawlowsky-glahn-egozcue-tolosana-2015}.
Aitchison-geodesic convexity is thus Euclidean convexity after $\clr$.
The identification of this hull with the Pareto-undominated  set is established in \Cref{lem:pareto-hull}.

\subsection{Liquidity Providers and Their Preferences}\label{sec:lps}

We consider a fixed set $\Lc := \{1, 2, \ldots, m\}$ of liquidity providers, and each LP $\ell$ is characterized by two pieces of data, a deposit and a design preference.
The deposit records what the LP contributes to the pool, and the design preference records which trading function the LP would like the pool to use.
We take these in turn, first defining the deposit and the valuation it induces, then the design preference.

\begin{definition}[Deposit]\label{def:deposit}
The \emph{deposit} of LP $\ell$ is a vector $D_\ell \in \R^A_{>0}$, where the $i$-th coordinate is the number of units of asset $i$ that LP $\ell$ contributes to the pool.
The \emph{joint deposit profile} is  $D := (D_1, \ldots, D_m) \in (\R^A_{>0})^m$.
\end{definition}

To weight LP contributions independently of any mechanism output, we fix a single reference function $\fref$ in advance and use it to assign each deposit a numerical valuation.

\begin{definition}[Reference Valuation Function and Valuations]\label{def:valuation}
The \emph{reference valuation function} is the symmetric constant-product market maker
\[
\fref(I) := \left( \prod_{i=1}^n I_i \right)^{1/n},
\]
the geometric mean of the inventory coordinates.
The \emph{valuation}  of LP  $\ell$ is $V_\ell := \fref(D_\ell)$, and the \emph{valuation profile} is $V := (V_1,  \ldots, V_m) \in \R^m_{>0}$, with total $V_\mathrm{tot}  := \sum_\ell V_\ell$.
\end{definition}

The reference $\fref$ coincides numerically with $f_0$, the $\gamma = 0$ element of $\F_\Sum$ (\Cref{def:Fgamma}), but plays a different role here, as an external valuation device rather than a member of a trading-function family.
We fix one valuation function rather than the equivalence class, so that the formulas of \Cref{thm:characterization-F0} have a concrete reference point.
The geometric mean is canonical, being the unique element of $\F_\Prod \cap \F_\Sum$ (\Cref{lem:intersection}) and the unique permutation-invariant element of $\F_\Prod$.
By \Cref{rem:valuation-robustness} the characterization does not depend on this choice, so any continuous, $1$-homogeneous, asset-symmetric function would serve.

The second LP input is the \emph{design preference}  $\succeq_\ell$, a continuous, complete, transitive preorder on the relevant design subspace.
We restrict attention to preferences that are single-peaked in the metric of that subspace.

\begin{definition}[Preference Profile and Peak Profile]  \label{def:profile}
Let $(\F', d)$ be a design subspace.
A~\emph{preference profile} on $(\F', d)$ is a tuple $(\succeq_1, \ldots, \succeq_m)$, where each $\succeq_\ell$ is single-peaked with peak $\theta_\ell^* \in \F'$, that is, $f \succeq_\ell g \iff d(f, \theta_\ell^*) \leq d(g, \theta_\ell^*)$.
Since $\succeq_\ell$ is uniquely determined by $\theta_\ell^*$ given $d$, we identify preference profiles with \emph{peak profiles} throughout and write $\theta^* =  ( \theta_1^*,  \ldots, \theta_m^*)$ for either.
\end{definition}

On the two design subspaces of special interest, we write the peak as the parameter that indexes it: $\alpha_\ell^* \in \simplex$ on $\F_\Prod$ (the weight vector with $\theta_\ell^* = f_{\alpha_\ell^*}$, \Cref{def:F0}) and $\gamma_\ell^* \in (-\infty, 1]$ on $\F_\Sum$ (the scalar with $\theta_\ell^* = f_{\gamma_\ell^*}$, \Cref{def:Fgamma}).
The metric-induced form of single-peakedness used in \Cref{def:profile} is stronger than the classical single-peakedness of \citet{black-1948}, where fixing the peak still leaves open the rate at which preference decreases away from it: under the metric-induced form, the metric $d$ and the peak $\theta_\ell^*$ together determine the preference relation $\succeq_\ell$ uniquely.
The unrestricted-domain axiom (A1) of \Cref{sec:axioms} is taken throughout to mean unrestricted within this single-peaked class: the mechanism's domain is the full product space of single-peaked  preference (equivalently, peak) profiles on $\F'$, with no further restriction on peak placement.

\paragraph{Aitchison-single-peakedness on \texorpdfstring{$\F_\Prod$}{F\_Prod}.}
On $\F_\Prod$ with $d = \Aitch$, single-peakedness reads
\[
\alpha \succeq_\ell \beta \iff \Aitch(\alpha, \alpha_\ell^*)  \leq  \Aitch(\beta, \alpha_\ell^*).
\]
This is the natural preference model on $\F_\Prod$, not, say, Euclidean single-peakedness on the simplex, and the reason is economic.
An LP's economic exposure is determined by the pool's exchange rates, which on $\F_\Prod$ are the structural ratios $\rho_{ij}(\alpha) = \alpha_i/\alpha_j$.
These ratios are log-linear in the Aitchison (clr) coordinates: $\log \rho_{ij}(\alpha) = \clr(\alpha)_i - \clr(\alpha)_j$.
Two weight vectors $\alpha$ and $\alpha'$ that are close in the Aitchison metric agree closely on all their structural ratios; two vectors that are close in Euclidean distance on $\simplex$ may disagree substantially on ratios involving small coordinates.
Because LP risk and return depend on exchange rates, an LP whose preferred rate profile is $\alpha_\ell^*$ prefers
  $\alpha$ over $\beta$ precisely when $\alpha$'s exchange rates are uniformly closer (in log-scale) to the preferred rates, which is exactly Aitchison-single-peakedness.
Under this assumption, indifference sets are Euclidean spheres in $H$  via the clr isometry of \Cref{sec:exchange-rates}, so all subsequent arguments reduce to standard Euclidean geometry.
Two arguments exploit this quadratic structure, the Pareto-hull identification of \Cref{lem:pareto-hull} (Hilbert projection on $H$) and the manipulation construction of \Cref{lem:sp-forces-projection} (Euclidean distance to the peak).
The characterization itself does not need it, and the Pareto step generalizes to any preferences with closed convex Pareto sets in clr-coordinates (\Cref{rem:beyond-single-peaked}).

\begin{remark}[A Log-Ratio Loss Model Yielding Aitchison Single-Peakedness]\label{rem:lp-utility-aitchison}
The Aitchison-single-peakedness assumption is not merely a convenient functional form but can be derived from a primitive economic objective.
Suppose LP $\ell$ targets the ratios $\rho_{ij}^* := \rho_{ij}(\alpha_\ell^*)$  and dislikes a pool $f_\alpha$ in proportion to the squared log-rate deviation $\mathcal{D}_\ell(\alpha) := \sum_{i<j} (\log \rho_{ij}(\alpha) - \log \rho_{ij}^*)^2$ (a rate-targeting objective that is asset-symmetric, scale-invariant, and additive across pairs).
Using $\log \rho_{ij}(\alpha) = \clr(\alpha)_i - \clr(\alpha)_j$, the loss becomes $\sum_{i<j}(\clr(\alpha)_i - \clr(\alpha)_j - x_i + x_j)^2$ with $x := \clr(\alpha_\ell^*) \in H$; setting $y := \clr(\alpha) - x \in H$ this is $\sum_{i<j}(y_i - y_j)^2$, and the identity
\[
\sum_{i<j}(y_i - y_j)^2 = n \|y\|^2  \qquad \text{for every } y \in H,
\]
where $\|y\|^2 = \sum_k y_k^2$ denotes the Euclidean norm on $\R^n$ (which restricts to the Euclidean norm on $H \subset \R^n$ used elsewhere), yields $\mathcal{D}_\ell(\alpha) = n\,\|\clr(\alpha) - \clr(\alpha_\ell^*)\|^2 = n\,\Aitch(\alpha, \alpha_\ell^*)^2$.
The identity follows from direct expansion using $y \in H$, hence $\sum_k y_k = 0$.
The algebra is recorded in \Cref{app:identity}.
Thus, ordering pools by $\mathcal{D}_\ell$ is ordering by Aitchison distance to $\alpha_\ell^*$.
The same identity makes the Aitchison-metric (SP) of \Cref{sec:axioms-sp} the  natural manipulation-resistance condition under this loss.
A primitive derivation from realized share value is left open: an LP's realized value depends on the trade path and accrued fees, not on the terminal pool weights alone, so it does not reduce to a closed-form function of $\alpha$ and the quadratic log-ratio loss is adopted here as a tractable proxy (\Cref{sec:open-questions}, especially the fee-sensitive direction (Q4)).
\end{remark}

\paragraph{Single-peakedness  on \texorpdfstring{$\F_\Sum$ and $\U$}{F\_Sum and U}.}
On $\F_\Sum$ with $d = \dSum$, single-peakedness reads
\[
f_\gamma \succeq_\ell f_{\gamma'} \iff  |\arctan \gamma - \arctan \gamma_\ell^*| \leq |\arctan \gamma' - \arctan \gamma_\ell^*|.
\]
On $\U$ with $d = d_\U$ (\Cref{def:wedge-metric}), single-peakedness is taken with respect to the wedge metric.

\subsection{The Mechanism}\label{sec:mechanism}

We now define the central object of study.

\begin{definition}\label{def:mechanism}
Fix a design subspace $(\F', d)$.
An \emph{AMM design mechanism on $\F'$} is a map
\[
M \colon (\F' \times  \R^A_{>0})^m \to \F',
\]
written $M(\theta^*, D)$, with $\theta^* =  (\theta_1^*,  \ldots, \theta_m^*)$ the peak profile (\Cref{def:profile}) and $D$ the joint deposit profile (\Cref{def:deposit}).
In words, $M$ takes as input each LP's preferred trading function and deposit vector, and returns a single trading function for the pool: it is the aggregation rule.
\end{definition}

We use four notational forms for the same mechanism, displayed with different arguments for context: $M(\theta^*, D)$ in full; $M(\theta^*, V)$ when deposit-dependence factors through the valuation profile $V$ of \Cref{def:valuation}; $M(\theta^*)$ when the mechanism does not depend on deposits at all; and $M(\alpha^*, D)$ specifically on $\F_\Prod$, where each peak is a weight vector $\alpha_\ell^* \in \simplex$ and we rename  $\theta^*$ as $\alpha^*$ accordingly.

The results in this paper instantiate the framework on the three concrete design subspaces of \Cref{sec:design-subspaces}: $(\F_\Prod, \Aitch)$, $(\F_\Sum, \dSum)$, and $(\U, d_\U)$.
Extension of the theory to other design subspaces is open (\Cref{sec:open-questions}, Q1).
The obstruction is that the proofs use the specific metric geometry of each subspace, which a general $\F'$ does not supply.

\begin{remark}[Mechanism Viewpoint Generalizes Trading-Function Viewpoint]\label{rem:mechanism-vs-function}
The literature \citep{schlegel-kwasnicki-mamageishvili-2023, frongillo-papireddygari-waggoner-2024, bichuch-feinstein-2025} axiomatizes the trading function $f$ as a self-standing object: a constant mechanism outputting the same $f$ for every profile is, in our framework, such a choice of $f$.
The mechanism viewpoint strictly generalizes it by letting $M$ depend on the LP profile, 
 accommodating genuine aggregation.
This is what makes Arrovian aggregation questions well-posed in the first place.
The static viewpoint admits no profile dependence to axiomatize.
Strategy-proofness in particular is a rule-level property, and the impossibility of \Cref{thm:impossibility} is correspondingly absent from the trading-function-level setting.
Indeed, a constant mechanism trivially satisfies strategy-proofness, since no agent can manipulate an output that ignores all reports, so the tension our impossibility theorem identifies cannot even be stated at that level of abstraction.
Fees play no role in our axioms and are excluded from \Cref{def:mechanism} (a fee-sensitive theory is open, \Cref{sec:open-questions}).
\end{remark}

\subsection{Induced Dynamics}\label{sec:dynamics}

After initialization, the pool evolves under two kinds of operation: trades and LP withdrawals.
A trade moves the pool's inventory $I \to I'$ along a level set of the trading function $f$, that is, $f(I') = f(I)$. 
 The trading function itself is unchanged.

For withdrawals, each LP is equipped with  a \emph{share}, a fungible claim on the pool's value.
The standard DeFi convention assigns LP $\ell$ a share count $s_\ell^{(0)} := f(D_\ell)$, with total supply $S^{(0)} := \sum_\ell f(D_\ell)$.
Fungibility follows from (T4): scaling reserves by $\lambda$ scales each share value by $\lambda$, so all shares represent identical fractional claims on the pool.
Let $s_\ell$ denote LP $\ell$'s current share and $S := \sum_k s_k$ the total supply.
An \emph{LP withdrawal} by LP $\ell$  at fraction $\lambda \in [0,1]$ redeems $\lambda s_\ell$ shares and pays out inventory $\lambda \tfrac{s_\ell}{S} I$, leaving post-withdrawal inventory $I' = (1 - \lambda s_\ell/S) I$, a positive scalar multiple of $I$.
Since $I'$ lies on the same projective ray as $I$, the level set of $f$ through the pool's pre- and post-withdrawal points differs only by a scalar, and all marginal exchange rates $p_{ij}(f, [I])$ are unchanged on the inventory ray.
This is the invariance that axiom (A5) of \Cref{sec:axioms-WN} elevates to a normative requirement on the mechanism's output.

Trade dynamics are interpretive only; no proof uses them.
Withdrawal dynamics enter the formal framework solely through axiom (A5) of \Cref{sec:axioms-WN}.
Both operations leave the trading function invariant.
$f$ changes only when $M$ reselects it on a new LP profile, 
 which is the only event our axioms govern.
This separation is what lets the aggregation question be posed cleanly: trade and withdrawal flows are parameters of the deployed pool, while the choice of $f$ itself is the social-choice problem our axioms address.

\subsection{Notation} \label{sec:notation}

\Cref{tab:notation} collects the main symbols.
The asset-weight simplex $\simplex = \{\alpha \in \R^n_{>0} \mid \sum_i \alpha_i = 1\}$ has dimension $n-1$ and parametrizes $\F_\Prod$.
The LP-weight simplex $\Delta^{m-1} =  \{w \in \R^m_{\geq 0} \mid \sum_\ell w_\ell = 1\}$ has dimension $m-1$; its inhabitants are weight vectors $w = (w_1, \ldots, w_m)$, one weight per LP, with which the mechanism combines the LPs' peaks $\theta_\ell^*$ into a single output (\Cref{thm:characterization-F0}; the weighting function $w(V)$ is introduced there as a map $\R^m_{>0} \to \Delta^{m-1}$ and characterized by the Arrovian core).
The two simplices live in different ambient spaces and should not be conflated.

\begin{table}[ht]
\centering
\small
\begin{tabular}{@{}l p{0.71\textwidth}@{}}
  \hline
Symbol & Meaning \\
\hline
 $n,\ m$ & numbers of assets and LPs ($n \geq 2$, mostly $n \geq 3$; $m \geq 2$) \\
 $A,\ \Lc$ & asset set $\{1, 2, \ldots, n\}$, LP set $\{1, 2, \ldots, m\}$ \\
$I \in \R^A_{>0}$ & pool inventory (reserve) vector; coordinate $I_i$ is units of asset $i$ \\
$\F' \subseteq \F$ & generic design subspace;  instantiated as $\F_\Prod$, $\F_\Sum$, or $\U$ (\Cref{sec:design-subspaces}) \\
$\succeq_\ell$ & LP $\ell$'s single-peaked preference on $\F'$, defined by a metric and peak \newline (\Cref{def:profile}) \\
$\theta_\ell^*,\ \theta^*$ & LP $\ell$'s peak in $\F'$;  peak profile $(\theta_1^*,  \ldots, \theta_m^*)$; specializes to $\alpha_\ell^*$ on $\F_\Prod$ and $\gamma_\ell^*$ on $\F_\Sum$ \\
$\mathbb{P}^A_{>0}$ & projective inventory space $\R^A_{>0}/\R_{>0}$; domain of $p_{ij}$ by (T4) \\
$\simplex$ & \emph{open} asset-weight simplex (dimension $n-1$); domain of $\alpha$ \\
$\Delta^{m-1}$ & \emph{closed} simplex of LP weight vectors; codomain of $w$ \\
$\F,\F_\Prod,\F_\Sum,\U$ & design space; weighted-CPMM, symmetric-CEMM, and union subspaces \\
$f_\alpha,\ f_\gamma,\ c$ & weighted product, symmetric power mean, symmetric CPMM \\
 $\alpha_\ell^*,\ \gamma_\ell^*$  & LP $\ell$'s peak on $\F_\Prod$ (in $\simplex$) and on $\F_\Sum$ (in $(-\infty,1]$) \\
$\rho_{ij}(\alpha)=\alpha_i/\alpha_j,\ p_{ij}$ & structural ratio of $(i,j)$; marginal exchange rate \\
$\clr,\ H$ & centered log-ratio map; zero-sum hyperplane  $\cong \R^{n-1}$ \\
$\Aitch,\ \dSum,\ d_\U$ & Aitchison metric on $\simplex$; metric on $\F_\Sum$; wedge metric on $\U$ \\
$D_\ell,\ V_\ell = \fref(D_\ell),\ V,\ V_\mathrm{tot}$ & LP $\ell$'s deposit; LP $\ell$'s valuation;  \newline valuation profile $V = (V_1,  \ldots, V_m) \in \R^m_{>0}$; total $V_\mathrm{tot} = \sum_\ell V_\ell$ \\
$\fref$ & reference function $(\prod_i I_i)^{1/n}$; external valuation device (\Cref{def:valuation}) \\
$s_\ell^{(0)},\ s_\ell,\ S$ & LP $\ell$'s initial share; current share; current total supply $S = \sum_k s_k$  \newline (\Cref{sec:dynamics}) \\
$P(\theta^*)$ & Pareto-undominated set of a profile;  equals the Aitchison hull (\Cref{lem:pareto-hull}) \\
$M(\theta^*,D),\  w_\ell(V)$ & mechanism  (aggregation rule); LP weighting (continuous, equivariant) \\
$\bar\alpha = M(\alpha^*,D)$ & mechanism output on $\F_\Prod$ (a weight vector in $\simplex$) \\
$\hat\theta_\ell,\ \hat\alpha_\ell$ & a misreported profile entry/peak by LP $\ell$   (deposits held fixed; in proofs of manipulability the manipulating LP is denoted $\ell_0$; see (SP), \Cref{sec:axioms}) \\
\hline
\end{tabular}
\caption{Principal notation. Auxiliary symbols are introduced locally where needed.}
\label{tab:notation}
\end{table}

\section{Axioms}\label{sec:axioms}

The first group, (A0)--(A4) together  with continuity (C), constitutes the \emph{Arrovian core}, on which the two characterizations of \Cref{sec:F0-characterization,sec:Fgamma-characterization} and the union decomposition (\Cref{rem:union}) rest.
The strategy-proofness axiom (SP) is added in \Cref{sec:impossibility}.
An additional axiom, withdrawal neutrality (A5), is introduced at the end of this section as an optional requirement. 
Adding it to the Arrovian core selects a canonical member of the characterized  family (\Cref{thm:characterization-F0}, part (b)).

\paragraph{Framework constants.}
Throughout this section we fix:
\begin{itemize}
\item a design subspace $(\F', d)$ (\Cref{def:design-subspace}), where $\F' \subseteq \F$  is a subset of trading-function equivalence classes and $d$ is  a metric on $\F'$;
\item the reference valuation function $\fref$ (\Cref{def:valuation}).
\end{itemize}
These are not quantified over. 
They are the ambient data against which all axioms are stated.

\paragraph{Quantification.}
Each axiom listed below expresses
a normative property of a mechanism
 $ M \colon (\F' \times \R^A_{>0})^m \to \F' $
 (\Cref{def:mechanism}), quantified universally over  its inputs: the LP profile $(\theta^*, D) \in (\F')^m \times (\R^A_{>0})^m$, with peak profile $\theta^* = (\theta_1^*,  \ldots, \theta_m^*)$ (\Cref{def:profile}; each $\theta_\ell^*$ determines LP $\ell$'s single-peaked preference $\succeq_\ell$ together with $d$) and joint deposit profile $D = (D_1,  \ldots, D_m)$
  (\Cref{def:deposit}). 
  The induced valuation profile is $V = (\fref(D_1),  \ldots, \fref(D_m)) \in \R^m_{>0}$. 
Some axioms introduce additional bound variables.
These are quantified universally where they appear, unless stated otherwise.

\subsection{The Arrovian Core}\label{sec:arrovian-core}

We state the six axioms in turn, beginning with two domain conditions.

\begin{description}
\item[(A0) Nontriviality.]
The mechanism $M$ is not constant on its domain.

\item[(A1) Unrestricted Domain.]
The mechanism $M$ is defined on the entire space  $(\F' \times \R^A_{>0})^m$ of LP profiles.
\end{description}

The \emph{Pareto-undominated set} of a profile $\theta^* = (\theta_1^*, \ldots, \theta_m^*)$ on $\F'$ (which, by \Cref{def:profile}, also encodes the preference profile $(\succeq_1,  \ldots, \succeq_m)$) is
\[
P(\theta^*) := \bigl\{ f \in \F' \mid \text{no } f' \in \F' \text{ satisfies } f' \succ_\ell f \text{ for every } \ell = 1, 2, \ldots, m \bigr\}.
\]
For Aitchison-single-peaked preferences on $\F_\Prod$,   we establish in \Cref{lem:pareto-hull} that $P(\theta^*)$ equals the Aitchison hull of $\{\alpha_\ell^*\}_{\ell=1}^m$.

\begin{description}
\item[(A2) Pareto Efficiency.]
$M(\theta^*, D) \in P(\theta^*)$.
\end{description}

Condition (A2) is the standard \emph{weak}-Pareto requirement: 
 no alternative is strictly preferred by every LP.
We work throughout with the weak version; 
 the strong-Pareto question (no alternative weakly preferred by all and strictly by some) is not used in any result of the paper.

The next axiom is the mechanism-level analog of Arrow's classical independence of irrelevant alternatives, 
 and is the most consequential of the Arrovian axioms.
It is, however, of a different kind from Arrow's welfare-level IIA: a restriction on cross-LP information flow rather than on profile-level alternatives.
 Per \Cref{rem:A3-vs-skm}, it is strictly stronger than the trading-function-level independence of \citet{schlegel-kwasnicki-mamageishvili-2023} on nonconstant mechanisms.
 It does not forbid an LP from holding $(i,j)$-preferences correlated with $(i,k)$-preferences.
   Such an LP simply has a peak $\alpha_\ell^*$ whose ratios reflect the correlation, which (A3) leaves intact.
  What it forbids is the \emph{aggregator} consulting an LP's views about one asset pair when setting the pool's exchange rate on another (its WBTC views when computing the ETH/USDC structural ratio).
  The content is a cross-LP restriction (pairs are not mixed across LPs),  not a within-LP one (LP preferences need not be cross-pair separable).

The normative case for (A3) is concrete in the AMM setting.
A pool's $(i,j)$-pricing is, on-chain, a quantity an LP can read directly from the trading function and verify against their own $(i,j)$-stake; if that price secretly depends on every other LP's views about every other asset pair, no LP can audit it against any local input, and the aggregator becomes a black box even though every parameter is public.
Mechanism IIA is the modularity condition that lets a pool's $(i,j)$-price be justified by the $(i,j)$-relevant inputs alone, which is what makes it auditable in the sense the deployment context demands.
A sophisticated LP might object that they regard cross-pair coherence as a goal rather than an auditing burden, preferring a rule that consults all pairwise opinions jointly to produce a globally consistent weighting.
(A3) is precisely the axiom that forecloses such a rule, and its normative weight rests on the primacy of per-pair auditability in transparent on-chain protocols.
The trade-off it encodes is treated as a substantive design choice in \Cref{sec:practical-implications} (Route~1: relaxing (A3)).
To state (A3) formally, we need the notion of a \emph{pair-restricted preference}.
For a fixed pair $i, j \in A$ of distinct assets, each $f \in \F'$  determines an $(i,j)$-pricing function $I  \mapsto p_{ij}(f, I)$ on the projective inventory space $\mathbb{P}^A_{>0}$.
LP $\ell$'s \emph{$(i,j)$-restricted preference} $\succeq_\ell^{(ij)}$ ranks trading functions in $\F'$ by the closeness of their $(i,j)$-pricing functions to LP $\ell$'s preferred $(i,j)$-pricing function, where closeness is measured in a metric appropriate to $\F'$ (on $\F_\Prod$,  the metric we introduce next; on $\F_\Sum$, immediately below; in general, any metric compatible with the underlying single-peaked preferences on $\F'$).
For two weighted-product trading functions $f_\alpha,  f_\beta \in \F_\Prod$ with weight vectors  $\alpha, \beta \in \simplex$ (\Cref{def:F0}), the \emph{$(i,j)$-restricted distance} is the logarithmic metric
\[
d_{ij}(f_\alpha, f_\beta) := |\log \rho_{ij}(\alpha) - \log \rho_{ij}(\beta)|,
\]
the Aitchison distance $\Aitch(f_\alpha, f_\beta)$ projected to the pair $(i,j)$
 (i.e., restricted to the one-dimensional subspace of $H$  corresponding to the log-ratio $\log(\alpha_i/\alpha_j)$), so that $f \succeq_\ell^{(ij)} g$ on $\F_\Prod$ if and only if $d_{ij}(f, \theta_\ell^*) \leq d_{ij}(g, \theta_\ell^*)$.
On $\F_\Prod$, where $p_{ij}(f_\alpha, I) = \rho_{ij}(\alpha) \cdot (I_j/I_i)$, the $(i,j)$-restricted preference is determined by the single number $\rho_{ij}(\alpha_\ell^*)$, since $d_{ij}$ measures  the distance in log-scale of the structural ratio.
On $\F_\Sum$, where $p_{ij}(f_\gamma, I) =  (I_j/I_i)^{1-\gamma}$, the $(i,j)$-restricted distance is
\[
d_{ij}(f_\gamma, f_{\gamma'}) := |\arctan \gamma - \arctan \gamma'| = \dSum(\gamma, \gamma'),
\]
the same for every pair $(i,j)$ since the exponent $1 - \gamma$ is pair-independent.
The $(i,j)$-restricted preference is correspondingly determined by the single number $\gamma_\ell^*$.

\begin{description}
\item[(A3) Mechanism IIA.]
For each ordered pair of distinct assets $i, j \in A$,  the $(i,j)$-pricing function $I \mapsto p_{ij}(M(\theta^*, D), I)$ of the mechanism's output is determined, as a function of $I$, by the LPs' $(i,j)$-restricted preferences $(\succeq_\ell^{(ij)})_{\ell = 1}^{m}$ and the valuation profile $V$ alone.
\end{description}

On $\F_\Prod$, the $I$-dependence of the  pricing is  already fixed, $p_{ij}(f_\alpha, I) = \rho_{ij}(\alpha) \cdot (I_j/I_i)$, with the structural ratio $\rho_{ij}(\alpha) = \alpha_i/\alpha_j$ independent of $I$, so (A3) constrains only the scalar coefficient:
 it is equivalent to requiring that $\rho_{ij}(M(\theta^*, D))$ depend only on $(\succeq_\ell^{(ij)})_\ell$ and $V$.
On $\F_\Sum$, the pricing is $(I_j/I_i)^{1-\gamma}$, so (A3) requires only that the exponent (equivalently, $\gamma$) depend on the $(i,j)$-restricted data.
Since any pair-restriction already determines the full preference there, (A3) is vacuous (\Cref{lem:A3-vacuous-Fgamma}).

On $\F_\Prod$ the scalar coefficient $\rho_{ij}(M(\alpha^*, D))$ need not, despite the notation, equal any single peak's ratio $\alpha_i^*/\alpha_j^*$; (A3) requires only that it be a function of $(\rho_{ij}(\alpha_\ell^*))_\ell$ and $V$.
A concrete violation is exhibited in \Cref{app:independence}: dispersion-modulated weights that use each peak's full clr-norm rather than its pair-restrictions.

\begin{remark}[Mechanism IIA Versus Trading-Function Independence] \label{rem:A3-vs-skm}
(A3) differs from the trading-function-level independence of  \citet{schlegel-kwasnicki-mamageishvili-2023}, which asks that the marginal rate between $i$ and $j$ be independent of any third asset's inventory.
That is a property of $f$, whereas  (A3) is a property of the \emph{mechanism}: 
 the rule producing $p_{ij}$ uses only $(i,j)$-restricted preferences.
The two coincide for constant mechanisms, where (A3) reduces to SKM's property of $f$.
On nonconstant mechanisms, (A3) imposes cross-LP locality that no function-level axiom can express.
If two profiles agree on every LP's $(i,j)$-preference, the mechanism must assign them the same $(i,j)$-marginal rate, regardless of how they differ on other pairs.
\end{remark}

\begin{description}
\item[(A4) Anonymity.]
For every permutation $\sigma$ of the LP label set $\{1, 2, \ldots, m\}$, 
it holds that 
 $M(\sigma \cdot \theta^*, \sigma \cdot D) = M(\theta^*, D)$, where
\[
\sigma \cdot \theta^* := (\theta_{\sigma(1)}^*,  \ldots, \theta_{\sigma(m)}^*)
 \quad \text{and} \quad
  \sigma \cdot D := (D_{\sigma(1)},  \ldots, D_{\sigma(m)}).
\]
\end{description}

Note that (A4) permutes each LP's peak and deposit \emph{together}: it is invariance under the diagonal action of the symmetric group on the paired data $(\theta_\ell^*, D_\ell)$, not separate invariance in peaks and in deposits. 
We call this \emph{joint equivariance}, and it is the precise form in which anonymity is used throughout the paper (in \Cref{lem:pair-extraction-F0,lem:V-dependent-weights,lem:sp-forces-projection}), since the proofs never permute peaks and deposits independently but always move them as a unit.
 The derived weighting $w$ inherits it as \mbox{$w_{\sigma(\ell)}(V) = w_\ell(\sigma\cdot V)$}, which forces equal weights only at symmetric valuation profiles.

\begin{description}
\item[(C) Continuity.]
The map $M \colon (\F' \times \R^A_{>0})^m \to \F'$ is continuous when $(\F')^m \times (\R^A_{>0})^m$ carries the product topology (with the $C^2$ topology on $\F'$ and the Euclidean topology on $\R^A_{>0}$) and $\F'$ carries the $C^2$ topology.
\end{description}

On the design subspaces of present interest, the $C^2$ topology on $\F'$ reduces to a familiar Euclidean topology, via the identifications $\F_\Prod \cong \simplex \subseteq \R^n$ and $\F_\Sum \cong (-\infty, 1] \subseteq \R$ recorded in \Cref{sec:design-subspaces}.
The union $\U = \F_\Prod \cup \F_\Sum$ carries the pushout topology, equivalent to convergence on each arm separately.
Continuity is substantive rather than merely technical: discontinuous tie-breaking schemes can satisfy the other axioms but introduce instability in the mechanism's output under arbitrarily small profile perturbations (\Cref{app:independence}, separating example for (C)).

We refer to (A0)--(A4) together with (C) as the \emph{Arrovian core}.

 \subsection{Strategy-Proofness}\label{sec:axioms-sp}

The strategy-proofness axiom asks that no LP can  manipulate the mechanism's output to their advantage by misreporting their peak.

\begin{description}
\item[(SP) Strategy-proofness.]
For every LP $\ell$ and every misreport $\hat\theta_\ell \in \F'$ (the peak LP $\ell$ submits in place of its true peak $\theta_\ell^*$),
\[
d\bigl( M(\theta^*, D),\, \theta_\ell^* \bigr) \leq d\bigl( M(\hat\theta^{(\ell)}, D),\, \theta_\ell^* \bigr),
\]
where $\hat\theta^{(\ell)} := (\theta_1^*, \theta_2^*, \ldots, \theta_{\ell-1}^*, \hat\theta_\ell, \theta_{\ell+1}^*, \ldots, \theta_m^*)$ is the $m$-tuple in which LP $\ell$'s peak alone is replaced by $\hat\theta_\ell$ and every other LP reports truthfully.
\end{description}

Truth-telling is thus a weakly dominant strategy for every LP
 (i.e., truth-telling is at least as good as any misreport, not necessarily strictly better), the inequality holding for every profile and misreport regardless of what the others report.
This is the strongest strategy-proofness requirement,  and any mechanism meeting it also satisfies weaker ex-post variants.
On $\F_\Prod$ with the Aitchison metric,  (SP) is the Aitchison-metric analog of multidimensional strategy-proofness on Euclidean space.
\citet{border-jordan-1983} characterize strategy-proof rules on $\R^n$ via coordinatewise medians (phantom voters), in the spirit of our setting once the range is the hyperplane $H$.
Our impossibility does not, however, require their characterization:  the only strategy-proofness fact we use is the self-contained \Cref{lem:sp-forces-projection}, 
 which establishes directly that a linear aggregator on $H$ is Euclidean-metric strategy-proof only if it is a single-coordinate projection.
Nor does it follow from theirs as a corollary: their domain is $\R^n$ rather than the constrained hyperplane $H \subset \R^n$, and they characterize strategy-proof rules as coordinatewise medians, whereas the Arrovian core here has already pre-restricted the candidate rules to the linear aggregators of \Cref{lem:sp-forces-projection}, a different class derived by a different argument.
On $\F_\Sum$ with $\dSum$, it is classical strategy-proofness on single-peaked preferences on the line \citep{moulin-1980}.

 \subsection{Withdrawal Neutrality}\label{sec:axioms-WN}

We close this section with an additional axiom, optional rather than part of the Arrovian core.

\begin{description}
\item[(A5) Withdrawal Neutrality.]
For every LP $\ell$, every withdrawal fraction  $\lambda \in [0, 1)$, and every inventory ray $[I] \in \mathbb{P}^A_{>0}$,
\[
p_{ij}\bigl( M(\theta^*, D'),\, [I] \bigr) = p_{ij}\bigl( M(\theta^*, D),\, [I] \bigr) \quad \text{for all distinct } i, j \in A,
\]
where $D'$ is the post-withdrawal deposit profile, $D'_\ell = (1-\lambda) D_\ell$ and $D'_k = D_k$ for $k \neq \ell$.
\end{description}

That is, (A5) requires the mechanism's output exchange rates at the inventory ray to be invariant under proportional rescaling of any single LP's deposit.
We require $\lambda < 1$, so that the valuation profile stays in $\R^m_{>0}$, the domain of $w$. 
The limit $\lambda \to 1$ sends $V_\ell$ to the boundary and removes LP $\ell$ from the pool entirely, which is a change in the LP set rather than a partial withdrawal of the kind covered by (A5); note that the post-withdrawal inventory ray is unchanged, since $I' = (1-\lambda s_\ell/S)I$ lies on the same ray as $I$, so no separate post-withdrawal ray appears in the condition.
We do not include (A5) among the Arrovian core, since it is logically independent of (A0)--(A4)+(C) and not required for the main characterization. 
\Cref{thm:characterization-F0}(b) below shows that, within the Arrovian family, (A5) selects exactly the deposit-blind \emph{symmetric} centroid, the unique weighting $w_\ell \equiv 1/m$.
Stake-proportional weighting, the economically natural alternative, is in tension with (A5) and is discussed in \Cref{rem:DWN-tension}.

 \subsection{Logical Independence}\label{sec:logical-independence}

The axioms (A1)--(A4) and (C) are mutually logically independent on $\F_\Prod$: for each, a mechanism violates it alone while satisfying the rest
 (explicit constructions can be found in \Cref{app:independence}).
The status of (A0) is more subtle and we state it consolidated here.
On the full domain, (A0) is \emph{derivable} from (A1)+(A2): at any unanimous profile the Aitchison hull is a singleton (\Cref{lem:pareto-hull}, proved in \Cref{sec:F0-pareto-pin}; the appeal here is not circular, as that lemma uses only Aitchison-single-peakedness from \Cref{sec:lps} and not (A0) itself), so (A2) forces the output to coincide with the varying peak, hence the mechanism cannot be constant.
If (A1)  is weakened to allow restricted domains, however, (A0) becomes independent: the constant rule $M \equiv c$, defined only on the all-$c$ subdomain, satisfies (A2) and (the weakened) domain condition without satisfying (A0), and is the unique violator in this sense (\Cref{app:independence}).
We list (A0) in the Arrovian core for expository convenience, so that the six axioms (A0)--(A4) and (C) form a single rhetorical unit whose members are all stated at the same logical level.
The impossibility of \Cref{thm:impossibility} could equivalently be stated with (A0) demoted to a consequence of (A1)+(A2), at the cost of a slightly less uniform axiom presentation.
 The strategy-proofness axiom (SP) is jointly inconsistent with the Arrovian core on $\F_\Prod$ (\Cref{thm:impossibility}) but consistent with it on $\F_\Sum$ (\Cref{thm:characterization-Fgamma}).
Axiom (A5) is logically independent of the Arrovian core:  it is violated by every stake-proportional or otherwise nonconstant weighting (\Cref{rem:DWN-tension}), and the symmetric centroid that satisfies it satisfies (A0)--(A4) and (C) (\Cref{thm:characterization-F0}(b)).
The independence picture is therefore that (A1)--(A4), (C), and (SP) each carry distinct logical content, (A0) is conditional on the domain hypothesis, and (A5) is a substantive but optional addition.


\section{Characterization: The Unique Fair Rule}\label{sec:F0-characterization}

This section identifies the unique family of aggregation rules on $\F_\Prod$ that are consistent with the Arrovian core.
We use ``fair'' throughout as shorthand for the bundle of six Arrovian-core axioms.
Mechanism IIA (A3) in particular is best read as a structural modularity condition (the rule producing the $(i,j)$-exchange-rate uses only $(i,j)$-restricted preference information), with its motivation discussed in \Cref{sec:arrovian-core}.
A shorter route would invoke  Genest's theorem on the pooling side and pull back through the FPW equivalence (\Cref{sec:pooling-transfer}).
We give the direct AMM-side derivation because it locates the structural content in the AMM problem itself and makes the role of $n \geq 3$ transparent.
The argument rests on a Cauchy-type functional equation on the cocycle of structural price ratios.
Continuous solutions of that equation are forced to be linear, and Pareto then pins the linear coefficients to a probability vector indexed by LPs, yielding the weighted-centroid form.

For the whole of \Cref{sec:F0-characterization} we assume $n \geq 3$ assets and $m \geq 2$ LPs, with Aitchison-single-peaked preferences parameterized by peaks $\alpha_\ell^* \in \simplex$.
The two-asset case is treated separately in \Cref{sec:two-asset}.
A consequence of the characterization to follow, worth stating in advance, is that any Arrovian mechanism on $\F_\Prod$ depends on the deposit profile $D$ only through the valuation profile $V = (\fref(D_1),  \ldots, \fref(D_m))$.
This is in part definitional, since (A3) is itself phrased in terms of the valuation profile.
The substantive content is the explicit form of the weights $w_\ell(V)$ derived in \Cref{lem:V-dependent-weights}, with robustness to the choice of reference function in \Cref{rem:valuation-robustness}.

In what follows, a function $w \colon \R^m_{>0}  \to \Delta^{m-1}$ is \emph{equivariant} if $w_{\sigma(\ell)}(V) = w_\ell(\sigma \cdot V)$ for every permutation $\sigma$ of $\{1, 2, \ldots, m\}$.
Values in $\Delta^{m-1}$ are nonnegative and sum to one.
The codomain $\Delta^{m-1}$ (the closed LP-weight simplex, dimension $m-1$) is distinct from the open asset simplex $\simplex$ of dimension $n-1$ (see \Cref{rem:weighting-codomain}).
We call the formula $M(\alpha^*, D)_i \propto \prod_{\ell=1}^m (\alpha_{\ell, i}^*)^{w_\ell(V)}$ the \emph{weighted Aitchison centroid} of the peaks $\{\alpha_\ell^*\}_{\ell=1}^m$ with weights $w(V)$, the Fr\'echet mean of the peaks in the Aitchison geometry on $\simplex$ (\Cref{rem:DWN-F0}).
Under the FPW equivalence of \Cref{sec:pooling-transfer}, this centroid is Genest's logarithmic opinion pool, so the characterization to follow doubles as an AMM-side derivation of Genest's result, reached by a different route.

\begin{theorem}[Characterization on $\F_\Prod$]\label{thm:characterization-F0}
Let $n \geq 3$  and $m \geq 2$.

\noindent \emph{(a) Characterization by the Arrovian core.}
A mechanism $M \colon (\simplex \times \R^A_{>0})^m \to \simplex$ satisfies the axioms (A0)--(A4) and (C) if and only if there exists a continuous, equivariant weighting function $w \colon \R^m_{>0} \to \Delta^{m-1}$ such that for every LP profile $(\alpha^*, D) \in (\simplex \times \R^A_{>0})^m$ and every $i = 1, 2, \ldots, n$,
\begin{equation}\label{eq:aitchison-centroid}
M(\alpha^*, D)_i = \frac{\prod_{\ell=1}^m (\alpha_{\ell, i}^*)^{w_\ell(V)}}{\sum_{k=1}^n \prod_{\ell=1}^m (\alpha_{\ell, k}^*)^{w_\ell(V)}},
\end{equation}
where $V = (\fref(D_1),  \ldots, \fref(D_m))$ is the valuation profile.
The output $M(\alpha^*, D)$ is the weighted Aitchison centroid of $\{\alpha_\ell^*\}_{\ell=1}^m$ with weights $w(V)$.

\noindent \emph{(b) Selection by withdrawal neutrality.}
Within the family characterized in part (a), exactly one mechanism satisfies (A5), namely the \emph{symmetric Aitchison centroid}, with constant equal weights $w_\ell(V) \equiv 1/m$ for all $V$.
Equivalently, an Arrovian mechanism satisfies (A5) if and only if its weighting $w$ does not depend on $V$, in which case equivariance forces $w \equiv (1/m, \ldots, 1/m)$.
\end{theorem}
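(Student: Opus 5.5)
Sufficiency is routine, so the work is necessity. Both are handled in $\clr$-coordinates, where \eqref{eq:aitchison-centroid} reads $\clr(M(\alpha^*,D)) = \sum_\ell w_\ell(V)\,\clr(\alpha_\ell^*)$.

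\emph{Sufficiency.} Fix a continuous equivariant $w$ and define $M$ by \eqref{eq:aitchison-centroid}.
\begin{itemize}
\item (C) holds because $w$, $\fref$, $\clr$ and softmax are continuous.
\item (A4) holds by equivariance of $w$.
\item (A3) holds since $\log\rho_{ij}(M) = \sum_\ell w_\ell(V)\log\rho_{ij}(\alpha_\ell^*)$ depends only on the pair ratios and $V$.
\item (A2) holds because the output is a convex combination of the $\clr(\alpha_\ell^*)$, hence in the Aitchison hull, which equals $P(\alpha^*)$ by \Cref{lem:pareto-hull}.
\item (A0) holds because unanimous profiles return the common peak.
\end{itemize}

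\emph{Necessity, step 1.} Fix $V$ and write $x_\ell = \clr(\alpha_\ell^*)\in H$. By (A3), for each ordered pair $(i,j)$ there is a function $G_{ij}(\cdot;V)\colon\R^m\to\R$ with
\[\log\rho_{ij}(M) = G_{ij}(t_1,\dots,t_m;V), \qquad t_\ell = x_{\ell,i}-x_{\ell,j}.\]
(A1) lets each $x_\ell$ range over all of $H$. By (C), $G_{ij}$ is continuous, since it is a composition of $M$ with a continuous section. The structural cocycle gives $\log\rho_{ik}=\log\rho_{ij}+\log\rho_{jk}$, hence
\[G_{ik}(s+t;V)=G_{ij}(s;V)+G_{jk}(t;V)\quad\text{for all } s,t\in\R^m.\]
Here $n\ge 3$ is essential: with three distinct assets, the vectors $s=(x_{\ell,i}-x_{\ell,j})_\ell$ and $t=(x_{\ell,j}-x_{\ell,k})_\ell$ can be prescribed independently within $H$.

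Set $s=t=0$. A unanimous profile at the barycenter forces the output to be the barycenter by (A2) and \Cref{lem:pareto-hull}, so $G_{ij}(0)=0$. This yields $G_{ik}=G_{ij}=G_{jk}=:G$, a single pair-independent function. (Relabelling also requires $G_{ji}(t)=-G_{ij}(-t)$.)

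The identity then reduces to Cauchy's equation $G(s+t)=G(s)+G(t)$ on $\R^m$. With continuity, $G(t;V)=\sum_\ell w_\ell(V)t_\ell$ is linear. The cocycle in every pair then integrates to $\clr(M)=\sum_\ell w_\ell(V)x_\ell$.

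\emph{Necessity, step 2.} Pareto constrains the coefficients. Take profiles with $x_\ell = e\,\delta_{\ell\ell_0}$ for a unit $e\in H$. Then (A2) forces $\sum_\ell w_\ell x_\ell = w_{\ell_0}e$ to lie in the segment $[0,e]$, so $w_{\ell_0}\in[0,1]$. Next take all $x_\ell = e$. The hull is $\{e\}$, so $\sum_\ell w_\ell = 1$, and therefore $w(V)\in\Delta^{m-1}$.

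\emph{Necessity, step 3.} Deposits enter only through $V$; this follows from (A3) as phrased, applied to all pairs, since the pair ratios determine $\alpha$. The map $w(V)$ is continuous: evaluate $M$ at profiles with $x_\ell$ fixed and independent, and use (C) together with continuity of $\fref$ and surjectivity of $\fref$ onto $\R_{>0}$. Equivariance of $w$ follows from (A4) by evaluating at linearly independent peak configurations; $m$ independent vectors exist when $n-1\ge m$, and otherwise $w$ is read off coordinatewise through one-peak-nonzero profiles.

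I expect the cocycle-to-Cauchy reduction in step 1 to be the main obstacle. The delicate points are checking that the three difference vectors can really be chosen independently for $n\ge3$, and establishing normalization and pair-independence of $G$.

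\emph{Part (b).} Withdrawal scales $D_\ell$ by $(1-\lambda)$, which scales $V_\ell$ by $(1-\lambda)$. (A5) therefore demands $w(V)\cdot t$ invariant, for all $t$, under rescaling any single coordinate of $V$ downward. Iterating this and using continuity, any two $V,V'$ are linked by downscalings to the common point $\min(V,V')$, so $w$ is constant. Equivariance at a symmetric $V$ then forces $w\equiv(1/m,\dots,1/m)$. Conversely, a constant $w$ trivially satisfies (A5).
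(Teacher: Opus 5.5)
Your proof is correct and follows the paper's skeleton. It extracts pairwise rules via (A3), imposes the structural cocycle, applies Cauchy plus continuity, and then uses Pareto and anonymity to pin the weights. Part (b) uses the same scale-down-to-$\min(V,V')$ argument as the paper.

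The one genuine difference is where you spend (A2). You use it at the outset, at the unanimous barycentric profile, to get $G_{ij}(0;V)=0$. Then setting $t=0$, and separately $s=0$, in the cocycle gives $G_{ij}=G_{ik}=G_{jk}$ on every triple. (Setting $s=t=0$ simultaneously, as you wrote, only returns the triangle identity.) Pair-independence then follows by chaining through a third index, and there is no coboundary left to eliminate.

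The paper instead solves the cocycle using only (A1), (A3), (A4), (C). This costs it the case analysis A--G and the antisymmetry-based construction of the $\beta_i$ in \Cref{lem:psi-decomposition}. Only afterwards does it invoke Pareto to kill $\beta_i(V)-\beta_j(V)$.

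Your ordering is noticeably shorter. The paper's ordering records the intermediate fact that the affine form $\Phi(u;V)+\beta_i(V)-\beta_j(V)$ is forced with no preference-shape input at all. This is how it isolates the single use of single-peakedness in \Cref{app:general-pareto}. Nothing is lost on your side, though: your normalization uses Pareto only at a unanimous profile, where the Pareto set is a singleton for any preferences with a unique peak.

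Two small remarks. Your (A0) check in the sufficiency direction, via unanimous profiles returning the common peak, is cleaner than the paper's symmetric-$V$ argument. Your equivariance step needs no case split on $n-1\ge m$: one-peak-nonzero profiles read off $w_{\sigma(\ell)}(V)=w_\ell(\sigma\cdot V)$ for all $n$ and $m$.
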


The output of  (\ref{eq:aitchison-centroid}) lies in the \emph{open} simplex $\simplex$.
Indeed, for any weights $w(V) \in \Delta^{m-1}$ with $\sum_\ell w_\ell(V) = 1$ and $w_\ell(V) \geq 0$, each coordinate $\prod_\ell (\alpha_{\ell, i}^*)^{w_\ell(V)}$ is a product of strictly positive numbers raised to nonnegative exponents, hence strictly positive, and the normalization preserves strict positivity.

The proof of part (a) proceeds in four steps over \Cref{sec:F0-pair-extraction,sec:F0-additive-cocycle,sec:F0-cauchy,sec:F0-pareto-pin}:
\begin{enumerate}[label=(\arabic*),nosep]
\item \emph{Pairwise extraction} (\Cref{sec:F0-pair-extraction}): mechanism IIA reduces the mechanism to a family of one-dimensional pairwise rules $\psi_{ij}$, one for each asset pair, each mapping the tuple of LP $(i,j)$-preferences and valuations to the output's $(i,j)$-structural ratio; these pairwise rules determine the mechanism uniquely.
\item \emph{The cocycle constraint} (\Cref{sec:F0-additive-cocycle}): consistency of the output's structural ratios across pairs forces the logarithmic counterparts $\Psi_{ij}$ of the pairwise rules to satisfy an additive cocycle identity on $\R^m$.
\item \emph{Solving the Cauchy equation} (\Cref{sec:F0-cauchy}): the Cauchy functional equation (namely, the continuous additive case) yields a linear form $\Psi_{ij}(u; V) = \sum_\ell  w_\ell(V) u_\ell + (\beta_i(V) - \beta_j(V))$ with continuous, LP-indexed coefficient functions.
\item \emph{Pinning weights via Pareto} (\Cref{sec:F0-pareto-pin}): Pareto efficiency forces the coboundary term $\beta_i(V) - \beta_j(V)$ to vanish and the coefficients $w_\ell(V)$ to lie in the simplex $\Delta^{m-1}$, giving the Aitchison centroid form.
\end{enumerate}
The converse is verified in \Cref{sec:F0-converse};  part (b) is proved in \Cref{sec:F0-remarks}, \Cref{sec:F0-example} gives a worked example, and \Cref{rem:logpool-quickread} at the end of the section records the reading of the resulting rule as a logarithmic opinion pool.

\subsection{Pairwise Extraction}\label{sec:F0-pair-extraction}

Recall that on $\F_\Prod$,  for each $\alpha \in \simplex$ and each pair $i, j$,  we have 
\[
p_{ij}(f_\alpha, I) = \rho_{ij}(\alpha) \cdot \frac{I_j}{I_i}, \quad \rho_{ij}(\alpha) = \frac{\alpha_i}{\alpha_j},
\]
so $\rho_{ij}$ depends only on $\alpha$ and not on $I$.
The $(i,j)$-restricted preference of LP $\ell$ is therefore determined by the single number $\rho_{ij}(\alpha_\ell^*)$.

\begin{lemma}\label{lem:cocycle-rho}
For every $\alpha \in \simplex$ and $i, j, k \in A$,  the cocycle identity $\rho_{ij}(\alpha) \cdot \rho_{jk}(\alpha) = \rho_{ik}(\alpha)$ holds.
The map $\alpha \mapsto (\rho_{12}(\alpha), \rho_{13}(\alpha), \ldots, \rho_{1n}(\alpha))$ is a homeomorphism from $\simplex$ onto $\R^{n-1}_{>0}$.
\end{lemma}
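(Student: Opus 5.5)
The cocycle identity is a one-line cancellation. For $\alpha \in \simplex$ every coordinate is strictly positive, so $\rho_{ij}(\alpha) = \alpha_i/\alpha_j$ is well defined and positive. Then
\[
\rho_{ij}(\alpha)\,\rho_{jk}(\alpha) = \frac{\alpha_i}{\alpha_j}\cdot\frac{\alpha_j}{\alpha_k} = \frac{\alpha_i}{\alpha_k} = \rho_{ik}(\alpha).
\]
The degenerate index cases ($i=j$, $j=k$, or $i=k$) need no separate treatment, since $\rho_{ii} \equiv 1$ and the computation above still applies.

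For the homeomorphism claim, I will name the map $R(\alpha) := (\rho_{12}(\alpha), \ldots, \rho_{1n}(\alpha)) = (\alpha_1/\alpha_2, \ldots, \alpha_1/\alpha_n)$ and write down its inverse explicitly. Each component of $R$ is a quotient of continuous positive functions, so $R$ is continuous, and its image lies in $\R^{n-1}_{>0}$.

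Given $r = (r_2, \ldots, r_n) \in \R^{n-1}_{>0}$, the constraints $\alpha_1/\alpha_k = r_k$ force $\alpha_k = \alpha_1/r_k$ for every $k \geq 2$. The normalization $\sum_k \alpha_k = 1$ then pins down $\alpha_1$, and I define
\[
S(r) := \frac{1}{1 + \sum_{k=2}^n r_k^{-1}}\,\bigl(1,\ r_2^{-1},\ \ldots,\ r_n^{-1}\bigr).
\]
I will check three things about $S$:
\begin{itemize}
\item $S(r)$ has strictly positive entries summing to one, so $S$ maps into $\simplex$.
\item $R(S(r)) = r$, since the ratio of the first entry to the $k$-th entry of $S(r)$ is $r_k$.
\item $S(R(\alpha)) = \alpha$, because an element of $\simplex$ is determined by its ratios $\alpha_1/\alpha_k$ together with the normalization. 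Explicitly, $\alpha_k = \alpha_1/\rho_{1k}(\alpha)$, and summing over $k$ gives $\alpha_1 = 1/(1+\sum_{k\ge2}\rho_{1k}^{-1})$.
\end{itemize}
Finally, $S$ is continuous because it is built from reciprocals and a quotient with a strictly positive denominator. So $R$ is a continuous bijection with continuous inverse, i.e.\ a homeomorphism.

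There is no real obstacle here. The only point needing care is positivity, which is exactly what rules out the closed simplex and makes both $R$ and $S$ well defined. As an optional cross-check, in clr coordinates the map reads $\log \rho_{1k}(\alpha) = \clr(\alpha)_1 - \clr(\alpha)_k$ (see \Cref{sec:exchange-rates}). This is an injective linear map $H \to \R^{n-1}$: its kernel consists of $x \in H$ with all coordinates equal, which forces $x = 0$. Composing with the coordinatewise exponential recovers the same homeomorphism.
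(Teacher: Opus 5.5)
Your proposal is correct and follows essentially the same route as the paper: the cocycle identity by direct cancellation, and the homeomorphism via the explicit inverse $\alpha_1 = 1/\bigl(1+\sum_{k\ge 2}\rho_{1k}^{-1}\bigr)$, $\alpha_k = \rho_{1k}^{-1}\alpha_1$, with continuity in both directions immediate. You additionally verify both compositions $R\circ S$ and $S\circ R$, which the paper leaves implicit, and your optional clr cross-check is also valid because $\dim H = n-1$, so the injective linear map is an isomorphism.
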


\begin{proof}
The cocycle identity is direct from $\rho_{ij}(\alpha) = \alpha_i/\alpha_j$.
To prove the second claim, given $(\rho_{1j})_{j=2}^n \in \R^{n-1}_{>0}$, set $\alpha_1 := 1/\bigl(1 + \sum_{k \geq 2} \rho_{1k}^{-1}\bigr)$ and $\alpha_i := \rho_{1i}^{-1} \alpha_1$  for $i \geq 2$.
This gives $\alpha \in \simplex$ with the prescribed ratios; continuity in both directions is immediate.
\end{proof}

\begin{lemma}[Pairwise Extraction]\label{lem:pair-extraction-F0}
Assume $M$ satisfies (A1), (A3), (A4), and (C).
For each pair $i, j \in A$ with $i \neq j$,  there exists a continuous function
\[
  \psi_{ij} \colon \R_{>0}^m \times \R_{>0}^m \to \R_{>0},
\]
equivariant under joint permutation of its two $m$-tuple arguments ($\psi_{ij}(\sigma\cdot x,\, \sigma\cdot V) = \psi_{ij}(x, V)$ for every permutation $\sigma$ of $\{1, 2, \ldots, m\}$, where $\sigma\cdot x :=  (x_{\sigma(1)},  \ldots, x_{\sigma(m)})$, with the two tuples permuted jointly), such that for every profile,
\[
\rho_{ij}\bigl( M(\alpha^*, D) \bigr) = \psi_{ij}\Bigl( \bigl( \rho_{ij}(\alpha_\ell^*) \bigr)_{\ell=1}^m,\ \bigl( V_\ell \bigr)_{\ell=1}^m \Bigr).
\]
\end{lemma}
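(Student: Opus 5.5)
The plan is to define $\psi_{ij}$ by reading off the output's structural ratio at a canonical profile, and then to use (A3) to show that this definition does not depend on the choice of profile. First I would record that, on $\F_\Prod$, the $(i,j)$-restricted preference $\succeq_\ell^{(ij)}$ is determined by the single number $x_\ell := \rho_{ij}(\alpha_\ell^*) \in \R_{>0}$. This holds because $d_{ij}(f_\alpha,\theta_\ell^*) = |\log\rho_{ij}(\alpha) - \log x_\ell|$. The converse also holds: distinct $x_\ell$ give distinct preorders, since the minimizers of $d_{ij}(\cdot,\theta_\ell^*)$ are exactly the $\alpha$ with $\rho_{ij}(\alpha) = x_\ell$. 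Hence the tuple $(\succeq_\ell^{(ij)})_\ell$ and the tuple $x = (x_\ell)_\ell \in \R^m_{>0}$ carry the same information. By the remark following (A3), the axiom on $\F_\Prod$ reduces to the statement that $\rho_{ij}(M(\alpha^*,D))$ is a function of $(x, V)$ alone.

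Next I would construct $\psi_{ij}$ explicitly, fixing two canonical sections. Given $x \in \R^m_{>0}$, let $a(x_\ell) \in \simplex$ be the point with $\rho_{ij} = x_\ell$ and with all other coordinates equal. For instance, set $\alpha_i = x_\ell \alpha_j$ and $\alpha_k = \alpha_j$ for $k \neq i,j$, then normalize. Given $V \in \R^m_{>0}$, let $D(V_\ell) := V_\ell \mathbf{1}$, so that $\fref(D(V_\ell)) = V_\ell$ by 1-homogeneity. Both sections are continuous. Define
\[
\psi_{ij}(x,V) := \rho_{ij}\bigl(M((a(x_\ell))_\ell, (D(V_\ell))_\ell)\bigr).
\]
This is well defined and positive by (A1). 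It is continuous by (C), composed with the continuity of $\rho_{ij}$ on $\simplex$ from \Cref{lem:cocycle-rho}.

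To verify the displayed identity at an arbitrary profile $(\alpha^*, D)$, note that this profile and the canonical profile $((a(x_\ell))_\ell, (D(V_\ell))_\ell)$ share the same $(i,j)$-restricted preferences and the same valuation profile $V$. (A3) then forces equal $(i,j)$-pricing functions, and hence equal $\rho_{ij}$ of the outputs, because $p_{ij}(f_\alpha, I) = \rho_{ij}(\alpha) I_j/I_i$. Equivariance follows from (A4). Permuting $(x,V)$ jointly by $\sigma$ permutes the canonical profile's peaks and deposits jointly, so $M$ and therefore $\psi_{ij}$ are unchanged.

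I expect the main obstacle to be interpreting (A3) correctly. The map must depend on $D$ only through $V$, and continuity of $\psi_{ij}$ should not require that an arbitrary profile depend continuously on $(x,V)$. The canonical-section construction handles both points: continuity is needed only along the canonical sections, and (A3) transports the values everywhere else. The small point to check is that $\fref(V\mathbf{1}) = V$, so every $V \in \R^m_{>0}$ is attained.
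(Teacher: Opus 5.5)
Your proposal is correct and follows essentially the same route as the paper. (A3) factors $\rho_{ij}(M)$ through $\bigl((\rho_{ij}(\alpha_\ell^*))_\ell, V\bigr)$, every point of $\R^m_{>0}\times\R^m_{>0}$ is realized by independent choices of peaks and deposits, continuity comes from (C), and equivariance from (A4). Your explicit continuous sections $x_\ell \mapsto a(x_\ell)$ and $V_\ell \mapsto V_\ell\mathbf{1}$ make rigorous the continuity step that the paper asserts in one line, since continuity of $\psi_{ij}$ does not follow from continuity of $M$ alone without such a section or an openness (quotient-map) argument.
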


\begin{proof}
The mechanism maps into \mbox{$\F_\Prod \cong \simplex$}, so $M(\alpha^*, D) = f_{\alpha'}$ for some $\alpha' \in \simplex$, and the $(i,j)$-pricing $p_{ij}(f_{\alpha'}, I) = \rho_{ij}(\alpha') \cdot (I_j/I_i)$ makes $\rho_{ij}(M(\alpha^*, D)) = \rho_{ij}(\alpha')$ the coefficient of $I_j/I_i$, independent of $I$.
By \Cref{lem:cocycle-rho}, on $\F_\Prod$ the $(i,j)$-restricted preference of LP  $\ell$ is determined by the scalar $\rho_{ij}(\alpha_\ell^*)$.
Applying (A3), this coefficient is a function of $(\rho_{ij}(\alpha_\ell^*))_\ell$ and $V$ alone.
The domain $\R_{>0}^m \times \R_{>0}^m$ is the full positive orthant in both coordinates, since the $m$ LPs report peaks independently: any vector $(x_1,  \ldots, x_m) \in \R_{>0}^m$ is realized as the tuple $(\rho_{ij}(\alpha_\ell^*))_{\ell=1}^m$ by choosing each LP's peak $\alpha_\ell^* \in \simplex$ independently with structural ratio $\rho_{ij}(\alpha_\ell^*) = x_\ell$ (each such choice exists by \Cref{lem:cocycle-rho}); similarly for $V$.
Continuity follows from continuity of $\rho_{ij}$ and of $M$;  joint equivariance follows from (A4), since permuting LP labels by $\sigma$ permutes both the peak tuple and the deposit tuple jointly while leaving $M$ invariant.
\end{proof}

  \begin{proposition}[Pairwise Rules Determine the Mechanism]\label{prop:pairwise-rules-determine}
On $\F_\Prod$, an Arrovian mechanism $M$ is uniquely determined by its family of pairwise rules $\{\psi_{ij}\}_{i \neq j}$.
The map $M \mapsto (\psi_{ij})_{i \neq j}$ is injective, and its image consists of families of pairwise rules satisfying the cocycle identity of \Cref{lem:cocycle-psi}; that every family in the image takes the weighted-centroid form, and that every such family conversely arises from an Arrovian mechanism,  is established in \Cref{sec:F0-additive-cocycle,sec:F0-cauchy,sec:F0-pareto-pin} and verified by the explicit converse construction of \Cref{sec:F0-converse}.
\end{proposition}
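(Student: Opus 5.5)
The argument splits the statement into injectivity of $M \mapsto (\psi_{ij})_{i\neq j}$ and the necessary cocycle condition on the image; the sufficiency and explicit-form claims are deferred to the later subsections.

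\emph{Injectivity.} Let $M$ and $M'$ be Arrovian mechanisms on $\F_\Prod$ with the same pairwise rules, $\psi_{ij} = \psi'_{ij}$ for all $i \neq j$. Fix a profile $(\alpha^*, D)$ with valuation profile $V$. By \Cref{lem:pair-extraction-F0},
\[
\rho_{ij}\bigl(M(\alpha^*,D)\bigr) = \psi_{ij}\bigl((\rho_{ij}(\alpha^*_\ell))_\ell, V\bigr) = \rho_{ij}\bigl(M'(\alpha^*,D)\bigr)
\]
for every pair. In particular the pairs $(1,2),\ldots,(1,n)$ give the same vector $(\rho_{1j})_{j=2}^n$ for both outputs. \Cref{lem:cocycle-rho} says this vector determines a point of $\simplex$ homeomorphically, so $M(\alpha^*,D) = M'(\alpha^*,D)$. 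The profile was arbitrary, so $M = M'$.

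The step needing care is that $\psi_{ij}$ is a well-defined function of the $(i,j)$-data $x$ and of $V$. Its values are used only on realized arguments, and \Cref{lem:pair-extraction-F0} already shows that every $(x,V) \in \R^m_{>0}\times\R^m_{>0}$ is realized. Surjectivity onto $V$ holds because each $\fref(D_\ell)$ ranges over all of $\R_{>0}$; for instance, take $D_\ell = V_\ell\mathbf{1}$.

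\emph{Image.} Consistency of the output's structural ratios forces a cocycle relation among the pairwise rules. For any $i,j,k$ and any profile, the output satisfies $\rho_{ij}\rho_{jk} = \rho_{ik}$ by \Cref{lem:cocycle-rho}. Substituting the pairwise rules gives
\[
\psi_{ij}(x, V)\,\psi_{jk}(y, V) = \psi_{ik}(x \odot y, V),
\]
where $\odot$ denotes coordinatewise product. This holds for all $x, y \in \R^m_{>0}$. To see that every such pair $(x,y)$ is realized, use the case $n \geq 3$: for distinct $i,j,k$, each LP's peak can be chosen with prescribed $\rho_{ij} = x_\ell$ and $\rho_{jk} = y_\ell$ independently. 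Concretely, choose the ratios $\rho_{1\cdot}$ freely via \Cref{lem:cocycle-rho}; the product $x_\ell y_\ell$ is then automatically $\rho_{ik}(\alpha^*_\ell)$. This yields exactly the identity of \Cref{lem:cocycle-psi}.

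This realizability is the only delicate point, and it is where $n \ge 3$ enters. Without it, the cocycle would be imposed only on a correlated subset of arguments and could not be stated for arbitrary $x,y$. The remaining assertions are that members of the image take the centroid form and that every such family arises from an Arrovian mechanism. These are not proved here; they are delegated to \Cref{sec:F0-additive-cocycle,sec:F0-cauchy,sec:F0-pareto-pin} and the converse construction in \Cref{sec:F0-converse}, as the statement indicates.
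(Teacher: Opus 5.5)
Your proposal is correct and follows the paper's own argument. Injectivity comes from recovering the output in $\simplex$ from its structural ratios via \Cref{lem:cocycle-rho}, the image condition is the realizability argument of \Cref{lem:cocycle-psi}, and the centroid form and the converse are deferred to the later subsections; the only differences are cosmetic, in that you re-derive the realizability step inline rather than citing \Cref{lem:cocycle-psi}, and you check explicitly that every $V$ is attained (e.g.\ via $D_\ell = V_\ell\mathbf{1}$), which the paper leaves to \Cref{lem:pair-extraction-F0}.
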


\begin{proof}
Mechanism IIA defines $\psi_{ij}$ as the rule sending   the LPs' $(i,j)$-restricted preferences and valuation profile to $\rho_{ij}(M(\theta^*, D))$, so the map $M \mapsto (\psi_{ij})_{i \neq j}$ is well-defined.
For injectivity, from the structural ratios $\{\rho_{ij}(M(\theta^*, D))\}_{i \neq j}$ we recover a unique $\alpha^* \in \simplex$ by \Cref{lem:cocycle-rho}, and hence $M(\theta^*, D) = f_{\alpha^*}$.
That every family in the image is cocycle-respecting is the content of \Cref{lem:cocycle-psi}; the reverse inclusion, that each cocycle-respecting family of the form pinned down in \Cref{sec:F0-cauchy,sec:F0-pareto-pin} arises from an Arrovian mechanism, is the converse verified in \Cref{sec:F0-converse}.
This is not circular: the forward direction \emph{derives} the centroid form from the axioms without presupposing it, while the converse is an independent explicit construction.
\end{proof}

\subsection{The Cocycle Constraint}  \label{sec:F0-additive-cocycle}

Since the output lies in $\simplex$, its structural ratios satisfy the cocycle.
In terms of the $\psi_{ij}$, this becomes the following.
Throughout the remainder of the proof of \Cref{thm:characterization-F0} we write $\rho_{ij}(M)$ as shorthand for $\rho_{ij}(M(\alpha^*, D))$ when the profile is clear from context.

\begin{lemma}\label{lem:cocycle-psi}
For every $x, y \in \R_{>0}^m$ and every $V \in \R_{>0}^m$,
\begin{equation}\label{eq:cocycle-psi}
\psi_{ij}(x, V) \cdot \psi_{jk}(y, V) = \psi_{ik}(x \odot y, V),
\end{equation}
where $x \odot y := (x_1 y_1,  \ldots, x_m y_m)$ is the componentwise (Hadamard) product.
\end{lemma}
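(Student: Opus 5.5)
The plan is to realize every triple $(x, y, V)$ by a single profile and then read off the identity from the cocycle of \Cref{lem:cocycle-rho} applied to the output. Here $i, j, k$ are pairwise distinct, so $n \geq 3$ is used. Two consequences of \Cref{lem:pair-extraction-F0} do the work. First, $\psi_{ij}$ is defined on all of $\R_{>0}^m \times \R_{>0}^m$. Second, it depends on the profile only through the $(i,j)$-ratios of the peaks and through $V$. So it suffices to exhibit, for the given $x, y, V$, one profile $(\alpha^*, D)$ with valuation profile $V$ and the following ratios:
\[
\rho_{ij}(\alpha_\ell^*) = x_\ell, \qquad \rho_{jk}(\alpha_\ell^*) = y_\ell \quad (\ell = 1, \ldots, m).
\]

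\textbf{Step 1 (realizing the ratios).} For each LP $\ell$ separately, prescribe a positive vector and normalize it. Set $\tilde\alpha_{\ell,k} := 1$, $\tilde\alpha_{\ell,j} := y_\ell$, $\tilde\alpha_{\ell,i} := x_\ell y_\ell$, and set every remaining coordinate equal to $1$. Then let $\alpha_\ell^*$ be $\tilde\alpha_\ell$ normalized to $\simplex$. Normalization does not change ratios, so $\rho_{ij}(\alpha_\ell^*) = x_\ell$ and $\rho_{jk}(\alpha_\ell^*) = y_\ell$. Equivalently, one can invoke the homeomorphism of \Cref{lem:cocycle-rho} after reindexing. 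By the peak-level cocycle of \Cref{lem:cocycle-rho}, $\rho_{ik}(\alpha_\ell^*) = x_\ell y_\ell$, so the $(i,k)$-tuple of the profile is $x \odot y$.

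\textbf{Step 2 (realizing $V$).} Choose deposits $D_\ell := V_\ell \mathbf{1}$. Then $\fref(D_\ell) = V_\ell$. Deposits and peaks are chosen independently, and this is legitimate by (A1).

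\textbf{Step 3 (cocycle on the output).} Let $\bar\alpha := M(\alpha^*, D) \in \simplex$. By \Cref{lem:pair-extraction-F0}, applied to the pairs $(i,j)$, $(j,k)$ and $(i,k)$ at this single profile,
\[
\rho_{ij}(\bar\alpha) = \psi_{ij}(x,V), \qquad \rho_{jk}(\bar\alpha) = \psi_{jk}(y,V), \qquad \rho_{ik}(\bar\alpha) = \psi_{ik}(x\odot y, V).
\]
\Cref{lem:cocycle-rho} gives $\rho_{ij}(\bar\alpha)\rho_{jk}(\bar\alpha) = \rho_{ik}(\bar\alpha)$, which is exactly \eqref{eq:cocycle-psi}.

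The only delicate point is Step 1. The argument needs the two pair-restrictions $(i,j)$ and $(j,k)$ to be independently prescribable for each LP, and this uses distinctness of $i, j, k$. It then has to be combined with the well-definedness of $\psi_{ij}$ guaranteed by (A3). That well-definedness is what allows the value $\psi_{ij}(x,V)$ obtained at this particular profile to be used for every triple. The degenerate cases where indices coincide, such as $k = i$ with $\psi_{ii} \equiv 1$, are not needed, or follow trivially.
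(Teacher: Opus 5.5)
Your proof is correct and follows the paper's argument. In both, the prescribed $(i,j)$- and $(j,k)$-ratio tuples (hence $x\odot y$ on $(i,k)$) are realized at a single profile, pairwise extraction is applied to all three pairs, and the cocycle is invoked on the output; your choice $D_\ell = V_\ell\mathbf{1}$ only makes explicit the realization of $V$ that the paper leaves implicit.

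One small correction to your closing aside. The coincident-index cases do not follow trivially: with $\psi_{ii}\equiv 1$ the identity fails for general $x,y$, since any profile realizing $k=i$ forces $y = x^{\odot -1}$. That is why the lemma is only meant for pairwise distinct $i,j,k$.
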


\begin{proof}
By the cocycle identity, $\rho_{ij}(\bar\alpha) \cdot  \rho_{jk}(\bar\alpha) = \rho_{ik}(\bar\alpha)$ for the output $\bar\alpha := M(\alpha^*, D)$.
Given $x_\ell, y_\ell > 0$ for $\ell = 1, 2, \ldots, m$ and three distinct indices $i, j, k$ (which exist since $n \geq 3$), we construct a profile realizing $\rho_{ij}(\alpha_\ell^*) = x_\ell$ and $\rho_{jk}(\alpha_\ell^*) = y_\ell$ as follows.
For each LP $\ell$ separately, with each $\alpha_\ell^* \in \simplex$ chosen without reference to the other LPs' peaks, use $i$ as the reference asset and set $\rho_{ij}(\alpha_\ell^*) := x_\ell$, $\rho_{ik}(\alpha_\ell^*) := x_\ell y_\ell$, and $\rho_{ia}(\alpha_\ell^*) := 1$ for every remaining asset $a \in A \setminus \{i, j, k\}$ (an empty set at $n = 3$).
Realizability of these $n-1$ positive ratios is guaranteed by \Cref{lem:cocycle-rho} (symmetric under relabeling of the reference asset): the formula gives $\alpha_{\ell,i} = (1 + \sum_{a \neq i} \rho_{ia}^{-1})^{-1}$ with all components positive, and the remaining components follow by $\alpha_{\ell,a} = \rho_{ia}^{-1} \alpha_{\ell,i}$, putting $\alpha_\ell^*$ in the open simplex.
The cocycle identity on $\alpha_\ell^*$ gives  $\rho_{jk}(\alpha_\ell^*) = \rho_{ik}(\alpha_\ell^*)/\rho_{ij}(\alpha_\ell^*) = y_\ell$.
Applying \Cref{lem:pair-extraction-F0} to this profile gives $\rho_{ij}(M) = \psi_{ij}(x, V)$ and $\rho_{jk}(M) = \psi_{jk}(y, V)$; since the mechanism output lies in $\simplex$, its structural ratios satisfy the cocycle, so $\rho_{ij}(M) \cdot \rho_{jk}(M) = \rho_{ik}(M)$.
Moreover, $\rho_{ik}(\alpha_\ell^*) = x_\ell y_\ell$ for every $\ell$, so $\rho_{ik}(M) = \psi_{ik}(x \odot y, V)$, which yields (\ref{eq:cocycle-psi}).
 \end{proof}

\begin{remark}[Why $n \geq 3$ and the Two-Asset Case]\label{rem:n-three}
The lemma uses $n \geq 3$ essentially to find three distinct indices $i, j, k$.
With only two assets, the cocycle is vacuous (there is only one pair) and (A3) does not constrain the mechanism nontrivially.
The structural rigidity that drives the characterization to follow, the cocycle/Cauchy combination, is therefore unavailable for $n = 2$.
This is the precise barrier, not a technical inconvenience but the absence of the multi-pair triangle identity that does the work in higher dimensions.
The two-asset case is treated separately in \Cref{sec:two-asset}: the Arrovian-core axioms (with (A3) silent) together with (SP) characterize the Moulin generalized medians in the log-ratio coordinate (\Cref{thm:characterization-F0-n-two}), and the impossibility does not arise.
A separate research direction asks whether some strengthening of (A3) that remains nontrivial at $n = 2$, 
of the liquidity-additivity/Thomsen type \citep{schlegel-kwasnicki-mamageishvili-2023,aczel-1966},  can recover the mean-type rigidity and so an impossibility analog; this is left open and discussed in \Cref{rem:thomsen-substitute}.
\end{remark}

The multiplicative cocycle (\ref{eq:cocycle-psi}) is the central functional equation of the section.
We pass to logarithmic coordinates.
Throughout the remainder of the proof of \Cref{thm:characterization-F0}, $V \in \R^m_{>0}$ is a continuous parameter, and within the proof of \Cref{lem:psi-decomposition} below we hold $V$ fixed and suppress it, writing $\Psi_{ij}(u)$; the post-lemma paragraph then lifts the conclusion to $V$ as a continuous side parameter.
Define
\[
\Psi_{ij} \colon \R^m \to \R, \quad \Psi_{ij}(u) := \log \psi_{ij}\bigl( e^{u_1}, e^{u_2}, \ldots, e^{u_m},\, V \bigr).
\]
Each $\Psi_{ij}$ is continuous and well-defined because $\psi_{ij}$  takes positive values (its outputs are structural ratios $\rho_{ij}$, which are positive by (T2)).
We do \emph{not} assert symmetry of $\Psi_{ij}$ in $u$ at fixed $V$: by \Cref{lem:pair-extraction-F0}, $\psi_{ij}$ is jointly equivariant under simultaneous permutation of its $u$- and $V$-tuples, which forces $u$-symmetry only when $V$ is itself permutation-symmetric.
The multiplicative cocycle becomes the additive identity
\begin{equation}\label{eq:additive-cocycle}
\Psi_{ij}(u) + \Psi_{jk}(v) = \Psi_{ik}(u + v) \quad \forall u, v \in \R^m, \ \forall\, \text{pairwise distinct } i, j, k \in A.
\end{equation}

\subsection{Solving the Additive Cocycle}\label{sec:F0-cauchy}

This is the technical heart of the characterization: the pairwise aggregators $\Psi_{ij}$ share a common $u$-dependent part $\Phi$ satisfying the Cauchy equation, with index-dependent constants in coboundary form.

\begin{lemma}\label{lem:psi-decomposition}
Assume $n \geq 3$. 
At each fixed $V$, there exists a continuous function  $\Phi \colon \R^m \to \R$ satisfying the Cauchy functional equation
\begin{equation}\label{eq:Phi-Cauchy-final}
\Phi(u + v) = \Phi(u) + \Phi(v) \qquad \forall u, v \in \R^m,
\end{equation}
and real constants $(\beta_j)_{j \in A}$, such that for every distinct $i, j \in A$ and every $u \in \R^m$,
\begin{equation}\label{eq:Psi-coboundary-form}
\Psi_{ij}(u) = \Phi(u) + \beta_i - \beta_j.
\end{equation}
\end{lemma}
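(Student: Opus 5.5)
The plan is to extract everything from the additive cocycle (\ref{eq:additive-cocycle}) by plugging in $u=0$ or $v=0$ and then comparing different index triples, which is possible because $n \geq 3$. Throughout, $V$ is held fixed. Define the constants $c_{ij} := \Psi_{ij}(0)$ for distinct $i,j$.

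The first step sets $u = v = 0$ in (\ref{eq:additive-cocycle}). This gives $c_{ij} + c_{jk} = c_{ik}$ for all pairwise distinct $i,j,k$, which is a cocycle on the constants.

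The second step sets $v=0$, giving $\Psi_{ik}(u) = \Psi_{ij}(u) + c_{jk}$. Setting $u = 0$ instead gives $\Psi_{ik}(v) = c_{ij} + \Psi_{jk}(v)$. So the functions $\Phi_{ij} := \Psi_{ij} - c_{ij}$ satisfy $\Phi_{ik} = \Phi_{ij}$ and $\Phi_{ik} = \Phi_{jk}$ whenever $i,j,k$ are distinct. In words, $\Phi_{ab}$ is unchanged when either index is replaced by a third distinct index.

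The third step concludes that all $\Phi_{ab}$ coincide. Any ordered pair $(a,b)$ can be moved to any other ordered pair $(a',b')$ by a finite chain of single-index replacements. Each replacement must keep the two indices distinct, and such a move always exists when $n \geq 3$. For example, $(1,2)\to(1,3)\to(2,3)\to(2,1)$ reverses a pair. Call the common function $\Phi$.

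The fourth step substitutes $\Psi_{ij} = \Phi + c_{ij}$ back into (\ref{eq:additive-cocycle}). Using the constant cocycle from the first step, the constants cancel and we get $\Phi(u) + \Phi(v) = \Phi(u+v)$. Continuity of $\Phi$ is inherited from $\Psi_{ij}$, which is continuous because $\psi_{ij}$ is (\Cref{lem:pair-extraction-F0}).

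The last step writes the constants as a coboundary. Fix a reference asset, say $1$, and set $\beta_1 := 0$ and $\beta_j := -c_{1j}$ for $j \neq 1$. I need $c_{ij} = \beta_i - \beta_j$ for every distinct pair. For $i,j \neq 1$, the constant cocycle with $i=1$ gives $c_{1i} + c_{ij} = c_{1j}$, which is exactly $c_{ij} = \beta_i - \beta_j$. For pairs involving $1$, I need $c_{1j} = -\beta_j$ and $c_{i1} = \beta_i = -c_{1i}$, that is, antisymmetry $c_{i1} = -c_{1i}$. Antisymmetry follows from the constant cocycle with a third index $k$: $c_{1i} + c_{ik} = c_{1k}$ and $c_{i1} + c_{1k} = c_{ik}$ add to $c_{1i} + c_{i1} = 0$.

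I expect the main obstacle to be this bookkeeping, since $\Psi_{ii}$ is undefined. Antisymmetry and the reversal of index order both have to be obtained through genuine triples of distinct indices, and this is exactly where $n \geq 3$ is used. The Cauchy part is immediate once $\Phi$ is shown to be common to all pairs.
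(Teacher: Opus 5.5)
Your proposal is correct and follows the paper's skeleton: specialize the cocycle at $u=0$ and at $v=0$, show that the centered functions $\Psi_{ij}-\Psi_{ij}(0)$ coincide across pairs, substitute back to obtain the Cauchy equation, and put the constants in coboundary form. It differs from the paper in two places.

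First, the paper enumerates Cases A--G to connect every ordered pair to a reference pair. You instead assert that the ``single-index replacement'' graph on ordered pairs is connected, and you only illustrate the reversal. The claim is true, but it deserves its one-line proof. To reach $(a',b')$ from $(a,b)$, go $(a,b)\to(a',b)\to(a',b')$ if $a'\neq b$. Otherwise pick $c\notin\{a,b\}$ and go $(a,b)\to(a,c)\to(b,c)\to(b,b')$.

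Second, and more substantively, you obtain antisymmetry $c_{1i}=-c_{i1}$ by adding the triangle identities (\ref{eq:triangle-identity}) for the triples $(1,i,k)$ and $(i,1,k)$. The paper instead returns to the mechanism and uses reciprocity of structural ratios, $\psi_{ij}(x,V)\,\psi_{ji}(x^{\odot-1},V)=1$. Your route makes the lemma a self-contained consequence of the functional equation (\ref{eq:additive-cocycle}) on distinct triples. It also shows that the reciprocity input is redundant once $n\geq 3$. The paper's route draws on an extra structural fact, but one that survives at $n=2$, where no third index exists and only the relation $\Psi_{12}(u)+\Psi_{21}(-u)=0$ remains.
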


\begin{proof}
We first extract a common $u$-dependent part across pairs.
Setting $v = 0$ in (\ref{eq:additive-cocycle}) yields
\begin{equation}\label{eq:cocycle-v-zero}
\Psi_{ij}(u) + \Psi_{jk}(0) = \Psi_{ik}(u) \qquad \forall u \in \R^m,
\end{equation}
and setting $u = 0$ gives the triangle identity
\begin{equation}\label{eq:triangle-identity}
\Psi_{ij}(0) + \Psi_{jk}(0) = \Psi_{ik}(0).
\end{equation}

\emph{Step~1 (independence of the second index).}
Subtracting (\ref{eq:triangle-identity}) from (\ref{eq:cocycle-v-zero}), we obtain
\[
\Psi_{ij}(u) - \Psi_{ij}(0) = \Psi_{ik}(u) - \Psi_{ik}(0)  \qquad \forall\, i, j, k \text{ pairwise distinct},\ \forall u,
\]
so the quantity $\Psi_{ij}(u) - \Psi_{ij}(0)$ is independent of the second index $j$ at fixed first index $i$ (the three-index quantifier requires $n \geq 3$).

\emph{Step~2 (independence of the first index).}
Setting $u = 0$ in (\ref{eq:additive-cocycle}) gives  $\Psi_{ij}(0) + \Psi_{jk}(v) = \Psi_{ik}(v)$, and hence
$\Psi_{jk}(v) - \Psi_{jk}(0) = \Psi_{ik}(v) - \Psi_{ik}(0)$ for all $j, k$ and pairwise distinct $i, j, k$,
so the quantity $\Psi_{jk}(v) - \Psi_{jk}(0)$ is independent of the first index $j$ at fixed second index $k$ (again the three-index quantifier requires $n \geq 3$).

We now establish that these two steps together force $\Psi_{ij}(u) - \Psi_{ij}(0)$ to be the same
function of $u$ for \emph{every} ordered pair $(i,j)$ with $i \neq j$.
Write $\Delta_{ij}(u) := \Psi_{ij}(u) - \Psi_{ij}(0)$, fix a reference pair $(i_0, j_0)$ with $i_0 \neq j_0$,
and set $\Phi(u) := \Delta_{i_0 j_0}(u)$.
We claim $\Delta_{ij} = \Phi$ for every ordered pair $(i,j)$  with $i \neq j$.

Each application of Step~1 or Step~2 requires three pairwise distinct indices, since
\Cref{lem:cocycle-psi} establishes the cocycle (\ref{eq:cocycle-psi}) only under that hypothesis.
  At $n = 3$ every triple of distinct indices exhausts $A$, so there is no slack for an auxiliary
fourth index; the argument below routes all chains through triples realizable in $A$.
We classify ordered pairs $(i,j)$ with $i \neq j$ by their overlap with the reference pair $(i_0, j_0)$.
 At $n = 3$ there are six such ordered pairs, falling into three groups: same-position overlap (Cases A, B, C, each handled by a single Step~1 or Step~2 application), disjoint (Case D,  vacuous at $n = 3$ and handled by a two-step chain at $n \geq 4$), and reverse-or-cross overlap (Cases E, F, G, each routed through the unique third element $k \in A \setminus \{i_0, j_0\}$ by a two- or three-step chain).
Each case reduces $\Delta_{ij}$ to $\Delta_{i_0 j_0} = \Phi$ by chain rewriting.

\emph{Case A ($i = i_0$ and $j = j_0$).}
Here $\Delta_{ij} = \Delta_{i_0 j_0} = \Phi$ trivially.

\emph{Case B ($i = i_0$, $j \neq j_0$).}
The three indices $i_0, j, j_0$ are pairwise distinct (since $i_0 \neq j_0$, and $i_0 = i \neq j$,  and $j \neq j_0$ by hypothesis).
Applying Step~1 to this triple with fixed first index $i_0$ gives $\Delta_{i_0 j} = \Delta_{i_0 j_0}$,
i.e., $\Delta_{ij} = \Phi$.

\emph{Case C ($j = j_0$, $i \neq i_0$).}
The three indices $i, i_0, j_0$ are pairwise distinct  (since $i \neq i_0$, and $i \neq j = j_0$, and $i_0 \neq j_0$).
Applying Step~2 to this triple with fixed second index $j_0$ gives $\Delta_{i j_0} = \Delta_{i_0 j_0}$,
i.e., $\Delta_{ij} = \Phi$.

\emph{Case D ($\{i,j\} \cap \{i_0, j_0\} = \emptyset$).}
This case is vacuous at $n = 3$, since four distinct elements cannot fit in a three-element set.
For completeness at $n \geq 4$: the case hypothesis ensures that $i, j, j_0$ are pairwise distinct and that $i, i_0, j_0$ are pairwise distinct.
Applying Step~1 to the triple $i, j, j_0$ with fixed first index $i$ gives  $\Delta_{ij} = \Delta_{i j_0}$.
Applying Step~2 to the triple $i, i_0, j_0$ with fixed second index $j_0$ gives $\Delta_{i j_0} = \Delta_{i_0 j_0} = \Phi$.
Hence,  $\Delta_{ij} = \Phi$.

\emph{Case E ($i = j_0$, $j \neq i_0$).}
We use two steps.
The three indices $j_0, i_0, j$ are pairwise distinct (since $j_0 \neq i_0$, and $j_0 = i \neq j$,
and $i_0 \neq j$ by hypothesis).
Applying Step~2 to this triple with fixed second index  $j$ gives $\Delta_{j_0 j} = \Delta_{i_0 j}$.
Next, the three indices $i_0, j, j_0$ are pairwise distinct  (since $i_0 \neq j$ as just established,
and $i_0 \neq j_0$, and $j \neq j_0$ because $j_0 = i \neq j$).
Applying Step~1 to this triple with fixed first index $i_0$ gives $\Delta_{i_0 j} = \Delta_{i_0 j_0} = \Phi$.
Both triples are subsets of $\{i_0, j_0, j\}$, which has exactly three elements and exists at $n = 3$.
Hence,  $\Delta_{ij} = \Delta_{j_0 j} = \Delta_{i_0 j} = \Phi$.

\emph{Case F ($j = i_0$, $i \neq j_0$).}
We use two steps, both on the triple $i, i_0, j_0$,  which is pairwise distinct since
$i \neq i_0$ (because $i \neq j = i_0$), $i \neq j_0$ by hypothesis, and $i_0 \neq j_0$.
This triple is a subset of $A$ at $n = 3$.
Applying Step~1 with fixed first index $i$ gives $\Delta_{i\, i_0} = \Delta_{i\, j_0}$.
Applying Step~2 to the same triple with fixed second index  $j_0$ gives $\Delta_{i\, j_0} = \Delta_{i_0 j_0} = \Phi$.
Hence, $\Delta_{ij} = \Delta_{i\, i_0} = \Delta_{i\, j_0} = \Phi$.

\emph{Case G ($i = j_0$ and $j = i_0$, i.e., $(i,j)$ is the reverse of $(i_0, j_0)$).}
Let $k$ be the unique third element of $A \setminus \{i_0, j_0\}$, which exists since $n \geq 3$.
We use three sub-applications of Steps~1 and~2 on triples drawn from $\{i_0, j_0, k\}$.

\emph{Sub-application (G.1):} apply Step~2 to the triple $(j_0, k, i_0)$ with fixed second index $i_0$, giving $\Delta_{j_0\, i_0} = \Delta_{k\, i_0}$.

\emph{Sub-application (G.2):} apply Step~1 to the triple $(k, i_0, j_0)$ with fixed first index $k$, giving $\Delta_{k\, i_0} = \Delta_{k\, j_0}$.

\emph{Sub-application (G.3):} apply Step~2 to the same triple $(k, i_0, j_0)$ with fixed second index $j_0$, giving $\Delta_{k\, j_0} = \Delta_{i_0 j_0} = \Phi$.

Observe that (G.2) and (G.3) act on the same ordered triple  $(k, i_0, j_0)$ but fix different indices (Step~1 fixes the left index, whereas Step~2 fixes the right).
These are two distinct specializations of the cocycle  (\ref{eq:additive-cocycle}) (setting $u = 0$ and $v = 0$ respectively) and involve no circular reasoning.
Chaining: $\Delta_{ij} = \Delta_{j_0\, i_0} = \Delta_{k\, i_0} = \Delta_{k\, j_0} = \Phi$.

Cases A--G are mutually exclusive and cover every ordered pair $(i,j)$ with $i \neq j$
(at $n = 3$ the six such pairs fall into cases A, B, C, E, F, G, with D vacuous).
Therefore, $\Delta_{ij}(u) = \Phi(u)$ for all such pairs, and $\Phi$ is continuous as a
difference of continuous functions.
We do not claim $\Phi$ is symmetric in $u$ at fixed  $V$.
Symmetry holds only when $V$ is itself
permutation-symmetric (by the joint equivariance of \Cref{lem:pair-extraction-F0}), and it plays no role below.

Setting $c_{ij} := \Psi_{ij}(0)$, we have the decomposition
\begin{equation}\label{eq:Psi-decomp-step}
\Psi_{ij}(u) =  \Phi(u) + c_{ij}.
\end{equation}

\emph{Second stage: the Cauchy equation and coboundary form.}
Substituting (\ref{eq:Psi-decomp-step}) into the cocycle (\ref{eq:additive-cocycle}) gives
\[
\Phi(u) + c_{ij} + \Phi(v) + c_{jk} = \Phi(u + v) + c_{ik},
\]
which rearranges to
\begin{equation}\label{eq:Phi-defect-equation}
\Phi(u + v) - \Phi(u) - \Phi(v) = c_{ij} + c_{jk} - c_{ik}.
\end{equation}
The left-hand side is independent of $(i, j, k)$  and the right-hand side is independent of $(u, v)$,
so both equal a single constant.
By the triangle identity (\ref{eq:triangle-identity}), $c_{ik} = c_{ij} + c_{jk}$,
so the right-hand side vanishes, giving the Cauchy equation (\ref{eq:Phi-Cauchy-final}).

It remains to put the constants $c_{ij}$ in coboundary form.
The constants $c_{ij}  := \Psi_{ij}(0)$ are defined only for $i \neq j$, and the triangle identity
holds only on pairwise-distinct triples.
We never use $c_{ii}$.
We first record the antisymmetry $c_{ij} = -c_{ji}$.
Since the output lies in $\simplex$, its structural ratios satisfy $\rho_{ij}(M)\,\rho_{ji}(M) = 1$
for every pair $i \neq j$ (a property of ratios, independent of the cocycle).
Because $\rho_{ji}(\alpha_\ell^*)  =  \rho_{ij}(\alpha_\ell^*)^{-1}$ for every LP $\ell$,
\Cref{lem:pair-extraction-F0} gives  \mbox{$\psi_{ij}(x, V)\,\psi_{ji}(x^{\odot -1}, V) = 1$}
for all $x \in \R^m_{>0}$, i.e., in logarithmic coordinates \mbox{$\Psi_{ij}(u) + \Psi_{ji}(-u) = 0$}
for all $u \in \R^m$. 
Setting $u = 0$ yields $c_{ij} + c_{ji} = 0$.

Now fix a reference index $i_0 \in A$ and set  $\beta_i := c_{i, i_0}$ for $i \neq i_0$
and $\beta_{i_0} := 0$.
For distinct $i, j$ both different from $i_0$, the triangle identity applied to the pairwise-distinct triple $(i, i_0, j)$ gives
\[
c_{ij} = c_{i, i_0} + c_{i_0, j} =  c_{i, i_0} - c_{j, i_0} =  \beta_i - \beta_j,
\]
using antisymmetry in the middle step.
For $j = i_0$, we have $c_{i, i_0} = \beta_i = \beta_i - \beta_{i_0}$, 
while for  $i = i_0$ we have $c_{i_0, j} = -c_{j, i_0} = -\beta_j = \beta_{i_0} - \beta_j$.
In every case, $c_{ij} = \beta_i - \beta_j$. 
Substituting finally into (\ref{eq:Psi-decomp-step}) yields (\ref{eq:Psi-coboundary-form}).
\end{proof}

\Cref{lem:psi-decomposition} was stated at fixed $V$ for clarity, 
but the identical argument goes through with $V \in \R^m_{>0}$ as a continuous side parameter, 
since $\Psi_{ij}(u; V)$, $\Phi(u; V)$, and $c_{ij}(V) := \Psi_{ij}(0; V)$ 
all depend continuously on $V$ and the coboundary derivation (\ref{eq:Psi-coboundary-form}) proceeds verbatim.
Continuity of the Cauchy solution then gives the following.

\begin{lemma}\label{lem:Phi-linear}
For each fixed $V \in \R^m_{>0}$,  the function $\Phi(\,\cdot\,; V)$ is linear in $u$: there exist coefficients $a_\ell(V) \in \R$ such that $\Phi(u; V) = \sum_\ell a_\ell(V) \, u_\ell$ for every $u \in \R^m$.
The coefficient map \mbox{$a \colon \R^m_{>0} \to \R^m$} defined by  $V \mapsto a(V) = (a_1(V), a_2(V), \ldots, a_m(V))$ is continuous.
\end{lemma}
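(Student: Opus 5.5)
The plan is to prove the two claims separately: first linearity at fixed $V$ by the standard continuous-Cauchy argument, then continuity of the coefficients by reading them off through evaluation at basis vectors.

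\emph{Linearity at fixed $V$.} By \Cref{lem:psi-decomposition}, $\Phi(\,\cdot\,;V)$ is continuous and additive on $\R^m$. Additivity gives $\Phi(0)=0$ (set $u=v=0$). Induction then gives $\Phi(ku)=k\Phi(u)$ for $k\in\mathbb{N}$, and $\Phi(-u)=-\Phi(u)$ extends this to $k\in\mathbb{Z}$. Writing $u=q\cdot(u/q)$ yields $\Phi(ru)=r\Phi(u)$ for every rational $r$. For real $t$, approximate by rationals $r_k\to t$; continuity of $\Phi$ gives $\Phi(tu)=t\Phi(u)$. So $\Phi$ is $\R$-linear. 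Decompose $u=\sum_\ell u_\ell e_\ell$ in the standard basis $e_1,\dots,e_m$ of $\R^m$. Additivity and homogeneity then give $\Phi(u;V)=\sum_\ell a_\ell(V)u_\ell$ with $a_\ell(V):=\Phi(e_\ell;V)$.

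\emph{Continuity of $a$.} From the proof of \Cref{lem:psi-decomposition},
\[
\Phi(u;V)=\Psi_{i_0j_0}(u;V)-\Psi_{i_0j_0}(0;V)
\]
for a fixed reference pair. Unwinding the definition of $\Psi$ gives
\[
a_\ell(V)=\log\psi_{i_0j_0}(e^{e_\ell},V)-\log\psi_{i_0j_0}(\mathbf 1,V),
\]
where $e^{e_\ell}$ is the vector with $e$ in slot $\ell$ and $1$ elsewhere. Each term is continuous in $V$, because $\psi_{i_0j_0}$ is jointly continuous on $\R^m_{>0}\times\R^m_{>0}$ (\Cref{lem:pair-extraction-F0}) and positive, so $\log$ is continuous along it. Hence each $a_\ell$ is continuous, and therefore so is $a$.

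The only point needing care is the parametric one, and it is mild. The coefficients must be well defined as functions of $V$, that is, $\Phi(\cdot;V)$ must not depend on auxiliary choices. That holds because the reference pair $(i_0,j_0)$ is fixed once, independently of $V$, and \Cref{lem:psi-decomposition} shows every $\Delta_{ij}$ equals the same $\Phi$. No uniformity in $V$ is needed: linearity is a pointwise-in-$V$ statement, and continuity in $V$ comes directly from the explicit formula above rather than from any limiting argument over the Cauchy solutions.
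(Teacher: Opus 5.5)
Your proof is correct and follows essentially the same route as the paper: linearity at fixed $V$ comes from continuity plus additivity, and continuity of $a$ comes from joint continuity of $\psi_{i_0j_0}$ (equivalently of $\Psi_{ij}$) in $V$. The paper cites Acz\'el and assembles the result via coordinate slices where you prove the continuous-Cauchy step inline, and your explicit formula $a_\ell(V)=\log\psi_{i_0j_0}(e^{e_\ell},V)-\log\psi_{i_0j_0}(\mathbf{1},V)$ simply makes the paper's one-line continuity claim concrete.
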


\begin{proof}
For each fixed $V$, $\Phi(\,\cdot\,; V)$ is continuous in $u$ and satisfies $\Phi(u + v; V) = \Phi(u; V) + \Phi(v; V)$.
\Citet[Chapter 2, Theorem 1]{aczel-1966} characterizes continuous solutions of the one-dimensional Cauchy equation as linear; the extension to $\R^m$ is standard, by applying the one-dimensional result to each coordinate slice  $u \mapsto \Phi(t e_\ell; V)$ to obtain $\Phi(t e_\ell; V) = a_\ell(V) t$ for some $a_\ell(V) \in \R$, and then assembling via additivity:   $\Phi(u; V) = \Phi(\sum_\ell u_\ell e_\ell; V) = \sum_\ell \Phi(u_\ell e_\ell; V) = \sum_\ell a_\ell(V) u_\ell$.
We do not assert symmetry of $\Phi$ in $u$ at fixed $V$, which holds only when $V$ is itself permutation-symmetric.
The linearity above needs no such symmetry, and the relabeling structure of the coefficients $a_\ell(V)$ is recovered in \Cref{lem:V-dependent-weights} as a joint equivariance,  $a_{\sigma(\ell)}(V) = a_\ell(\sigma\cdot V)$, directly from (A4).
Continuity of $V \mapsto a(V)$ follows from joint continuity of $\Psi_{ij}$ in $(u, V)$, which in turn follows from continuity of $M$ in all arguments (axiom (C) requires joint continuity in the product topology on profiles).
\end{proof}

\subsection{Pinning Down the Weights via Pareto}\label{sec:F0-pareto-pin}

Combining \Cref{lem:psi-decomposition,lem:Phi-linear}, and writing  $c_{ij}(V) = \beta_i(V) - \beta_j(V)$ in coboundary form, we have for every $u \in \R^m$ and every $V \in \R^m_{>0}$:

\noindent\emph{A notational caution.}
The vector $u \in \R^m$ here is the input to $\Psi_{ij}$, representing the tuple $(\log \rho_{ij}(\alpha_\ell^*))_{\ell=1}^m$ of \emph{LP-indexed} log-structural-ratios, one coordinate per LP.
This is distinct from the zero-sum hyperplane $H \subset \R^n$ of \emph{asset}-indexed clr-vectors.
In particular, $u$ ranges freely over all of $\R^m$ (any vector is realizable as a tuple of LP log-ratios by choosing each $\alpha_\ell^*$ independently), and there is no zero-sum constraint on $u$; by contrast, $\clr(\alpha) \in H \subset \R^n$ always satisfies $\sum_i \clr(\alpha)_i = 0$.
The test vectors $u = te_k \in \R^m$ and $u = (t,  \ldots, t) \in \R^m$ used in Step~2 of \Cref{lem:V-dependent-weights} below are vectors in this LP-indexed space, not in $H$.
\begin{equation}\label{eq:Psi-w}
\Psi_{ij}(u; V) = \sum_{\ell=1}^m a_\ell(V) \, u_\ell + \bigl(\beta_i(V) - \beta_j(V)\bigr),
\end{equation}
where $a_\ell, \beta_i \colon \R^m_{>0}  \to \R$ are all continuous.
$\beta_i(V)$ is determined by $c_{ij}(V) = \Psi_{ij}(0; V)$ and may in principle depend on $V$ at this stage.
We now invoke (A2), (A4), and (A0) to pin down $a(V)$ and show that the coboundary term vanishes identically.

\begin{lemma}\label{lem:pareto-hull}
Under (A2), the output $M(\alpha^*, D)$ lies in the Aitchison hull 
 $C := \mathrm{Aitch\text{-}hull}\{\alpha_\ell^*\}_{\ell=1}^m$ 
of the peaks.
Equivalently, $\clr(M(\alpha^*, D))$ lies in the Euclidean convex hull of $\{\clr(\alpha_\ell^*)\}_{\ell=1}^m$ in $H$.
In particular, for each pair $i, j$,
\[
\log \rho_{ij}\bigl( M(\alpha^*, D) \bigr) \in \Bigl[ \min_\ell \log \rho_{ij}(\alpha_\ell^*),\ \max_\ell \log \rho_{ij}(\alpha_\ell^*) \Bigr].
\]
Moreover, the Aitchison hull $C$ equals the Pareto-undominated set $P(\theta^*)$ for Aitchison-single-peaked preferences.
\end{lemma}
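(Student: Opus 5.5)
The plan is to transport everything to $H$ through the isometry $\clr$ and work there. Write $x_\ell := \clr(\alpha_\ell^*) \in H$ and $K := \mathrm{conv}\{x_\ell\}_{\ell=1}^m \subset H$, a compact convex set. Because $\clr$ is an isometry onto $H$, Aitchison-single-peakedness says that LP $\ell$ strictly prefers $y$ to $z$ exactly when $\|y - x_\ell\| < \|z - x_\ell\|$. So the Pareto-undominated set $P(\theta^*)$ is the image under $\clr^{-1}$ of the set of points of $H$ that are weakly Pareto-undominated for the Euclidean distances to $x_1, \ldots, x_m$. The whole lemma therefore reduces to one statement: this Euclidean undominated set equals $K$. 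The first two assertions follow at once, using (A2).

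For the inclusion $P \subseteq K$, I take $z \in H \setminus K$ and let $\pi(z)$ be its Hilbert projection onto the closed convex set $K$, taken inside the finite-dimensional space $H$. The obtuse-angle characterisation gives $\langle z - \pi(z),\, x_\ell - \pi(z)\rangle \le 0$ for every $\ell$. Expanding $\|z - x_\ell\|^2 = \|z - \pi(z)\|^2 + \|\pi(z) - x_\ell\|^2 - 2\langle z - \pi(z), x_\ell - \pi(z)\rangle$ then yields $\|z - x_\ell\|^2 \ge \|z - \pi(z)\|^2 + \|\pi(z) - x_\ell\|^2$. Since $z \ne \pi(z)$, this means $\|\pi(z) - x_\ell\| < \|z - x_\ell\|$ for every $\ell$, so $\pi(z)$ strictly Pareto-dominates $z$.

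For the reverse inclusion $K \subseteq P$, I take $y = \sum_\ell \lambda_\ell x_\ell \in K$ with $\lambda \in \Delta^{m-1}$ and suppose some $y'$ satisfies $\|y' - x_\ell\| < \|y - x_\ell\|$ for all $\ell$. Subtracting squares gives $\|y'\|^2 - \|y\|^2 < 2\langle y' - y, x_\ell\rangle$ for each $\ell$. Averaging with weights $\lambda_\ell$ gives $\|y'\|^2 - \|y\|^2 < 2\langle y' - y, y\rangle$, which rearranges to $\|y' - y\|^2 < 0$, a contradiction. This step is routine. The only subtlety is to note that the strict inequalities survive averaging because the $\lambda_\ell$ are nonnegative and sum to one.

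Finally, I would derive the pairwise interval bound. By (A2) and the equality just proved, $\clr(M(\alpha^*, D)) = \sum_\ell \lambda_\ell x_\ell$ for some $\lambda \in \Delta^{m-1}$. The map $y \mapsto y_i - y_j$ is a linear functional on $H$, and by \Cref{sec:exchange-rates} it equals $\log \rho_{ij}$ after $\clr$. Hence $\log \rho_{ij}(M(\alpha^*, D)) = \sum_\ell \lambda_\ell \log \rho_{ij}(\alpha_\ell^*)$, and this convex combination lies between the minimum and the maximum. I expect the main obstacle to be the projection step, specifically making sure the projection is taken inside $H$ rather than in $\R^n$, so that $\pi(z)$ is again a clr-image of a point of $\simplex$. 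This works because $K \subset H$ and $H$ is a closed subspace, and because $\clr$ is onto $H$, which makes every competing $y'$ realizable as a genuine weight vector.
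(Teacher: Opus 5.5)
Your proposal is correct and follows essentially the same route as the paper. It uses a Hilbert-projection/variational-inequality argument onto the hull in $H$ for ``outside implies dominated,'' $\lambda$-weighted averaging of the strict squared-distance inequalities to reach $\|y'-y\|^2<0$ for ``inside implies undominated,'' and linearity of $\log\rho_{ij}$ on $H$ for the pairwise bound; the differences (sign convention in the obtuse-angle inequality, how the squares are expanded) are purely cosmetic.
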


\begin{proof}
We prove the biconditional $P(\theta^*) = C$.

\emph{Direction 1 ($C^c \subseteq P(\theta^*)^c$, i.e., outside the hull implies dominated).}
Suppose $\alpha \notin C$.
Since $C$ is a closed convex set in $H$ (under clr), the nearest-point projection $\pi(\clr(\alpha))$ onto $C$ exists uniquely by the Hilbert projection theorem.
Let $\alpha^{\mathrm{proj}} := \clr^{-1}(\pi(\clr(\alpha)))$.
Write \mbox{$a := \clr(\alpha) - \pi(\clr(\alpha))$} and \mbox{$b_\ell := \pi(\clr(\alpha)) - \clr(\alpha_\ell^*)$}.
The variational inequality for the nearest-point projection onto $C$, applied with $c = \clr(\alpha_\ell^*) \in C$, 
gives $\langle a, b_\ell\rangle \geq 0$, and $\|a\|^2 > 0$  since $\clr(\alpha) \notin C$.
Therefore, 
\[
\|\clr(\alpha) - \clr(\alpha_\ell^*)\|^2 = \|a + b_\ell\|^2 = \|a\|^2 + 2\langle a, b_\ell\rangle + \|b_\ell\|^2 > \|b_\ell\|^2,
\]
that is, $\Aitch(\alpha^{\mathrm{proj}}, \alpha_\ell^*) < \Aitch(\alpha, \alpha_\ell^*)$ for every $\ell$, meaning $\alpha^{\mathrm{proj}} \succ_\ell \alpha$ for every $\ell$.
Hence, $\alpha \notin P(\theta^*)$.

\emph{Direction 2 ($C \subseteq P(\theta^*)$, i.e., inside the hull implies undominated).}
Suppose $\alpha \in C$, so $\clr(\alpha) = \sum_\ell \lambda_\ell \clr(\alpha_\ell^*)$ for some $(\lambda_\ell) \in \Delta^{m-1}$.
We claim no $\alpha' \in \simplex$ satisfies $\alpha'  \succ_\ell \alpha$ for every $\ell$.
Suppose by contradiction that such $\alpha'$ exists.
Then $\|\clr(\alpha') - \clr(\alpha_\ell^*)\| < \|\clr(\alpha) - \clr(\alpha_\ell^*)\|$ for every $\ell$.
Let $x := \clr(\alpha') - \clr(\alpha)$.
Expanding,  $\|x + (\clr(\alpha) - \clr(\alpha_\ell^*))\|^2  <  \|\clr(\alpha) - \clr(\alpha_\ell^*)\|^2$, which gives $\|x\|^2 + 2\langle x,\, \clr(\alpha) - \clr(\alpha_\ell^*)\rangle < 0$ for every $\ell$.
 Multiplying the strict inequality at each $\ell$ by $\lambda_\ell \geq 0$ and summing over $\ell$ yields a weighted sum; since $\sum_\ell \lambda_\ell = 1$ there is at least one $\ell$ with $\lambda_\ell > 0$, and every such $\ell$ contributes a strictly negative term $\lambda_\ell \cdot (\text{negative quantity})$, making the total strictly negative:
\[
\|x\|^2 + 2\Bigl\langle x,\, \clr(\alpha) - \sum_\ell  \lambda_\ell \clr(\alpha_\ell^*) \Bigr\rangle < 0.
\]
Since $\clr(\alpha) = \sum_\ell \lambda_\ell \clr(\alpha_\ell^*)$, the inner product vanishes, leaving $\|x\|^2 < 0$. 
Contradiction.
Therefore,  no such $\alpha'$ exists, and $\alpha \in P(\theta^*)$.

The pairwise bound follows because $\log\rho_{ij}$ is a linear functional on $H$, so its range over $C$ is $[\min_\ell \log\rho_{ij}(\alpha_\ell^*),\, \max_\ell \log\rho_{ij}(\alpha_\ell^*)]$.
\end{proof}

\begin{lemma}\label{lem:V-dependent-weights}
Under (A0)--(A4) and (C), there exists a continuous function $w \colon \R^m_{>0} \to \Delta^{m-1}$, equivariant under the diagonal action of the symmetric group
 (i.e., $w_{\sigma(\ell)}(V) = w_\ell(\sigma \cdot V)$ for every permutation $\sigma$), such that
\[
\Psi_{ij}(u; V) = \sum_{\ell=1}^m w_\ell(V)  \cdot u_\ell \qquad \forall u \in \R^m,\ \forall V \in \R^m_{>0}.
\]
\end{lemma}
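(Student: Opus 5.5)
Starting from the representation \eqref{eq:Psi-w}, $\Psi_{ij}(u;V)=\sum_\ell a_\ell(V)u_\ell+\beta_i(V)-\beta_j(V)$, I will use the pairwise Pareto bound of \Cref{lem:pareto-hull} on well-chosen test vectors in the LP-indexed space $\R^m$. The bound reads $\min_\ell u_\ell \le \Psi_{ij}(u;V)\le \max_\ell u_\ell$, where $u_\ell=\log\rho_{ij}(\alpha_\ell^*)$. Any $u\in\R^m$ is realizable by independent choices of peaks (\Cref{lem:cocycle-rho}), and any $V$ is realizable by deposits. Hence the bound holds for every $(u,V)$.

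The first step kills the coboundary. At $u=0$ all LP log-ratios vanish, so the bound forces $\Psi_{ij}(0;V)=0$, that is, $\beta_i(V)=\beta_j(V)$ for all $i\neq j$. The coboundary term therefore vanishes identically and $\Psi_{ij}(u;V)=\sum_\ell a_\ell(V)u_\ell$. Next I pin down the coefficients. The vector $u=(t,\dots,t)$ gives $t\sum_\ell a_\ell(V)=t$ for all $t$, so $\sum_\ell a_\ell(V)=1$. The vector $u=te_k$ with $t>0$ gives $\min(0,t)=0\le a_k(V)t$, so $a_k(V)\ge 0$. Thus $w(V):=a(V)\in\Delta^{m-1}$, and continuity of $w$ is inherited from \Cref{lem:Phi-linear}. (A0) is not actually needed here; it only rules out nothing extra, since unanimity already follows from (A2).

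The final step is equivariance, and it is where some care is needed. By \Cref{lem:pair-extraction-F0}, $\Psi_{ij}(\sigma\cdot u;\sigma\cdot V)=\Psi_{ij}(u;V)$. Writing both sides linearly gives
\[
\sum_\ell w_\ell(\sigma\cdot V)\,u_{\sigma(\ell)}=\sum_\ell w_\ell(V)\,u_\ell \qquad \text{for all } u.
\]
Reindexing the left-hand sum by $k=\sigma(\ell)$ and comparing coefficients of $u_k$ yields $w_{\sigma^{-1}(k)}(\sigma\cdot V)=w_k(V)$, which is $w_{\sigma(\ell)}(V)=w_\ell(\sigma\cdot V)$ after renaming. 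Coefficient comparison is legitimate because $u$ ranges over all of $\R^m$. The main obstacle is only bookkeeping: the permutation convention $\sigma\cdot x=(x_{\sigma(1)},\dots,x_{\sigma(m)})$ must be tracked consistently so that the stated direction of equivariance comes out, rather than its inverse. Since the identity holds for all $\sigma$, either form is equivalent after replacing $\sigma$ by $\sigma^{-1}$.
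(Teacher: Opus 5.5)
Your proposal is correct and follows the paper's proof essentially step for step. The Pareto bound at $u=0$ kills the coboundary, the test vectors $te_k$ and $(t,\dots,t)$ place $a(V)$ in $\Delta^{m-1}$, and coefficient matching after reindexing by $k=\sigma(\ell)$ yields $w_{\sigma(\ell)}(V)=w_\ell(\sigma\cdot V)$; your remark that (A0) does no work here is also consistent with the paper, which cites it but never actually uses it in the argument.
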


\begin{proof}
We establish in three steps that  $a_\ell(V)  =:  w_\ell(V)$ are valid weights and that the coboundary term vanishes.
A notational point: the constants $\beta_i$ introduced in \Cref{lem:psi-decomposition} were defined for each fixed $V$ separately, so they may depend on $V$.
We write $\beta_i(V)$ from this point on to make the $V$-dependence explicit, with the understanding that at any fixed $V$ the $\beta_i(V)$ are real constants in $u$.

\emph{Step 1: the coboundary vanishes.}
For each fixed pair $(i, j)$, the map $\alpha \mapsto \rho_{ij}(\alpha) = \alpha_i/\alpha_j$ is surjective from $\simplex$ onto $\R_{>0}$, as a consequence of \Cref{lem:cocycle-rho} after relabeling assets so that the active pair appears among the first-row ratios characterized there (the lemma states a homeomorphism via the ratios with a designated reference asset, and any pair can be made the reference by relabeling;  surjectivity then follows because the homeomorphism maps onto all of $\R_{>0}^{n-1}$).
Hence, any $u \in \R^m$ arises as $(\log \rho_{ij}(\alpha_\ell^*))_{\ell=1}^m$ for some profile $(\alpha_1^*,  \ldots, \alpha_m^*)$, by choosing each $\alpha_\ell^*$ independently to realize its prescribed $\rho_{ij}$-value.
The Pareto bound
\begin{equation}\label{eq:pareto-bound}
\Psi_{ij}(u; V) \in [\min_\ell u_\ell,\, \max_\ell  u_\ell]
\end{equation}
holds for every $u \in \R^m$ as an immediate consequence of \Cref{lem:pareto-hull} (the output lies in the Aitchison hull of the peaks).
Setting $u = 0$ in (\ref{eq:Psi-w}) gives  $\Psi_{ij}(0; V) = \beta_i(V) - \beta_j(V)$. 
The case \mbox{$u = 0$} is realized, at any fixed \mbox{$V \in \R^m_{>0}$}, by the profile in which every LP has \mbox{$\rho_{ij}(\alpha_\ell^*) = 1$} (existence guaranteed by the surjectivity just established, or by the explicit choice \mbox{$\alpha_\ell^* = (1/n,  \ldots, 1/n)$} for every $\ell$); the deposits $D$ are independent inputs and can be chosen freely to realize the given $V$ (each $V_\ell = \fref(D_\ell)$ can be made any prescribed positive value by scaling $D_\ell$, since $\fref$ is $1$-homogeneous and continuous).
The Pareto bound thus applies at  $u = 0$ for this $V$:   it says $\Psi_{ij}(0; V) \in [\min_\ell 0,\, \max_\ell 0] = \{0\}$, which forces $\Psi_{ij}(0; V) = 0$, hence $\beta_i(V) - \beta_j(V) = 0$ for every pair $i \neq j$ and every $V \in \R^m_{>0}$.
So the coboundary vanishes and (\ref{eq:Psi-w}) simplifies to $\Psi_{ij}(u; V) = \sum_\ell a_\ell(V) \, u_\ell$.

\emph{Step 2: the coefficients are nonnegative and sum to one.}
The Pareto bound (\ref{eq:pareto-bound}) now reads $\sum_\ell  a_\ell(V) \, u_\ell \in [\min_\ell u_\ell,\, \max_\ell u_\ell]$ for all $u$.
Setting $u = t \, e_k$ for $t > 0$ gives $t a_k(V) \in [0, t]$, hence $a_k(V) \geq 0$.
Setting $u = (t,  \ldots, t)$ gives $ (\sum_\ell a_\ell(V)) \, t = t$. 
Therefore,  it holds that $\sum_\ell a_\ell(V) = 1$.
Since all $a_k(V) \geq 0$ and $\sum_k a_k(V) = 1$, each $a_k(V) \leq 1$ as well, and together they satisfy $a(V) \in \Delta^{m-1}$.
We write $w_\ell(V) := a_\ell(V)$.

\emph{Step 3: equivariance from (A4).}
Axiom (A4) requires that simultaneously permuting LP labels and deposits leaves the mechanism unchanged.
In terms of $\Psi_{ij}$, applying $\sigma$ to the LP labels of the inputs sends $\Psi_{ij}(u; V) = \sum_\ell w_\ell(V)\, u_\ell$ to $\Psi_{ij}(\sigma \cdot u;\, \sigma \cdot V) = \sum_\ell w_\ell(\sigma \cdot V)\, u_{\sigma(\ell)}$, which (A4) requires equal $\Psi_{ij}(u; V)$ for every $u$ and $V$.
Reindexing the right-hand sum via $\ell' = \sigma(\ell)$ gives $\sum_{\ell'} w_{\sigma^{-1}(\ell')}(\sigma \cdot V)\, u_{\ell'}$, so $\sum_\ell w_\ell(V)\, u_\ell = \sum_\ell w_{\sigma^{-1}(\ell)}(\sigma \cdot V)\, u_\ell$ for every $u \in \R^m$.
Since the $u_\ell$ are independent and the identity holds for all $u$, the coefficients match termwise: $w_\ell(V) = w_{\sigma^{-1}(\ell)}(\sigma \cdot V)$, which on substituting $\ell \mapsto \sigma(\ell)$ becomes $w_{\sigma(\ell)}(V) = w_\ell(\sigma \cdot V)$.
This is joint equivariance,  which forces equal  weights only at symmetric $V$ (where it gives $w_\ell = 1/m$), and permits asymmetric weights at asymmetric valuations.

We do not claim that the weights are strictly positive.
A continuous equivariant $w$ may vanish for some $\ell$ on part of $\R^m_{>0}$ without violating any of (A0)--(A4) or (C),
  since constancy of the mechanism on a single slice of profiles does not contradict the \emph{global} nonconstancy that (A0) requires, which is in any case secured by the uniform weight $(1/m,  \ldots, 1/m)$ that equivariance forces at any symmetric $V$.
What the impossibility argument of \Cref{sec:impossibility} uses is precisely this uniform value $w_\ell(V^{\mathrm{sym}}) = 1/m$ at a symmetric valuation profile, not positivity everywhere.
\end{proof}

We have therefore shown that under  (A0)--(A4)  and (C), the mechanism's pairwise rules are
\[
\rho_{ij}\bigl( M(\alpha^*, D) \bigr)  =  \prod_{\ell=1}^m \rho_{ij}(\alpha_\ell^*)^{w_\ell( V)},
\]
and by \Cref{lem:cocycle-rho} this determines $M(\alpha^*, D) \in \simplex$ uniquely.
Taking logarithms and translating to clr-coordinates yields
\[
\clr\bigl( M(\alpha^*, D) \bigr) =  \sum_{\ell=1}^m w_\ell(V) \, \clr(\alpha_\ell^*) \in H,
\]
which on inversion gives precisely  (\ref{eq:aitchison-centroid}). 

\subsection{The Converse}\label{sec:F0-converse}

The canonicality of the geometric mean as anchor for $\fref$ was established in \Cref{sec:lps} after \Cref{def:valuation}.
\Cref{rem:valuation-robustness} below records the precise sense in which the characterization is robust to alternative choices.

We verify that for any continuous, equivariant  $w \colon \R^m_{>0} \to \Delta^{m-1}$, the mechanism (\ref{eq:aitchison-centroid}) satisfies (A0)--(A4) and (C).
(A1) is immediate.
(A0): although the weighting $w$ may a priori vanish on some coordinates (its codomain is the closed simplex $\Delta^{m-1}$), nonconstancy is secured at the symmetric valuation, as follows.
At any symmetric deposit profile (e.g., $D_1  = \cdots = D_m$), the valuation profile is symmetric, $V^{\mathrm{sym}} = (v,  \ldots, v)$ for some $v > 0$, and equivariance of $w$ (in the sense of the definition preceding the theorem) forces $w_\ell(V^{\mathrm{sym}}) = 1/m$ for every $\ell$, since each transposition $\sigma$ fixes $V^{\mathrm{sym}}$ and therefore must permute the components of $w(V^{\mathrm{sym}})$ trivially.
With every weight strictly $1/m > 0$ at $V^{\mathrm{sym}}$, varying any single peak $\alpha_\ell^*$ strictly changes the centroid output (since the geometric mean is injective in each factor), so the mechanism is nonconstant.
Global vanishing of any $w_\ell$ would contradict this, so the equivariance forces strict positivity on the symmetric slice regardless of whether $w$ vanishes elsewhere.
(A2): writing $\bar\alpha := M(\alpha^*, D)$, we have \mbox{$\clr(\bar\alpha) = \sum_\ell w_\ell(V) \clr(\alpha_\ell^*)$} with weights $w_\ell(V) \geq 0$ summing to one, so $\bar\alpha$ lies in the Aitchison hull $C := \mathrm{Aitch\text{-}hull}\{\alpha_\ell^*\}_{\ell=1}^m \subseteq \simplex$; by Direction 2 of \Cref{lem:pareto-hull}, $C \subseteq P(\theta^*)$, hence $\bar\alpha \in P(\theta^*)$.
(A3): \mbox{$\rho_{ij}(M) = \prod_\ell \rho_{ij}(\alpha_\ell^*)^{w_\ell(V)}$} uses only $(i,j)$-restrictions and $V$; (A4) follows from equivariance of $w$, and (C) from continuity of $w$ and of the geometric mean.
This completes the proof of \Cref{thm:characterization-F0}.

\begin{remark}[Robustness to Valuation Choice]\label{rem:valuation-robustness}
The characterization of \Cref{thm:characterization-F0} depends on $\fref$ in a precise and limited way, which we now isolate.
The \emph{output form} 
(namely, the weighted Aitchison centroid) is entirely independent of the choice of reference:
 indeed, any continuous, $1$-homogeneous, asset-symmetric function $\tilde f$ used in place of $\fref$ yields 
precisely the same centroid formula (\ref{eq:aitchison-centroid}), by the identical Cauchy-cocycle argument.
What does depend on the reference is the \emph{parameterization} of the weighting function: writing $\tilde V_\ell := \tilde f(D_\ell)$, the admissible weightings are continuous equivariant functions of $\tilde V$ rather than of $V = (\fref(D_\ell))_\ell$, and these two families are not identified with each other,  since $\tilde f(D_\ell)$ is not a function of $\fref(D_\ell)$ alone.
Three consequences of the characterization are reference-independent: the centroid form (\ref{eq:aitchison-centroid}), the FPW correspondence with opinion pooling (\Cref{prop:correspondence}), and the impossibility (\Cref{thm:impossibility}).
The last of these holds because the manipulation of \Cref{lem:sp-forces-projection} requires only that the weights be uniform at some symmetric valuation profile, and at any symmetric deposit profile every admissible reference function returns a symmetric valuation profile, forcing uniformity by equivariance.
\end{remark}

\subsection{The Canonical Weighting}\label{sec:F0-remarks}

 Within the Arrovian family, the weighting function $w$ is unconstrained beyond continuity and equivariance, leaving open a wide range of admissible rules parameterized by how deposits translate into influence.
Withdrawal neutrality (A5) of \Cref{sec:axioms-WN}, which asks that an LP's partial withdrawal not change the trading function the pool offers, singles out a canonical choice: the deposit-blind symmetric centroid.
This is the content of \Cref{thm:characterization-F0}(b), which we now prove.

\begin{remark}[Withdrawal Neutrality on $\F_\Prod$]\label{rem:DWN-F0}
The output $\bar\alpha = M(\alpha^*, D)$ is the weighted \emph{Fr\'echet mean} of the peaks in the Aitchison geometry, the minimizer of $\sum_\ell w_\ell(V) \Aitch(\alpha, \alpha_\ell^*)^2$,  since $\clr$ is an isometry onto the flat $H$ and the Fr\'echet mean there is the arithmetic mean.
Such centroids are central in compositional data analysis and information geometry \citep{aitchison-1986, pawlowsky-glahn-egozcue-tolosana-2015, amari-nagaoka-2000}.
\end{remark}

\begin{proof}[Proof of \Cref{thm:characterization-F0}(b)]
By part (a), we know that  every Arrovian mechanism has the form (\ref{eq:aitchison-centroid}) for a continuous equivariant $w \colon \R^m_{>0} \to \Delta^{m-1}$, with $\clr(M(\alpha^*, D)) = \sum_\ell w_\ell(V)\, \clr(\alpha_\ell^*)$.

\emph{Step 1: (A5) is equivalent to coefficientwise invariance under single-coordinate scaling.}
On $\F_\Prod$, at a fixed inventory ray $[I]$ the price is $p_{ij}(f_\beta, [I]) = \rho_{ij}(\beta)\,(I_j/I_i)$, with the factor $I_j/I_i$ well-defined on the ray; so two outputs have equal prices at $[I]$ for every pair $(i,j)$ if and only if their structural ratios agree, which by \Cref{lem:cocycle-rho} holds if and only if the outputs coincide.
Hence, (A5) is equivalent to $M(\theta^*, D') = M(\theta^*, D)$ for every profile, i.e., 
\begin{equation}\label{eq:DWN-coeff}
\sum_\ell w_\ell(V')\, \clr(\alpha_\ell^*) =  \sum_\ell w_\ell(V)\, \clr(\alpha_\ell^*) \qquad \text{for every peak profile } \alpha^*,
\end{equation}
where $V'$ scales the single coordinate $V_\ell  \mapsto (1-\lambda) V_\ell$ and fixes the rest.
Since $\clr \colon \simplex \to H$ is onto and $\dim H = n-1 \geq 1$ (the argument formally uses $n \geq 2$, which is implicit throughout but worth recording at this step), we may fix any index $k$, choose a profile with $\clr(\alpha_k^*) = z \neq 0$ and $\clr(\alpha_{k'}^*) = 0$ for $k' \neq k$, and read off from (\ref{eq:DWN-coeff}) that $w_k(V')\,z = w_k(V)\,z$, thus $w_k(V') = w_k(V)$.
Thus, (A5) holds if and only if $w$ is invariant under scaling any single coordinate of $V$ by any factor $t = 1 - \lambda \in (0, 1]$.

\emph{Step 2: single-coordinate invariance forces $w$ constant.}
Let $V, \tilde V \in \R^m_{>0}$ be arbitrary and set $W := (\min(V_k, \tilde V_k))_k \in \R^m_{>0}$.
By construction, $W_k \leq V_k$ for every $k$, so $W_k/V_k \in (0,1]$.
The $k$-th coordinate of $V$ can therefore be scaled \emph{down} to $W_k$ by a valid single-coordinate move (factor $W_k/V_k \in (0,1]$).
Carrying out these moves one coordinate at a time (the case $W_k = V_k$ is the identity step, requiring no appeal to (A5); every intermediate point lies in $\R^m_{>0}$ since each coordinate remains strictly positive) and applying Step~1 at each nontrivial step gives $w(V) = w(W)$.
By the same token, $W_k \leq \tilde V_k$ for every $k$, so the same one-coordinate scale-down argument gives $w(\tilde V) = w(W)$.
Hence, $w(V) = w(W) = w(\tilde V)$, and since $V, \tilde V \in \R^m_{>0}$ were arbitrary, $w$ is constant on $\R^m_{>0}$.
Conversely, a $V$-independent $w$ makes the output independent of $D$, so (A5) holds.

\emph{Step 3: a constant equivariant weighting is uniform.}
If $w \equiv w^0$ is constant, equivariance  ($w_{\sigma(\ell)}(V) = w_\ell(\sigma \cdot V)$)  reads $w^0_{\sigma(\ell)} = w^0_\ell$ for every permutation $\sigma$. 
Taking $\sigma$ to transpose $\ell$ and $\ell'$ gives $w^0_\ell = w^0_{\ell'}$, 
 so all components are equal and $w^0 = (1/m,  \ldots, 1/m)$.
\end{proof}

\begin{remark}[Withdrawal Neutrality Versus Stake Weighting] \label{rem:DWN-tension}
\Cref{thm:characterization-F0}(b) exposes a genuine tension.
Withdrawal neutrality, in its pure single-LP form, is incompatible with  \emph{any} dependence of the trading function on deposit sizes: the only withdrawal-neutral fair rule is the deposit-blind, one-LP-one-vote symmetric centroid.
A designer who instead wants influence to scale with stake, the economically natural reading of LPs as residual claimants, must give up withdrawal neutrality, since reducing one's deposit then reduces one's weight and so moves the pool's exchange rates.
The stake-proportional rule $w_\ell(V) =  V_\ell/V_{\mathrm{tot}}$ is the leading such choice and is the running representative whenever an explicit weighting is needed in the sequel and in \Cref{app:independence}.
It is homogeneity-zero, hence invariant under a common rescaling $D \mapsto \lambda D$, but is not withdrawal-neutral since a single-LP withdrawal changes the ratios $V_\ell/V_{\mathrm{tot}}$.
We emphasize that stake-proportionality is one specific point in the family characterized by \Cref{thm:characterization-F0}(a), used here to fix ideas.
The impossibility of \Cref{thm:impossibility} applies to the \emph{entire} family of continuous equivariant weightings, since the proof goes through the uniform weight $(1/m,  \ldots, 1/m)$  forced by equivariance at any symmetric valuation, not through any feature specific to stake-proportionality.
\end{remark}

\subsection{A Worked Example}\label{sec:F0-example}

A small example illustrates the characterization.
Take $n = 3$ assets, say ETH, USDC, and WBTC, and  $m = 2$ LPs, with LP 1 preferring a Uniswap-style symmetric pool $\alpha_1^* = (1/3, 1/3, 1/3)$ and LP 2 a USDC-heavy pool $\alpha_2^* = (1/5, 3/5, 1/5)$.
Suppose they deposit equally, so $V_1 = V_2$ and the deposit-share weighting gives $w_1 = w_2 = 1/2$.

By \Cref{thm:characterization-F0}, the mechanism's output $\bar\alpha := M(\alpha^*, D)$ satisfies
\[
\bar\alpha_i \propto (\alpha_{1, i}^*)^{1/2} (\alpha_{2, i}^*)^{1/2}, \quad i = 1, 2, 3.
\]
The unnormalized coordinates are $(\alpha_{1,i}^*\alpha_{2,i}^*)^{1/2}$, namely 
we have  $\sqrt{\tfrac13\cdot\tfrac15} = \sqrt{1/15}$ for ETH, $\sqrt{\tfrac13\cdot\tfrac35} = \sqrt{1/5}$ for USDC, and $\sqrt{1/15}$ for WBTC.
The normalizing denominator is their sum;  writing $\sqrt{1/5} = \sqrt{3/15} = \sqrt{3}/\sqrt{15}$ and $2\sqrt{1/15} = 2/\sqrt{15}$,
\[
 2 \sqrt{\tfrac{1}{15}} + \sqrt{\tfrac15} = \frac{2}{\sqrt{15}} + \frac{\sqrt 3}{\sqrt{15}} = \frac{2 + \sqrt{3}}{\sqrt{15}}.
\]
Dividing each numerator by this denominator (the $\sqrt{15}$ cancels),
\[
\bar\alpha = \left( \tfrac{1}{2 + \sqrt{3}},\ \tfrac{\sqrt{3}}{2 + \sqrt{3}},\ \tfrac{1}{2 + \sqrt{3}} \right) \approx (0.268,\, 0.464,\, 0.268).
\]
The output skews toward USDC in the direction of LP 2's preference, but remains symmetric between ETH and WBTC where the LPs agree.

For comparison, the arithmetic mean of the peaks is $\approx (0.267, 0.467, 0.267)$, close to the Aitchison centroid here because the peaks are close in Aitchison distance.
For nearby points the Aitchison centroid approximates the arithmetic mean.
When peaks are far apart the two diverge sharply.
Consider the profile $\alpha_1^* = (0.49, 0.49, 0.02)$ and $\alpha_2^* = (0.02, 0.49, 0.49)$ with equal weights $w_1 = w_2 = 1/2$.
For ETH, USDC, and WBTC, the 
respective unnormalized centroid coordinates are
 \[
\sqrt{0.49 \cdot 0.02} = \sqrt{0.0098} \approx 0.0990, \quad \sqrt{0.49 \cdot 0.49} = 0.4900, \quad \sqrt{0.02 \cdot 0.49} = \sqrt{0.0098} \approx 0.0990.
 \]
 Their sum is $2 \times 0.0990 + 0.4900 \approx 0.6880$, and normalizing yields
\[
\bar\alpha \approx (0.144,\ 0.712,\ 0.144),
\]
a collapse toward USDC, versus the arithmetic mean  $(0.255, 0.490, 0.255)$. 
Averaging in log-ratio coordinates, the centroid penalizes disagreement on small-weight assets more sharply: 
a peak that nearly zeroes an asset pulls the geometric mean toward zero there far more than it moves the arithmetic mean.

\begin{remark}[Reading as a Logarithmic Opinion Pool]\label{rem:logpool-quickread}
Identifying the element $\alpha \in \simplex$ with a probability distribution on the asset set $A$, the formula (\ref{eq:aitchison-centroid}) reads $\bar\alpha_i \propto \prod_{\ell=1}^m (\alpha_{\ell, i}^*)^{w_\ell(V)}$.
This is the \emph{logarithmic opinion pool}, the normalized weighted geometric mean of the LP-distributions, characterized by \citet{genest-1984} as the unique externally Bayesian pool.
Its arithmetic counterpart, the \emph{linear} pool of \citet{mcconway-1981}, is characterized instead by marginalization-consistency, the pooling-side analog of the Acz\'el--Wagner additive-aggregation condition.
The FPW equivalence \citep{frongillo-papireddygari-waggoner-2024} identifies each $f_\alpha \in \F_\Prod$ with a prediction market whose implicit belief is $\alpha$ (the Chen--Pennock log-utility reading, \citealp{chen-pennock-2007}), so that an LP's preferred pool corresponds to that LP's subjective distribution over the $n$ assets.
  The correspondence between the two frameworks is then exact, axiom by axiom:
mechanism IIA $\leftrightarrow$ external Bayesianity,  Pareto $\leftrightarrow$ unanimity, anonymity $\leftrightarrow$ symmetry, continuity $\leftrightarrow$ continuity (\Cref{prop:correspondence}), and the impossibility of \Cref{thm:impossibility} transfers, in \Cref{sec:pooling-transfer}, to a manipulability theorem for externally Bayesian opinion pools.
\end{remark}


\section{The Impossibility of Fair Strategy-Proof Aggregation}\label{sec:impossibility}

On the weighted-product family $\F_\Prod$, the Arrovian core has been characterized as the weighted Aitchison centroid (\Cref{thm:characterization-F0}). 
Among these mechanisms, which are strategy-proof?
The answer is sharply negative: within the characterized centroid family, only the single-LP dictators are strategy-proof, and dictators fail anonymity for $m \geq 2$.
Preferences are Aitchison-single-peaked as set up in  \Cref{sec:lps}.
The Cauchy and strategy-proofness steps are preference-shape independent, and the impossibility extends to any preferences with closed convex Pareto sets in clr-coordinates (\Cref{rem:beyond-single-peaked}).

The idea of the proof is a collision of two geometries on the same space.
The Arrovian core forces the fair rule to be \emph{mean-type}, a weighted average of the LP peaks in log-ratio coordinates (\Cref{thm:characterization-F0}).
Strategy-proofness in the induced Euclidean geometry forces it to be \emph{median-type}, a selection of one LP's report, since any average that genuinely mixes two or more reports can be driven to a manipulator's peak by a single misreport.
A rule cannot be both, so on $\F_\Prod$ none is at once fair and strategy-proof. 
 The collision is sharp rather than incremental, in that no quantitative weakening of either axiom dissolves it: every nondictatorial weight vector is manipulable, with manipulation strict whenever at least two LPs carry nonzero weight.
The two steps are isolated in \Cref{lem:sp-forces-projection} and assembled in \Cref{sec:impossibility-proof}.

\subsection{Statement of the Impossibility}\label{sec:impossibility-statement}

\begin{theorem}[Impossibility Theorem on $\F_\Prod$: Arrovian Fairness and Strategy-Proofness Are Incompatible]\label{thm:impossibility}
Let $n \geq 3$ and $m \geq 2$, and identify \mbox{$\F_\Prod \cong \simplex$} as in \Cref{def:F0}.
No mechanism $M \colon (\simplex \times \R^A_{>0})^m  \to \simplex$ on $\F_\Prod$ satisfies all of (A0)--(A4), (C), and (SP) simultaneously.
\end{theorem}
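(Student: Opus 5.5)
We argue by contradiction. Suppose $M$ satisfies (A0)--(A4), (C) and (SP). By \Cref{thm:characterization-F0}(a), there is a continuous equivariant $w \colon \R^m_{>0} \to \Delta^{m-1}$ with
\[
\clr\bigl(M(\alpha^*, D)\bigr) = \sum_\ell w_\ell(V)\,\clr(\alpha_\ell^*).
\]
We then restrict to a symmetric deposit profile $D_1 = \cdots = D_m$. Its valuation profile is $V^{\mathrm{sym}} = (v, \ldots, v)$, and equivariance forces $w(V^{\mathrm{sym}}) = (1/m, \ldots, 1/m)$, as already recorded in \Cref{rem:weighting-codomain}.

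Since (SP) holds deposits fixed and varies only reports, on this slice the mechanism becomes, through the clr isometry, a linear map $H^m \to H$. Write $x_\ell := \clr(\alpha_\ell^*)$; the map is $(x_1, \ldots, x_m) \mapsto \tfrac1m \sum_\ell x_\ell$. Because $\clr$ is an isometry onto $(H, \|\cdot\|_2)$ and also a bijection $\simplex \to H$, Aitchison (SP) is exactly Euclidean strategy-proofness on $H$. Every misreport $\hat x_1 \in H$ is realizable by some $\hat\alpha_1 \in \simplex$.

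The key step is to exhibit an explicit profitable manipulation, which is the content of \Cref{lem:sp-forces-projection} that we may invoke. For transparency, here is the direct construction. Take LP $1$ with true peak $x_1$, and choose the other peaks so that $y := \tfrac1m \sum_{\ell \geq 2} x_\ell$ satisfies $y \neq \tfrac{m-1}{m} x_1$. Concretely, set $x_1 = z \neq 0$ and $x_\ell = 0$ for $\ell \geq 2$. Truthful reporting yields output $z/m$, at distance $(1 - 1/m)\|z\| > 0$ from the peak since $m \geq 2$. Now let LP $1$ report $\hat x_1 := m z - \sum_{\ell \geq 2} x_\ell = m z$. The output becomes exactly $z$, at distance $0$. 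This strictly violates (SP), and the contradiction proves the theorem.

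The main obstacle is conceptual rather than computational: (SP) quantifies over all deposit profiles, including asymmetric ones, while the characterization only pins the weights to uniform at symmetric $V$. We sidestep this by using the existential nature of a violation, since one profile suffices. Two checks remain. First, the symmetric slice is reachable under (A1): equal deposits $D_\ell$ give equal $V_\ell$. Second, the misreport $m z$ lies in $H$ and hence pulls back to a legitimate element of $\simplex$, by surjectivity of $\clr$. For generality and robustness, one can note that any weight vector with two positive entries $w_1, w_2 > 0$ admits the same manipulation $\hat x_1 = (z - \sum_{\ell \neq 1} w_\ell x_\ell)/w_1$. Only single-LP dictators escape, and these are excluded at $V^{\mathrm{sym}}$ by equivariance, which is the (A4) input.
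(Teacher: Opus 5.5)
Your proposal is correct and follows the paper's proof essentially verbatim. You invoke \Cref{thm:characterization-F0}, use equivariance to pin the weights to $(1/m,\ldots,1/m)$ at a symmetric valuation profile, note that a peak misreport leaves $V$ fixed, and apply the exact-manipulation construction of \Cref{lem:sp-forces-projection}; the only difference is cosmetic, since your test profile (manipulator at $z \neq 0$, all others at $0$, misreport $mz$) replaces the lemma's (manipulator at $0$, one other peak nonzero), and both yield the same perfect manipulation.
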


The proof proceeds in two steps: the Arrovian core forces $M$ to be a weighted Aitchison centroid (\Cref{thm:characterization-F0}), and a direct manipulation argument shows that on the Aitchison metric, a weighted Aitchison centroid is strategy-proof only if it is a single-LP dictator, which violates (A4) for  $m \geq 2$.

\subsection{Strategy-Proofness Forces a Dictator}\label{sec:sp-forces-dictator}

We work in clr-coordinates, where Aitchison-single-peaked  preferences on $\simplex$ become Euclidean-single-peaked preferences on the hyperplane $H = \{x \in \R^n \mid \sum_i x_i = 0\}$.
By \Cref{thm:characterization-F0}, any Arrovian mechanism on $\F_\Prod$ acts on clr-coordinates as
\[
\clr\bigl( M(\alpha^*, D) \bigr) = \sum_{\ell=1}^m w_\ell(V)\, \clr(\alpha_\ell^*),
\]
a weighted arithmetic mean with weights $w(V) \in \Delta^{m-1}$ in the closed simplex.
Aitchison-metric strategy-proofness translates, under $\clr$, to Euclidean-metric strategy-proofness on $H$: no LP should be able to misreport their peak so as to bring the output closer (in Euclidean norm on $H$) to their true peak.

\begin{lemma}\label{lem:sp-forces-projection}
Let $\phi \colon H^m \to H$ be of the form  \mbox{$\phi(x_1, x_2, \ldots, x_m) = \sum_\ell w_\ell x_\ell$} for fixed weights $w \in \Delta^{m-1}$.
Then  $\phi$ is Euclidean-metric strategy-proof on $H$ if and only if $w$ is an indicator: $w_{\ell_0} = 1$ for some $\ell_0$ and $w_\ell = 0$ for $\ell \neq \ell_0$.
\end{lemma}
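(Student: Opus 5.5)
The plan is to prove the two directions separately, with all the work in the ``only if'' direction, which I will do by an explicit manipulation.

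\emph{Sufficiency.} If $w = e_{\ell_0}$, then $\phi(x) = x_{\ell_0}$. Every LP $\ell \neq \ell_0$ has no influence on the output, so (SP) holds trivially for them. LP $\ell_0$, by reporting truthfully, obtains output distance $0$ from their peak, which cannot be improved. Hence $\phi$ is strategy-proof.

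\emph{Necessity.} Suppose $w$ is not an indicator. Since $w \in \Delta^{m-1}$, some LP $\ell_0$ then has $0 < w_{\ell_0} < 1$, and I will let this LP be the manipulator. Fix reports $x_\ell$ for $\ell \neq \ell_0$ and a true peak $x^*_{\ell_0} =: p$, and write $s := \sum_{\ell \neq \ell_0} w_\ell x_\ell \in H$. Truthful reporting gives the output $w_{\ell_0} p + s$, at distance $\|(1-w_{\ell_0})p - s\|$ from $p$. The map $\hat x \mapsto w_{\ell_0}\hat x + s$ is affine and surjective onto $H$ because $w_{\ell_0} > 0$. So the misreport
\[
\hat x := \frac{p - s}{w_{\ell_0}} \in H
\]
yields output exactly $p$, at distance $0$. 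This gives a strict improvement whenever $(1-w_{\ell_0})p \neq s$. To guarantee that, I will choose the profile concretely: take $x_\ell = 0$ for all $\ell \neq \ell_0$, so $s = 0$, and take any $p \neq 0$ in $H$, which exists since $\dim H = n-1 \geq 1$. Then the truthful distance is $(1-w_{\ell_0})\|p\| > 0$, so (SP) fails.

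The only subtle point is where the hypotheses are used. Surjectivity requires $w_{\ell_0} > 0$, and strictness requires $w_{\ell_0} < 1$; a non-indicator $w$ supplies exactly such an $\ell_0$. I will also note that the misreport must lie in $H$, which holds because $p, s \in H$. In the application to \Cref{thm:impossibility}, the misreport is pulled back via $\clr^{-1}$, which is surjective onto $\simplex$, so the manipulation is realizable as a genuine peak.

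I do not expect a real obstacle here. The only care needed is to pick the test profile so that the manipulation is strict, not merely weak.
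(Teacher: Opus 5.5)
Your proof is correct and follows the paper's argument: dictator sufficiency, and the exact misreport $\hat x = (p - s)/w_{\ell_0}$ that drives the output to the manipulator's true peak. The only difference is the mirrored test profile. You set the others' peaks to $0$ and take $p \neq 0$, which uses $w_{\ell_0} < 1$ directly for strictness. The paper sets $x^*_{\ell_0} = 0$ and gives one positively weighted other LP a nonzero peak, which uses $w_{\ell_0} < 1$ to guarantee that such an LP exists.
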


\begin{proof}
For sufficiency, if $w_{\ell_0} = 1$ and the others are zero,  $\phi(x) = x_{\ell_0}$.
LP $\ell_0$ cannot manipulate (truth-telling delivers their peak exactly), and any other LP's report is ignored.
Since the output does not depend on the report of any $\ell \neq \ell_0$, no such LP can change the outcome by misreporting, so they have no profitable deviation either, and (SP) holds for every LP.

For necessity, suppose $0 < w_{\ell_0} < 1$ for  some $\ell_0$.
We show that there exists a profile on which LP $\ell_0$ can manipulate.
Fix a profile $(x_1^*, \ldots,  x_m^*)$ with $x_\ell^* \in H$ for each $\ell$ and $\phi^* \neq x_{\ell_0}^*$.
Such a profile exists. 
 Indeed, choose $x_{\ell_0}^*  := 0$  (which lies in $H$, since $\sum_i 0 = 0$) and pick the remaining peaks so that $\sum_{\ell \neq \ell_0} w_\ell x_\ell^* \neq 0$ (possible since $w_\ell > 0$ for at least one $\ell \neq \ell_0$, as $w_{\ell_0} < 1$; one nonzero vector $v \in H$ suffices, so we may set one such peak to $v$ and the rest to $0$).
The construction needs only $\dim  H \geq 1$, i.e., $n \geq 2$.
The asset-count threshold $n \geq 3$ enters the impossibility solely through the characterization \Cref{thm:characterization-F0} (the multi-pair triangle identity of \Cref{sec:F0-cauchy}), not through the strategy-proofness step.
Under truthful reporting, the output is
\[
\phi^* := \sum_{\ell=1}^m w_\ell\, x_\ell^*.
\]
With $x_{\ell_0}^* = 0$ in our profile, this simplifies to
\[
 \phi^*  =  w_{\ell_0} \cdot 0 + \sum_{\ell \neq \ell_0} w_\ell x_\ell^* = \sum_{\ell \neq \ell_0} w_\ell x_\ell^* \neq 0 = x_{\ell_0}^*,
\]
so $\phi^* \neq x_{\ell_0}^*$ as required, and LP $\ell_0$'s loss $\|\phi^* - x_{\ell_0}^*\|$ is strictly positive.
The fix $x_{\ell_0}^* = 0$ is what makes the construction work: $\sum_{\ell \neq \ell_0} w_\ell x_\ell^* \neq 0$ alone would not guarantee $\phi^* \neq x_{\ell_0}^*$ in general, since the peaks could coincide at a common nonzero point.
Consider LP $\ell_0$'s misreport (the division by $w_{\ell_0}$ that follows is legitimate, since $w_{\ell_0} > 0$ by the hypothesis $0 < w_{\ell_0} < 1$):
\[
\hat x_{\ell_0} := \frac{1}{w_{\ell_0}} \left( x_{\ell_0}^* - \sum_{\ell \neq \ell_0} w_\ell x_\ell^* \right).
\]
This is a valid report: any $\hat x_{\ell_0} \in H$, however large in norm, corresponds via the softmax inverse $\clr^{-1}(x)_i = e^{x_i}/\sum_k e^{x_k}$ to a peak $\hat\alpha_{\ell_0} := \clr^{-1}(\hat x_{\ell_0}) \in \simplex$, since $\clr^{-1}$ is a bijection from all of $H$ onto the open simplex $\simplex$ (the exponentials are strictly positive and sum to one after normalization), and by (A1) every $\hat\alpha_{\ell_0} \in \simplex$ is admissible.
The construction $\hat x_{\ell_0} \in H$ is verified directly: using $x_\ell^* \in H$ for every $\ell$, we get
\[
\sum_i (\hat x_{\ell_0})_i = \tfrac{1}{w_{\ell_0}}\Bigl(\sum_i (x_{\ell_0}^*)_i - \sum_{\ell \neq \ell_0} w_\ell \sum_i (x_\ell^*)_i\Bigr) = \tfrac{1}{w_{\ell_0}}(0 - 0) = 0.
\] 
The deposits $D$ are unchanged by the misreport,  so  $(\hat\alpha_{\ell_0}, \alpha_{-\ell_0}^*, D)$ is a valid input to $M$.
The resulting output is
\[
\hat\phi = w_{\ell_0}  \hat x_{\ell_0} + \sum_{\ell \neq \ell_0} w_\ell x_\ell^* = x_{\ell_0}^*.
\]
LP $\ell_0$'s loss therefore drops from $\|\phi^* -  x_{\ell_0}^*\| > 0$ to $0$, a strict improvement.
In fact, the manipulation is \emph{perfect}: the misreport drives the output exactly to LP $\ell_0$'s true peak, so any LP with $0 < w_{\ell_0} < 1$   can realize its ideal output on every profile at which truthful reporting would fail to do so.
Truth-telling is not weakly dominant, contradicting (SP).
\end{proof}

\begin{remark}[The Manipulating Peak May Be Extreme]\label{rem:manipulation-extreme}
Note that the misreport given by $\hat x_{\ell_0} = w_{\ell_0}^{-1}\bigl(x_{\ell_0}^* - \sum_{\ell \neq \ell_0} w_\ell x_\ell^*\bigr)$ has Euclidean norm that grows like $1/w_{\ell_0}$, so when $w_{\ell_0}$ is small the corresponding peak $\hat\alpha_{\ell_0} = \clr^{-1}(\hat x_{\ell_0})$ has components close to the boundary $\partial \simplex$ (some coordinates near $0$ and others near $1$).
The manipulator's reported peak is thus admissible by (A1), the unrestricted-domain axiom on the \emph{open} simplex, but may sit arbitrarily close to its boundary.
This is the right notion of admissibility for $\F_\Prod$, since trading-function equivalence classes correspond to open-simplex weight vectors, and is also the operative one for the LP context, where reporting an extreme weight is a legitimate strategic option.
\end{remark}

\begin{remark}[Application to $V$-Dependent Weights]\label{rem:sp-V-dependent}
\Cref{lem:sp-forces-projection} is stated for a constant weight vector $w \in \Delta^{m-1}$.
In the proof of \Cref{thm:impossibility}, the weights $w_\ell(V)$ depend on $V$ but not on the LP peaks $\alpha_\ell^*$.
Since a peak misreport leaves the deposit profile $D$ unchanged, the valuation profile $V$ and hence the weights $w_\ell(V)$ remain fixed throughout the misreport.
\Cref{lem:sp-forces-projection} therefore applies verbatim at each fixed $V$ with the constant weight vector $w = w(V)$.
\end{remark}

\begin{remark}[Deposit Misreports and Strategic Redeposits]\label{rem:deposit-misreports}
Under (SP), an LP's only available deviation is to misreport their peak; their deposit is treated as fixed.
This is the right notion in the LP context, since deposits are observable on-chain transfers, not reports.
An LP cannot ``misreport'' a deposit, only choose how much to deposit, which is a participation decision rather than a strategic communication.
The impossibility of \Cref{thm:impossibility} thus already holds under the weakest natural notion of LP strategic behavior, and strengthening (SP) to joint peak-and-deposit deviations would only make it easier to obtain.

A related but distinct strategic act is \emph{redepositing}: an LP withdraws their position and deposits a different amount, thereby changing $V_\ell$ and, when the weighting is $V$-dependent, the realized weight $w_\ell(V)$.
This is not a misreport in the sense of (SP), since the new deposit is again observable on-chain, but it is a genuine strategic lever when the weighting depends on $V$.
A meaningful strategy-proofness notion for redeposits would require a model of the redeposit cost (gas, slippage, opportunity cost), without which the trivial deviation $D_\ell \mapsto \lambda D_\ell$ followed by an instantaneous reversal makes (SP) vacuous on $V$-homogeneous weightings such as the stake-proportional rule.
We treat this as outside the present scope.
The fee-sensitive direction (Q4 of \Cref{sec:open-questions}) is the natural setting in which to formalize it.
\end{remark}

\subsection{Proof of \Cref{thm:impossibility}}\label{sec:impossibility-proof}

\begin{proof}[Proof of \Cref{thm:impossibility}]
Suppose by contradiction that $M$ satisfies (A0)--(A4), (C), and (SP).
By \Cref{thm:characterization-F0}, $M$ is a weighted Aitchison centroid with continuous, 
 equivariant weights \mbox{$w \colon \R^m_{>0} \to \Delta^{m-1}$}.

It suffices to explicitly construct a single profile at which (SP) fails.
 Fix a symmetric valuation profile $V^{\mathrm{sym}} = (v,  \ldots, v)$, realized for instance 
by any equal-deposit profile $D_1 = \ldots = D_m$.
By equivariance of $w$ (\Cref{lem:V-dependent-weights}), 
it holds that  $w(V^{\mathrm{sym}}) = (1/m,  \ldots, 1/m)$.
Since a peak misreport leaves $D$, hence $V$, unchanged, the weight vector remains $(1/m, \ldots, 1/m)$ before and after any misreport, and \Cref{lem:sp-forces-projection} applies verbatim at this fixed weight vector.
Since $1/m \in (0,1)$ for every $\ell$ when $m \geq 2$, the uniform vector is not an indicator, so \Cref{lem:sp-forces-projection} shows the map is manipulable, contradicting (SP).
\end{proof}

\subsection{Interpretation: Mean-Type Versus Median-Type}\label{sec:interpretation}

The impossibility is the AMM-design analog of the joint Arrow and Gibbard--Satterthwaite tension, 
arising from a collision between two types of aggregation rule.
A \emph{mean-type} rule uses the numerical values of the peaks: shifting one LP's peak shifts the output correspondingly.
By contrast, a \emph{median-type} rule uses only the ranking of the peaks: replacing a peak by any other value of the same rank leaves the output unchanged.
A single-LP dictator is the extreme case, ignoring all peaks but one.
The Arrovian core forces mean-type aggregation (\Cref{thm:characterization-F0}), while strategy-proofness forces median-type aggregation (\Cref{lem:sp-forces-projection}), and no rule can be both.
Equivalently, within the centroid family the only strategy-proof rule is a single-LP dictator, which violates anonymity (A4), so the impossibility is the AMM-design instance of Gibbard--Satterthwaite restricted to that family.
The dictator's role here also explains why dropping anonymity reopens the design space: the escape routes catalogued in \Cref{tab:routes} are precisely the axiom removals that admit a compatible rule.


\section{The Two-Asset Case}\label{sec:two-asset}

The impossibility of \Cref{thm:impossibility} is stated for $n \geq 3$ assets,  an assumption that excludes the two-asset constant-product pools that account for the bulk of deployed AMM volume (Uniswap v2/v3 pairs, and two-asset Curve pools).
We now treat the two-asset case directly, and find that the picture differs in a structurally clean way: at $n = 2$ axiom (A3) is vacuous for a one-pair reason, the Arrovian core no longer forces mean-type aggregation, and the core together with (SP) instead characterizes the Moulin generalized medians in the log-ratio coordinate.
Strategy-proofness and Arrovian fairness are therefore compatible at $n = 2$, and the impossibility does not arise.

\subsection{Setup at $n = 2$}\label{sec:two-asset-setup}

We work on $\F_\Prod$ with $n = 2$ and $m \geq 2$, with LP preferences Aitchison-single-peaked on $\simplex$.
The clr map identifies $\simplex$ with the line $H = \{x \in \R^2 \mid x_1 + x_2 = 0\}$, which we coordinatize by $t(\alpha) := \tfrac{1}{2}\log(\alpha_1/\alpha_2) \in \R$, so that $\clr(\alpha) = (t(\alpha), -t(\alpha))$.
Under this coordinate, the Aitchison metric is a positive scalar multiple of the Euclidean metric on $\R$:
\[
\Aitch(\alpha, \beta) = \|\clr(\alpha) - \clr(\beta)\|_2 = \sqrt{2}\,|t(\alpha) - t(\beta)|.
\]
Thus, $t$ is a \emph{similarity} from $(\simplex, \Aitch)$ to $(\R, |\cdot|)$ with ratio $\sqrt{2}$, not literally an isometry; the constant $\sqrt{2}$ is, however, immaterial for strategy-proofness, which depends only on the \emph{ordering} of distances and not on their absolute scale, so the Moulin/Ching characterization on $\R$ with the Euclidean metric applies unchanged to single-peaked preferences on $\R$ with peak $t_\ell^* := t(\alpha_\ell^*)$, the standard domain of \citet{moulin-1980}.
The $t$-coordinate is the log-odds of the two-asset pool, so an LP's peak $t_\ell^*$ is just its preferred log-price-ratio between the two assets, recovering the standard scalar parameter on the two-asset family.

\subsection{Vacuity of (A3) and the Characterization}\label{sec:two-asset-characterization}

\begin{lemma}[Vacuity of (A3) at $n = 2$]\label{lem:A3-vacuous-n-two}
At $n = 2$, axiom (A3) imposes no constraint on $M$ beyond what is implied by the other axioms.
\end{lemma}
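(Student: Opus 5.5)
At $n = 2$ the plan is to show that the information (A3) allows the output's structural ratio to depend on already includes the entire input except the deposits beyond $V$. Then (A3) can only ever restrict how $M$ depends on $D$, and it leaves the dependence on peaks untouched. The proof reduces to recording what the single pair $(1,2)$ carries.

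First, with $A = \{1,2\}$ there is exactly one unordered pair. The two ordered pairs $(1,2)$ and $(2,1)$ satisfy $\rho_{21} = \rho_{12}^{-1}$ both on the inputs and on the output. So (A3) for $(2,1)$ is equivalent to (A3) for $(1,2)$. By the same computation as in \Cref{lem:pair-extraction-F0}, (A3) then amounts to the following requirement: $\rho_{12}(M(\alpha^*, D))$ is a function of $(\rho_{12}(\alpha_\ell^*))_\ell$ and $V$.

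Second, by \Cref{lem:cocycle-rho} with $n = 2$, the map $\alpha \mapsto \rho_{12}(\alpha)$ is a homeomorphism $\simplex \to \R_{>0}$. Equivalently, $t(\alpha) = \tfrac12 \log \rho_{12}(\alpha)$ is a bijection onto $\R$. Hence the $(1,2)$-restricted preference $\succeq_\ell^{(12)}$, which is determined by $\rho_{12}(\alpha_\ell^*)$, determines $\alpha_\ell^*$ completely. Indeed $d_{12}(f_\alpha, f_\beta) = 2|t(\alpha) - t(\beta)| = \sqrt2\,\Aitch(\alpha, \beta)$, so $\succeq_\ell^{(12)}$ coincides with $\succeq_\ell$ itself. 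The pair-restricted profile is therefore the full peak profile, and there is no "other pair" whose data (A3) could forbid $M$ from consulting. Likewise, $\rho_{12}(M)$ determines $M$, so constraining the pricing function is the same as constraining $M$. This rules out the cocycle constraint as well: \Cref{lem:cocycle-psi} needs three distinct indices, so there is no multi-pair identity, matching \Cref{rem:n-three}.

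The main point of care, and the one genuinely substantive residue, is the appearance of $V$ rather than $D$ in (A3). I would state explicitly what (A3) says at $n = 2$: exactly that $M$ depends on $D$ only through $V$, with no constraint on the dependence on peaks. I would then read the lemma's phrase "beyond what is implied by the other axioms" in the sense the paper adopts around \Cref{thm:characterization-F0-n-two}. There, deposit dependence factoring through $V$ is imposed as the modeling hypothesis (V). Under (V), (A3) is implied outright, so the lemma's content is that (A3) adds nothing to the remaining axioms together with (V). I expect this bookkeeping about $V$ versus $D$ to be the only delicate step, since the peak-side vacuity is immediate from the bijection $t$.
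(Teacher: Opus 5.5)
Your argument is correct and is the paper's own: with a single pair, the $(1,2)$-restricted preference is the full preference (your check $d_{12} = \sqrt{2}\,\Aitch$ makes this precise), so (A3) cannot restrict how the output depends on the peaks. Your $V$-versus-$D$ caveat is also right, and it is sharper than the paper's proof, which calls dependence on ``the LP preferences and $V$'' trivial. In fact (A3) still forces deposit dependence to factor through $V$, which (A0), (A1), (A2), (A4), (C) do not imply (e.g.\ centroid weights proportional to $\sum_i D_{\ell,i}$). So the lemma holds only relative to hypothesis (V) of \Cref{thm:characterization-F0-n-two}, exactly as you read it.
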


 \begin{proof}
At $n = 2$ there is one unordered asset pair $\{1, 2\}$, and the $(1,2)$-restricted preference of LP $\ell$ is the full preference of LP $\ell$ on $\F_\Prod$, since a preference on a one-pair family is determined by its single pair-restriction.
(A3) then requires the  $(1,2)$-pricing to depend only on the $(1,2)$-restricted preferences and $V$, which is the trivial statement that the pricing depends on the LP preferences and $V$.
\end{proof}

\begin{theorem}[Characterization on $\F_\Prod$ at $n = 2$]\label{thm:characterization-F0-n-two}
Let $n = 2$ and $m \geq 2$, and consider a mechanism  $M \colon (\simplex \times \R^A_{>0})^m \to \simplex$ satisfying:
\begin{description}
\item[(V) Factoring through $V$.]
The dependence of $M$ on the deposit profile $D$ factors through the valuation profile $V = (\fref(D_1), \fref(D_2), \ldots, \fref(D_m))$.
\end{description}
Then $M$ satisfies the axioms 
(A0), (A1), (A2), (A4), (C), and (SP) if and only if there exist continuous functions $y_1, y_2, \ldots, y_{m-1} \colon \R^m_{>0} \to \R \cup \{-\infty, +\infty\}$, invariant under joint permutation of arguments, such that for every LP profile $(\alpha^*, D)$,
\begin{equation}\label{eq:moulin-median-n-two}
t\bigl(M(\alpha^*, D)\bigr) = \median\bigl( t_1^*, t_2^*, \ldots, t_m^*,\, y_1(V), y_2(V), \ldots, y_{m-1}(V) \bigr),
\end{equation}
where $t_\ell^* = t(\alpha_\ell^*)$ and $\median$ denotes the $m$-th order statistic of the $2m - 1$ extended-real values.
Axiom (A3) is vacuous at $n = 2$ (\Cref{lem:A3-vacuous-n-two}) and may be included or omitted indifferently in the hypotheses.
\end{theorem}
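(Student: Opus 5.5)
The plan is to reduce everything to Moulin's theorem on the line, applied slice by slice in $V$. By (V), write $M(\alpha^*,D) = \tilde M(\alpha^*, V)$. Pass to the coordinate $t$ of \Cref{sec:two-asset-setup} and set $g(t^*;V) := t(\tilde M(\alpha^*,V))$, where $t^* = (t_1^*,\ldots,t_m^*)\in\R^m$. Because $t$ is a similarity of ratio $\sqrt2$, Aitchison-single-peakedness and (SP) on $\simplex$ become Euclidean single-peakedness and (SP) on $\R$. A peak misreport leaves $D$, and hence $V$, fixed. So for each fixed $V$ the map $g(\cdot\,;V)\colon\R^m\to\R$ is a strategy-proof rule on the full single-peaked domain on $\R$; this is the same device as \Cref{rem:sp-V-dependent}. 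By \Cref{lem:pareto-hull}, (A2) says $g(t^*;V)\in[\min_\ell t_\ell^*,\max_\ell t_\ell^*]$. By \Cref{lem:A3-vacuous-n-two}, (A3) plays no role.

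\emph{Necessity.} Fix $V$. The slice $g(\cdot\,;V)$ is strategy-proof and unanimous, since (A2) pins the output when all peaks coincide. By the Moulin theorem in the continuum form of \citet{ching-1997}, it is therefore a min–max rule $g(t^*;V)=\min_{S}\max(\max_{\ell\in S}t_\ell^*,\, b_S(V))$, with extended-real phantom values $b_S(V)$ monotone in $S$. This is the step I expect to be the main obstacle. Moulin's anonymous generalized-median form needs symmetry in $t^*$, but (A4) gives only joint equivariance in $(t^*,V)$. Hence at asymmetric $V$ the slice rule need not be anonymous, and the displayed formula with $m-1$ phantoms would fail. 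I would first try to establish the statement on the symmetric slice $V=(v,\ldots,v)$, where (A4) does give anonymity. There Moulin yields $m+1$ phantoms, and unanimity (A2) removes two of them (pushed to $\pm\infty$), leaving $m-1$ phantoms $y_k(v)$. For general $V$, I would check whether the intended reading of ``invariant under joint permutation'' permits coalition-indexed phantoms. If the theorem's hypotheses are meant to force slice-anonymity, I would look for an argument from (C) together with equivariance. If none exists, I would flag that the stated form requires the weighting to be symmetric in peaks at each $V$.

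Next I would obtain the $y_k$ as continuous functions. On a symmetric slice, each phantom can be recovered from outputs: take $k$ peaks at $+T$ and $m-k$ peaks at $-T$ and let $T\to\infty$. This gives $y_k(V)$ as a monotone limit of continuous functions. I would then argue continuity in the extended-real topology from (C), using that the median is continuous and that the phantoms are determined locally by outputs at large finite $T$. Invariance under joint permutation follows from (A4).

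\emph{Sufficiency.} Take formula \eqref{eq:moulin-median-n-two} with continuous invariant $y_k$. (SP) holds slice-wise by the standard median argument: a misreport moves the output only away from the true peak. (A2) holds because the $m$-th order statistic of $2m-1$ values, $m$ of which are peaks, lies in the peaks' range. (A4) follows from the invariance of the $y_k$. (C) holds because the median is continuous in extended reals and the output is finite by (A2). (A0) holds by varying a unanimous profile, and (A1) is immediate. Converting back through $t^{-1}$ gives $M$.
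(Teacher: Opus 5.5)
Your sufficiency direction is correct and matches the paper's converse. Your slice-by-slice reduction to a continuous, Pareto, strategy-proof rule on the line is also the paper's route. The gap is exactly the one you isolated, and you should stop looking for a repair, because it cannot be closed.

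The paper's proof bridges it by asserting that each slice is ``anonymous at fixed $V$ as a consequence of joint equivariance applied with $V$-symmetric profiles''. But $M(\sigma\cdot t^*,\sigma\cdot V)=M(t^*,V)$ yields $M(\sigma\cdot t^*,V)=M(t^*,V)$ only for permutations with $\sigma\cdot V=V$. Full symmetry in $t^*$ at fixed $V$ therefore holds only on the diagonal $V=(v,\ldots,v)$. No argument from (C) plus equivariance can do better, because the ``only if'' direction is false off the diagonal. For $m=2$ take
\[
t\bigl(M(\alpha^*,D)\bigr)=\min\bigl\{\max(t_1^*,\,V_2-V_1),\ \max(t_2^*,\,V_1-V_2),\ \max(t_1^*,t_2^*)\bigr\}.
\]
This rule satisfies every hypothesis:
\begin{itemize}
\item \emph{(SP).} With LP~2's report fixed, the output is the clamp of $t_1^*$ to $[\min(V_2-V_1,t_2^*),\,\max(t_2^*,V_1-V_2)]$, and symmetrically for LP~2. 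Each LP thus faces a clamp of its own report.
\item \emph{(A2).} The value always lies in $[\min_\ell t_\ell^*,\max_\ell t_\ell^*]$.
\item \emph{(A0), (A1), (C), (V).} It is defined everywhere, continuous, nonconstant, and depends on $D$ only through $V$.
\item \emph{(A4).} It is unchanged when $(t_1^*,V_1)$ and $(t_2^*,V_2)$ are swapped.
\end{itemize}
Yet at $V=(2,1)$ (e.g.\ $D_1=(2,2)$, $D_2=(1,1)$), the peak profiles $t^*=(0,5)$ and $t^*=(5,0)$ give outputs $0$ and $1$. Every rule of the form (\ref{eq:moulin-median-n-two}) is symmetric in $t^*$ at fixed $V$, so this rule has no such representation.

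Three things can actually be proved:
\begin{itemize}
\item \emph{The converse}, as you have it.
\item \emph{The direct part on the diagonal}, as you sketched. On the symmetric single-peaked domain, Ching's theorem is stated with continuity, which (C) supplies, rather than unanimity. Anonymity then gives the $(m+1)$-phantom median, and (A2) sends two phantoms to $\pm\infty$.
\item \emph{A version for all $V$}, in one of two forms. If (A4) is strengthened to require symmetry in peaks at each fixed $V$, the stated form holds: the ordered phantoms are then unique, and joint equivariance forces $y(\sigma\cdot V)=y(V)$. With (A4) as written, you get instead the min--max form $t(M)=\min_{\emptyset\neq S\subseteq\Lc}\max\bigl(\max_{\ell\in S}t_\ell^*,\,b_S(V)\bigr)$. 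Here $b_{\Lc}\equiv-\infty$, the coalition phantoms are monotone in $S$, and they transform as $b_{\sigma^{-1}(S)}(\sigma\cdot V)=b_S(V)$.
\end{itemize}

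A smaller point: a monotone limit of continuous functions is only semicontinuous, so it does not give continuity of the phantoms. Argue instead as follows. At the profile with $k$ peaks at $T$ and $m-k$ at $-T$, the output equals $\median(-T,y_k(V),T)$, which is continuous in $V$ by (C). After an order-homeomorphism of $\R\cup\{-\infty,+\infty\}$ onto a compact interval, this converges to $y_k(V)$ uniformly in $V$, so $y_k$ is continuous.
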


The hypothesis (V) is a modeling choice rather than a theorem, since (A3) is vacuous at $n = 2$ and so cannot force factoring through $V$ as it does on $\F_\Prod$ at $n \geq 3$  (where the same factoring is a consequence of the axioms via (A3)).
We adopt it to enable a uniform comparison with \Cref{thm:characterization-F0}.

\begin{remark}[The Three Roles of Hypothesis (V)]\label{rem:V-three-roles}
The hypothesis that deposit-dependence factors through the valuation profile $V = (\fref(D_\ell))_\ell$ plays three distinct roles across the paper, worth recording in one place.
On $\F_\Prod$ with $n \geq 3$ (\Cref{thm:characterization-F0}): it is a \emph{consequence} of the axioms, built into (A3) via the pairwise extraction argument of \Cref{lem:pair-extraction-F0}, which forces every pairwise rule $\psi_{ij}$ to depend on $D$ only through $V$.
On $\F_\Prod$ at $n = 2$ (present theorem): it is a \emph{modeling choice} (explicitly labeled as hypothesis (V)), since (A3) is vacuous and cannot drive this factoring.
Without it the same Moulin--Ching argument yields phantoms depending on the full $D$ rather than $V$ alone.
On $\F_\Sum$ with $n \geq 3$ (\Cref{thm:characterization-Fgamma}), the situation is the same as at $n = 2$: 
 it is a modeling choice imposed to match the $\F_\Prod$ setting and enable a clean comparison, as noted explicitly there.
The characterization form (generalized medians, weighted Aitchison centroids) is the same in all three cases. 
What changes is whether the phantom or weight parameterization depends on $V$ or on $D$.
\end{remark}

\begin{proof}
The argument has a clean slice-then-thread structure: transport the mechanism to a Euclidean slice via the $t$-coordinate, apply the classical Moulin/Ching characterization, and translate back.

\emph{Converse.}
Given continuous, symmetric extended-real phantoms $(y_k)_{k=1}^{m-1}$, the median of the $m$ peaks and $m - 1$ phantoms is the $m$-th order statistic of $2m - 1$ values, finite (hence in $\R$) because the $m$ peaks are finite, and well-defined as the unique middle value.
The rule satisfies (A0) (varying peaks varies the median),  (A1) (defined everywhere on $(\simplex)^m$), (A2) (the $m$-th order statistic lies in $[\min_\ell t_\ell^*, \max_\ell t_\ell^*]$: a value below $\min_\ell t_\ell^*$ would force all $m$ peaks into the top $m-1$ positions, impossible; symmetrically above), (A4) (symmetry of the median in the peaks and of the phantoms in $V$), and (C) (continuity of the median in the order topology on $\R \cup \{-\infty, +\infty\}$).
Strategy-proofness (SP) holds by \citet[Theorem 1]{moulin-1980}, transported from the order metric to the Euclidean metric on $\R$, which coincide on the order topology and give the same notion of strategy-proofness for symmetric-loss single-peaked preferences.

\emph{Direct part.}
Suppose $M$ satisfies axioms (A0), (A1), (A2), (A4), (C), and (SP), and write \mbox{$\tilde M(t_1, t_2, \ldots, t_m;\, V) := t(M(\clr^{-1}(t_1, -t_1), \clr^{-1}(t_2, -t_2), \ldots, \clr^{-1}(t_m, -t_m);\, D))$} for the transported mechanism, well-defined under the factoring hypothesis because the deposit-dependence reduces to $V$.
The transported mechanism $\tilde M \colon \R^m \times \R^m_{>0} \to \R$ inherits (A0) (nontriviality), (A1) (unrestricted domain on $\R^m \times \R^m_{>0}$), (A2) (Pareto efficiency in the peak interval $[\min_\ell t_\ell^*, \max_\ell t_\ell^*]$, by the bijectivity and monotonicity of $t$), (A4) (joint equivariance of $\tilde M$ in the peaks and $V$), (C) (continuity), and (SP) (Euclidean-metric strategy-proofness on $\R$, since $t$ is a similarity from the Aitchison metric to the Euclidean metric with ratio $\sqrt{2}$ as recorded above, and strategy-proofness is invariant under positive scalar multiples of the metric).

Now fix $V \in \R^m_{>0}$ and consider the slice $\tilde M(\,\cdot\,;\, V) \colon \R^m \to \R$.
Observe   that this slice is continuous, Pareto-efficient (peak-only output in $[\min_\ell t_\ell^*, \max_\ell t_\ell^*]$), strategy-proof on the domain of symmetric-loss single-peaked preferences on $\R$, and anonymous at fixed $V$ as a consequence of joint equivariance applied with $V$-symmetric profiles.
\Citet[Theorem 1]{moulin-1980} (in the continuum-of-alternatives formulation of \citet{ching-1997}) characterizes such rules on $\R$: there exist phantom values \mbox{$y_1(V), y_2(V), \ldots, y_{m-1}(V) \in \R \cup \{-\infty, +\infty\}$}, unique up to relabeling, such that
\[
\tilde M(t_1, t_2, \ldots, t_m;\, V) = \median\bigl( t_1, t_2, \ldots, t_m,\, y_1(V), y_2(V), \ldots, y_{m-1}(V) \bigr)
\]
for every peak tuple $(t_1, t_2, \ldots, t_m) \in \R^m$.
The phantoms vary continuously with $V$ by (C) and are jointly symmetric in $V$ by (A4).
They may sit outside the realized peak range, and the endpoints $\pm\infty$ are permitted (yielding extremal-order-statistic rules such as the minimum or maximum of the peaks).
Translating back via $\clr^{-1}$ yields (\ref{eq:moulin-median-n-two}).
\end{proof}

\subsection{Why the Impossibility Dissolves at $n = 2$}\label{sec:two-asset-why}

The contrast between \Cref{thm:characterization-F0-n-two} and the $n \geq 3$ case isolates axiom (A3) as the load-bearing condition.
At $n = 2$, the cocycle on the simplex has only the antisymmetric relation 
given by $\Psi_{12}(u) + \Psi_{21}(-u) = 0$ available, rather than a multi-pair triangle identity.
The Cauchy step of \Cref{lem:Phi-linear} cannot be set up, and the Arrovian core does not force mean-type form.
Median-type rules then satisfy every axiom including (SP), and the joint Arrow--Gibbard--Satterthwaite tension dissolves.
The manipulation step of \Cref{lem:sp-forces-projection} itself remains valid at $n = 2$ (it requires only $\dim H \geq 1$, equivalently, $n \geq 2$), but at $n = 2$ it has no premise to act on, since the Arrovian core no longer forces a linear aggregator.

\begin{remark}[A Thomsen Strengthening as the Open Direction]\label{rem:thomsen-substitute}
Whether a strengthening of (A3) that remains nontrivial at $n = 2$ can recover the mean-type rigidity, and so recover an impossibility analog at $n = 2$, is open.
The natural candidate is a Thomsen-type cancellation condition in the spirit of additive conjoint measurement \citep{aczel-1966}, requiring the two-asset rule to cohere with a latent multi-pair cocycle structure across deposit profiles.
Informally, a Thomsen condition asks that if two pairs of deposit profiles produce matching pricing data in a cross-cancellation pattern, then a derived third pair must match as well, the substitute the literature uses for the triangle identity when the multi-pair cocycle is unavailable.
Whether such a condition admits a normatively reasonable AMM-side reading, and whether it suffices to force linearity of the aggregator in clr-coordinates, are the substantive questions.
We leave both open.
\end{remark}

\begin{remark}[The Arrovian Core Without (SP) at $n = 2$]\label{rem:n-two-without-SP}
 \Cref{thm:characterization-F0-n-two} adds (SP) to the Arrovian core to obtain a tight characterization.
Without (SP), the conditions (A0)--(A2), (A4), and (C), with (A3) vacuous by \Cref{lem:A3-vacuous-n-two}, constrain $M$ only to be a continuous, anonymous, Pareto-efficient selection from the peak hull in the log-ratio coordinate $t \in \R$.
This is a substantially larger class than the Moulin generalized medians of \Cref{thm:characterization-F0-n-two}: 
 the additional rigidity that picks out the Moulin family at $n = 2$ is supplied entirely by (SP), without which the class lacks a finite parameterization and admits, for example, weighted-Aitchison-centroid rules of the kind characterized at $n \geq 3$.
The contrast is sharp: at $n \geq 3$ on $\F_\Prod$, the Arrovian core alone forces a finite-dimensional family (the Aitchison centroids), while at $n = 2$ it is too weak to do so.
\end{remark}

\begin{remark}[Practical Reading]\label{rem:two-asset-practical}
\Cref{thm:characterization-F0-n-two} is a positive result.
 At $n = 2$, a rule satisfying the full Arrovian core together with strategy-proofness does exist, namely the Moulin generalized median.
However, no deployed two-asset protocol has, to our knowledge, implemented such a rule.
Uniswap v2 and v3 fix the trading function at launch instead, the same route taken at every $n$, which drops nontriviality (A0).
The contrast with the $n \geq 3$ case on $\F_\Prod$ is that the Moulin route is simply not available there:
by \Cref{thm:impossibility}, no rule satisfies the Arrovian core jointly with (SP), so any preference-aggregating mechanism at $n \geq 3$ must drop one of these axioms.
The distinction is that, at $n \geq 3$ on $\F_\Prod$, the Arrovian core forces a \emph{different} family altogether (the Aitchison centroids of \Cref{thm:characterization-F0}), within which strategy-proofness selects only the single-LP dictators.
The Moulin generalized medians, which exist as a distinct family, do not satisfy the $n \geq 3$ Arrovian core to begin with (they violate (A3) in the multi-pair sense), so they are not even candidates there.
At $n = 2$ the two families collapse to one, the multi-pair distinction has nowhere to live, and the medians are admissible.
\end{remark}


\section{Scope of the Impossibility on Related Spaces}\label{sec:scope}

Having established the impossibility on $\F_\Prod$ at $n \geq 3$  and the compatible characterization at $n = 2$, we now mark the impossibility's boundaries on the remaining design spaces.
On the symmetric CEMM family $\F_\Sum$ the conflict disappears entirely, because mechanism IIA goes vacuous on a one-parameter family; on the union $\U = \F_\Prod \cup \F_\Sum$ the impossibility survives on the $\F_\Prod$-arm but dissolves on the $\F_\Sum$-arm, made precise through a regime-wise structural decomposition.
The constructions below are not used in proving either headline theorem.
They serve only to locate the impossibility within the larger design space.
The key contrast: on $\F_\Prod$ a pair-restriction is lossy and (A3) does real work, whereas on $\F_\Sum$ a pair-restriction recovers the whole design parameter and (A3) says nothing.
The dimension of the design space is what governs this asymmetry: $\F_\Prod$ has $n - 1$ free parameters, while $\F_\Sum$ has just one, regardless of $n$.

\subsection{The Symmetric CEMM Family}\label{sec:Fgamma-characterization}

We consider $\F_\Sum = \{f_\gamma \mid \gamma \in (-\infty, 1]\}$ with $n \geq 3$, $m \geq 2$, and LP preferences single-peaked in $\dSum$.
On this family, axiom (A3) is essentially vacuous, since $\F_\Sum$ is one-parameter and any pair-restriction is injective: knowing an LP's preferred $(i,j)$-pricing exponent $1 - \gamma_\ell^*$ recovers $\gamma_\ell^*$ and hence the full preference, so (A3) adds nothing beyond the requirement that the output depend on LP preferences and $V$.

\begin{lemma}[Vacuity of (A3) on $\F_\Sum$]\label{lem:A3-vacuous-Fgamma}
For each pair $i, j \in A$ with $i \neq j$, the map sending $\gamma \in (-\infty, 1]$ to the $(i,j)$-pricing function $I \mapsto (I_j/I_i)^{1 - \gamma}$ is injective and continuous.
An LP's preference on $\F_\Sum$ is therefore fully determined by its $(i,j)$-restriction for any single pair $(i,j)$.
Consequently, (A3) imposes no constraint on $M$ beyond what is implied by the other axioms.
\end{lemma}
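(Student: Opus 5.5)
The plan has three parts. First show that the $(i,j)$-pricing map is continuous and injective. Then use injectivity to recover each LP's full preference from any single pair-restriction. Finally, repeat the argument of \Cref{lem:A3-vacuous-n-two} to conclude that (A3) adds no constraint.

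\emph{Injectivity and continuity.} Fix distinct $i,j$ and write $\pi_{ij}(\gamma)$ for the function $I \mapsto (I_j/I_i)^{1-\gamma}$ on $\mathbb{P}^A_{>0}$. This function is well defined on rays because the expression is $0$-homogeneous in $I$. For injectivity, suppose $\pi_{ij}(\gamma)=\pi_{ij}(\gamma')$. Evaluate both at an inventory with $I_j/I_i = e$; this is possible because $n\ge 2$ and $I$ ranges over the full orthant. We get $e^{1-\gamma}=e^{1-\gamma'}$, hence $\gamma=\gamma'$.

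For continuity, the map $(\gamma,I)\mapsto (I_j/I_i)^{1-\gamma}=\exp\bigl((1-\gamma)\log(I_j/I_i)\bigr)$ is jointly smooth. So $\gamma\mapsto\pi_{ij}(\gamma)$ is continuous into the topology of $C^2$ convergence on compact subsets. The only care needed is to check uniformity of the derivatives in $I$ up to order two on compacts, which follows from the smoothness of the composite in both variables.

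\emph{Preferences recovered from one pair.} By the definition given before (A3), LP $\ell$'s $(i,j)$-restricted preference on $\F_\Sum$ ranks $f_\gamma$ by $d_{ij}(f_\gamma,\theta_\ell^*)=\dSum(\gamma,\gamma_\ell^*)$. This is exactly the full single-peaked preference of \Cref{def:profile} with $d=\dSum$. Even without that identity, injectivity shows that the preferred $(i,j)$-pricing function $\pi_{ij}(\gamma_\ell^*)$ determines $\gamma_\ell^*$, and $\gamma_\ell^*$ determines $\succeq_\ell$. Hence, for each pair, the tuple $(\succeq_\ell^{(ij)})_\ell$ determines, and is determined by, the full preference profile.

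\emph{Vacuity of (A3).} (A3) asks that $I\mapsto p_{ij}(M(\theta^*,D),I)$ be a function of $(\succeq_\ell^{(ij)})_\ell$ and $V$. By the previous step, this is the same as being a function of $\theta^*$ and $V$, mirroring the one-pair argument of \Cref{lem:A3-vacuous-n-two}.

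\emph{Main obstacle.} The one subtle point is the residual dependence on $V$. Strictly speaking, (A3) still forces $D$ to enter only through $V$, so "no constraint beyond the other axioms" holds only once deposit-dependence is taken to factor through $V$. I would make this explicit by invoking hypothesis (V), as in \Cref{rem:V-three-roles}. Under (V), (A3) is exactly the tautology that the output depends on the reported peaks and $V$, so it excludes no mechanism that the remaining axioms admit.
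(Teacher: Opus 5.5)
Your proposal is correct and follows the paper's argument: injectivity of $\gamma \mapsto (I_j/I_i)^{1-\gamma}$ lets any single pair-restriction recover $\gamma_\ell^*$ and hence $\succeq_\ell$, so (A3) reduces to the tautology that the output depends on the peaks and $V$. Your closing caveat is also right, and it is a point the paper's own proof passes over: as stated, (A3) still forces $D$ to enter only through $V$ (because the pricing function determines $\gamma$), so the vacuity claim is literally true only relative to hypothesis (V), which is how the lemma is actually used in \Cref{thm:characterization-Fgamma}.
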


\begin{proof}
Two distinct $\gamma_1, \gamma_2 \in (-\infty, 1]$ give pricing functions $(I_j/I_i)^{1-\gamma_1}$ and $(I_j/I_i)^{1-\gamma_2}$, which differ on any open subset of the inventory-ratio space $\R_{>0}$.
Injectivity follows.
Continuity in the $C^2$ topology is immediate.
Hence, the $(i,j)$-restriction of an LP's preference (a single-peaked preference on pricing functions, induced by a peak $\gamma_\ell^*$) determines $\gamma_\ell^*$ uniquely.
Knowing $\gamma_\ell^*$ determines the full preference, so (A3) is tautologous on  $\F_\Sum$.
\end{proof}

The characterization on $\F_\Sum$ reduces essentially to Moulin's classical theorem on single-peaked preferences over a one-dimensional space.
 Moulin's theorem characterizes the \emph{strategy-proof}, anonymous, peak-only rules, so strategy-proofness must appear among the hypotheses.
This is not a defect but the substantive point: on $\F_\Sum$ the Arrovian core is consistent with strategy-proofness, the generalized medians being witnesses, whereas on $\F_\Prod$ no rule reconciles the two (\Cref{thm:impossibility}).

On $\F_\Sum$ we restrict attention to mechanisms whose deposit dependence factors through the valuation profile  $V = (\fref(D_1), \fref(D_2), \ldots, \fref(D_m))$, exactly as on $\F_\Prod$.
On $\F_\Prod$ this factoring follows from the axioms, being built into (A3), whereas on $\F_\Sum$ it is a modeling choice rather than a theorem, a difference we make explicit rather than suppress.

\begin{theorem}[Characterization on $\F_\Sum$]\label{thm:characterization-Fgamma}
Let $n \geq 3$ and $m \geq 2$, and consider a mechanism $M_\gamma \colon ((-\infty, 1] \times \R^A_{>0})^m \to (-\infty, 1]$ satisfying:
\begin{description}
\item[(V) Factoring through $V$.]
The dependence of $M_\gamma$ on the deposit profile $D$ factors through the valuation profile $V = (\fref(D_1), \fref(D_2), \ldots, \fref(D_m))$.
\end{description}
Then $M_\gamma$ satisfies (A0)--(A4), (C), and (SP) if and only if there exist continuous functions $y_1, y_2, \ldots, y_{m-1} \colon \R_{>0}^m \to [-\infty, 1]$, invariant under joint permutation of arguments, such that
\begin{equation}\label{eq:moulin-median}
M_\gamma(\gamma_1^*, \gamma_2^*, \ldots, \gamma_m^*;\, V)  =  \median\bigl( \gamma_1^*, \gamma_2^*, \ldots, \gamma_m^*,\, y_1(V), y_2(V), \ldots, y_{m-1}(V) \bigr),
\end{equation}
where $\median$ denotes the $m$-th order  statistic of the $2m-1$ extended-real values.
(The phantoms are valued in the one-sided compactification $[-\infty, 1]$, in contrast to the two-sided compactification $\R \cup \{-\infty, +\infty\}$ that arises in \Cref{thm:characterization-F0-n-two}; the asymmetry is explained after the theorem.)
\end{theorem}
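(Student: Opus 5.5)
The plan mirrors the proof of \Cref{thm:characterization-F0-n-two}, transporting the problem to a one-dimensional Euclidean slice and invoking Moulin/Ching. The main difference is that the parameter space $(-\infty,1]$ is closed on the right and open on the left. The natural coordinate is $s := \arctan\gamma \in (-\pi/2, \pi/4]$. In this coordinate $\dSum$ is exactly the Euclidean metric, so single-peaked preferences in $\dSum$ become symmetric-loss single-peaked preferences on the interval $(-\pi/2,\pi/4]$ with peaks $s_\ell^* = \arctan\gamma_\ell^*$. By \Cref{lem:A3-vacuous-Fgamma}, axiom (A3) is tautologous, so it can be dropped from both directions. Hypothesis (V) lets us write the transported mechanism as $\tilde M(s_1,\ldots,s_m;V)$.

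\emph{Converse.} Given continuous, jointly symmetric phantoms $y_k \colon \R^m_{>0} \to [-\infty,1]$, the $m$-th order statistic of $2m-1$ values lies between the minimum and maximum of the peaks. It is therefore finite and lies in $(-\infty,1]$, even when some phantoms equal $-\infty$. Moreover, $\gamma=1$ is attainable as a phantom, because $1$ belongs to the space itself. We then verify the axioms one by one, exactly as at $n=2$:
\begin{itemize}
\item (A0): varying a peak moves the median.
\item (A1): the rule is defined on every profile.
\item (A2): the median lies in the peak interval, and by one-dimensionality the Pareto set of $\dSum$-single-peaked preferences is exactly $[\min_\ell\gamma_\ell^*,\max_\ell\gamma_\ell^*]$, via the same projection argument as in \Cref{lem:pareto-hull}.
\item (A4): follows from symmetry of the median together with joint symmetry of the phantoms.
\item (C): continuity of order statistics in the order topology of $[-\infty,1]$, composed with continuity of the $y_k$.
\item (SP): Moulin's theorem, since $\arctan$ is an order isomorphism and $\dSum$ is Euclidean in $s$.
\end{itemize}

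\emph{Direct part.} Fix $V$ and restrict to the slice $\tilde M(\cdot\,;V)$. This slice is strategy-proof on the interval, Pareto efficient (so peak-only in the hull), and continuous. It is also anonymous, because (A4) together with joint equivariance gives symmetry of the slice in the peaks at every $V$. This step needs care: joint equivariance alone only yields symmetry at symmetric $V$. We first show that phantoms exist for each fixed $V$ as a generalized median on the ordered set, and then use (A4) to permute phantoms rather than peaks. If the slice fails to be anonymous at asymmetric $V$, the fallback is Moulin's nonanonymous characterization with $2^m$ phantoms, followed by showing that the symmetric form is what the statement intends.

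Ching's continuum formulation applies to an interval of $\R$. To handle the endpoints, phantoms are allowed in the closure $[-\pi/2,\pi/4]$, which pulls back to $[-\infty,1]$. A phantom at the closed end $\pi/4$ is a genuine point, while a phantom at $-\pi/2$ corresponds to $-\infty$. No phantom beyond $\pi/4$ is ever needed, since any phantom value $\ge \pi/4$ acts identically to $\pi/4$ on peaks in $(-\pi/2,\pi/4]$. This explains the one-sided compactification. Continuity of $V\mapsto y_k(V)$ comes from (C): each phantom is recovered from the rule, for example by letting all but one peak tend to the endpoints and reading off the output, so it is a limit of continuous functions. Showing that this limit is actually continuous in $V$ is the main obstacle. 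I would try identifying each $y_k(V)$ directly as a value of $\tilde M$ at an extreme profile: place $k$ peaks at $\pi/4$ and $m-k$ peaks at a point $s\to-\pi/2$. This gives $y$ as a monotone limit, and continuity in $V$ then follows from uniformity furnished by the Pareto bound.
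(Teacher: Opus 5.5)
Your outline follows the paper's own route: transport by $h=\arctan$, apply Moulin--Ching slice by slice, put the phantoms in $[-\pi/2,\pi/4]$, and pull back by $\tan$. The converse is fine. The direct part, however, has a genuine gap at the anonymity step, and it cannot be repaired. Your claim that (A4) ``gives symmetry of the slice in the peaks at every $V$'' is false, as your next sentence half-concedes. (A4) is \emph{joint} anonymity: it relates the slice at $V$ to the slice at $\sigma\cdot V$, and it forces peak-symmetry of a single slice only when $\sigma\cdot V=V$. Your proposed repairs do not close this:
\begin{itemize}
\item ``Using (A4) to permute phantoms'' presupposes that the slice already has an anonymous $(m-1)$-phantom representation, which is exactly what is in question.
\item The nonanonymous fallback cannot lead back to the symmetric form, because the direct implication is false.
\end{itemize}
Here is a counterexample. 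Take $m=2$, $b(v,w):=\tfrac12\,(v-w)/(v+w)$, and
\[
M_\gamma(\gamma_1^*,\gamma_2^*;V):=\min\bigl\{\max(b(V_1,V_2),\gamma_1^*),\ \max(b(V_2,V_1),\gamma_2^*),\ \max(\gamma_1^*,\gamma_2^*)\bigr\}.
\]
\begin{itemize}
\item (A4): swapping $(\gamma_1^*,V_1)\leftrightarrow(\gamma_2^*,V_2)$ swaps the first two terms of the minimum.
\item (A0), (A1), (C) and (V) are immediate, and (A3) is vacuous.
\item (A2): the output lies in $[\min_\ell\gamma_\ell^*,\max_\ell\gamma_\ell^*]$.
\item (SP): fix $V$ and $\gamma_2^*$. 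LP~1's report $x$ yields $\min\{\max(L,x),K\}$, where $L=\min(b(V_1,V_2),\gamma_2^*)\le\gamma_2^*\le K=\max(b(V_2,V_1),\gamma_2^*)$. This is the projection of $x$ onto $[L,K]$, so truth-telling attains the $\dSum$-nearest achievable outcome. The same holds for LP~2.
\end{itemize}
Yet at $V=(1,3)$ we get $M_\gamma(-1,1;V)=-\tfrac14$ but $M_\gamma(1,-1;V)=\tfrac14$. Every rule of the form (\ref{eq:moulin-median}) is symmetric in the peaks at fixed $V$, so this rule is not of that form.

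The paper's proof makes exactly the same unjustified step (``By (A4), $M_\gamma$ is symmetric in its arguments''). So this is a defect of the statement, not a divergence of your route from the paper's. Under joint anonymity, what Ching's theorem delivers slice by slice is the nonanonymous augmented-median form $\min_{S\subseteq\Lc}\max\bigl(a_S(V),\max_{\ell\in S}\gamma_\ell^*\bigr)$. Pareto pins $a_\emptyset$ and $a_{\Lc}$ to the endpoints, and (A4) only makes the coalition parameters jointly equivariant under relabelling of $\Lc$ together with $V$. The symmetric $(m-1)$-phantom form requires the extra hypothesis that each slice $M_\gamma(\cdot\,;V)$ be symmetric in the peaks.

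Under that extra hypothesis your argument goes through, and the continuity step you flag as the main obstacle is easy. Put $k$ peaks at $1$ and $m-k$ peaks at $s$; the output is then $\max(s,y_k(V))$ for the sorted phantoms, which is continuous in $V$ for each fixed $s$ by (C). Near any $V_0$ with $y_k(V_0)>s$ this map coincides with $y_k$, and if $y_k(V_0)=-\infty$ you let $s$ decrease. So $y_k$ is continuous without any uniformity argument.
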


We call the functions $y_k$ the \emph{phantoms} of the rule, and rules of the form (\ref{eq:moulin-median}) the \emph{generalized median rules}.
 Several features of the representation are worth recording.

\emph{Codomain of the phantoms.} 
The phantoms are valued in the extended peak interval $[-\infty, 1]$, the peak interval $(-\infty, 1]$  with its open end compactified by the point $-\infty$ in the order topology. 
Only phantoms may take the value $-\infty$.
The mechanism's output, being the median of $2m-1$ values among which $m$ are finite peaks, is always finite, i.e., in $(-\infty, 1]$.
The compactification is \emph{one-sided} because the design space $(-\infty, 1]$ is closed at the upper end and open at the lower end.
Only the open end requires a limit point.
This contrasts with the two-asset case \Cref{thm:characterization-F0-n-two}, where the $t$-coordinate ranges over all of $\R$ and phantoms are valued in the two-sided compactification $\R \cup \{-\infty, +\infty\}$.

\emph{Pareto efficiency is automatic.} 
Pareto efficiency (A2) is not a restriction on the phantoms: 
the $m$-th order statistic of $2m-1$ values, $m$ of which are the peaks, always lies in $[\min_\ell \gamma_\ell^*, \max_\ell \gamma_\ell^*]$, since a value below $\min_\ell \gamma_\ell^*$  would force all $m$ peaks into the top $m-1$ positions (impossible), and symmetrically above.
This is in sharp contrast to the $\F_\Prod$ case, where (A2)  does substantive work in pinning down the centroid weights (\Cref{lem:V-dependent-weights}).

\emph{No constraint between phantoms and peaks.} 
The phantoms are unconstrained by the peaks.
In particular, they may sit outside the realized peak range, as in the $c$-anchored rule of \Cref{def:canonical-Fgamma}. 
A phantom at the endpoint  $-\infty$ yields a lower-order-statistic rule (e.g., minimum-of-peaks), which is admissible.
The full peak range, in fact, may be either above or below all phantoms, in which case the rule reduces to an order-statistic of the peaks alone.

\begin{proof}
Unlike \Cref{thm:characterization-F0}, where deposit dependence factors through $V$ as a consequence of the axioms via (A3), here that factoring is an explicit assumption rather than a theorem, since (A3) is vacuous on $\F_\Sum$ (\Cref{lem:A3-vacuous-Fgamma}).
The analogous derivation therefore does not go through, and the factoring is a modeling choice.
We adopt it to match the $\F_\Prod$ setting and to enable a clean correspondence between the two characterizations.
This asymmetry is harmless for the present argument (the phantoms simply absorb the resulting $V$-dependence), but it should be kept in mind when comparing the two results.
Without the factoring hypothesis, the same Moulin--Ching argument still yields the generalized-median form, but with phantoms depending on the full deposit profile $D$ rather than on $V$ alone. 

\emph{Direct implication: axioms force the form.}
Suppose $M_\gamma$ satisfies (A0)--(A4), (C), and (SP), and assume the factoring hypothesis stated just before the theorem (the dependence of $M_\gamma$ on $D$ factors through $V$).
By \Cref{lem:A3-vacuous-Fgamma}, axiom (A3) plays no role.
For each fixed $V$, consider $M_\gamma$ as a function of $m$ peaks alone.
By (A2), the output lies in the Pareto-undominated set, which on a one-dimensional space with single-peaked preferences is the interval $[\min_\ell \gamma_\ell^*, \max_\ell \gamma_\ell^*]$.
By (A4), $M_\gamma$ is symmetric in its arguments; by (C), continuous.
Since preferences on $\F_\Sum$ are parameterized by their peaks (\Cref{sec:lps}) and the deposit dependence factors through $V$ by hypothesis, the dependence of $M_\gamma$ on the LP profile $\theta^*$ is, after fixing $V$, a function of the peak tuple $(\gamma_1^*, \gamma_2^*, \ldots, \gamma_m^*)$ alone.

We apply the homeomorphism $h :=  \arctan \colon (-\infty, 1] \to (-\pi/2, \pi/4]$; 
 under $h$, the metric $\dSum$ corresponds to Euclidean distance, and $h$ carries single-peaked preferences to single-peaked preferences and peaks to peaks.
Write $\tilde M := h \circ M_\gamma  \circ (h^{-1}, h^{-1}, \ldots, h^{-1})$ for the transported mechanism on the half-open interval $(-\pi/2, \pi/4]^m$; it inherits (A0)--(A4) and (C), and it inherits (SP) as Euclidean-metric strategy-proofness, since strategy-proofness is an ordinal property of single-peaked preferences and $h$ preserves the betweenness order on the line.
Thus, at each fixed $V$, $\tilde M(\,\cdot\,;V)$ is a continuous, anonymous, peak-only, strategy-proof rule on the single-peaked domain over the interval $(-\pi/2, \pi/4]$.
Ching's characterization of such rules on an arbitrary (possibly half-open or unbounded) real interval \citep{ching-1997} is the continuum-of-alternatives form of Moulin's theorem \citep[Theorem 1]{moulin-1980}, \citep[Ch.~10]{moulin-axioms-of-cooperative-decision-making}. 
By it, there exist phantoms $\tilde y_1(V) \leq \cdots \leq \tilde y_{m-1}(V)$ in the compactified interval $[-\pi/2, \pi/4]$ such that
\[
\tilde M(t_1, t_2, \ldots, t_m; V) = \median\bigl(t_1, t_2, \ldots, t_m,\, \tilde y_1(V), \tilde y_2(V), \ldots, \tilde y_{m-1}(V)\bigr)
\]
for all peak-tuples $t \in (-\pi/2, \pi/4]^m$, the median being the $m$-th order statistic of the $2m-1$ extended-real values.
The phantoms vary continuously with $V$ by (C) and are equivariant under joint permutation by (A4); they are not constrained to lie in the peak range, and the left endpoint $-\pi/2$ is permitted (it yields lower-order-statistic rules such as $\min_\ell$). 
Pareto efficiency (A2) is automatic for every choice of phantoms,  by the order-statistic argument given after the theorem statement, so it imposes no condition on $(\tilde y_k)$.

Translating back via $h^{-1} = \tan$ gives (\ref{eq:moulin-median}) on $(-\infty, 1]$: 
 the map \mbox{$\tan \colon [-\pi/2, \pi/4] \to [-\infty, 1]$} 
  (extended by setting $\tan(-\pi/2) := -\infty$)
 is an order isomorphism, each phantom $\tilde y_k(V)$ maps to $y_k(V) := \tan(\tilde y_k(V))$, and the median commutes with $\tan$.
The output is always a finite value in $(-\infty,1]$ because the $m$ (finite) peaks pin the $m$-th order statistic into their range. 
This completes the direct part.

\emph{Converse implication: the form satisfies the axioms.}
Given continuous, symmetric extended-real phantoms  $(y_k)_{k=1}^{m-1}$, the median of $m$ LP peaks together with $m-1$ phantoms is well-defined since the multiset has $2m - 1$ entries, an odd number, with a unique $m$-th order statistic, and it is finite (hence in $(-\infty,1]$) because at least the $m$ peaks are finite.
 This rule satisfies (A0) (varying LP peaks changes the median), (A1) (defined everywhere), (A2) (the $m$-th order statistic lies in $[\min_\ell \gamma_\ell^*, \max_\ell \gamma_\ell^*]$ by the order-statistic argument above, for \emph{any} phantom values), (A4) (by symmetry of the median and of the phantoms in $V$), and (C) (continuity of the median operation, including at a $-\infty$ phantom, in the order topology).
Axiom (A3) holds vacuously by \Cref{lem:A3-vacuous-Fgamma}.
Finally (SP) holds because the generalized median is strategy-proof in the order metric on the line by \citet[Theorem 1]{moulin-1980}, and this transports verbatim to $\dSum$ via the homeomorphism $h = \arctan \colon (-\infty, 1] \to (-\pi/2, \pi/4]$, under which $\dSum$ is the Euclidean metric and (SP) is the ordinal betweenness property $h$ preserves.
\end{proof}

\begin{remark}[Why (A3) Is Vacuous in Retrospect]  \label{rem:Fgamma-A3-predictable}
Whether (A3) is nonvacuous depends on whether pair-restrictions are lossy.
On $\F_\Prod$ with $n \geq 3$ the $(i,j)$-restriction sees only $\rho_{ij}$ and loses cross-pair information, so (A3) is the operative constraint; on $\F_\Sum$ any single pair-restriction recovers the entire $\gamma$, so (A3) is silent, and this contrast is at the heart of the impossibility argument.
The cross-pair information lost by a single $(i,j)$-restriction on $\F_\Prod$ is precisely the multiplicative cocycle of structural price ratios, whose Cauchy-type functional equation drives \Cref{thm:characterization-F0}; on $\F_\Sum$ that cocycle collapses to the scalar $\gamma$ recovered from any single pair, leaving (A3) with no degree of freedom to constrain.
The phantoms $y_k(V)$ satisfy withdrawal neutrality (A5) precisely when they are $V$-independent (the rule deposit-blind), the $\F_\Sum$ analog of \Cref{thm:characterization-F0}(b): each phantom is recovered from the rule's output at an extreme peak profile, so a single-LP withdrawal that moved any phantom would change the rule, and only $V$-independent phantoms escape this detection.
\end{remark}

\subsection{The Canonical Choice} \label{sec:Fgamma-canonical}

Among the family (\ref{eq:moulin-median}),  a canonical choice is to place all phantoms at $\gamma = 0 = c$.
The $c$-anchored variant below is the natural symmetric representative.

\begin{definition}\label{def:canonical-Fgamma}
The \emph{$c$-anchored median rule} on  $\F_\Sum$ is
\[
M_\gamma^{\mathrm{can}}(\gamma^*)  :=  \median\bigl( \gamma_1^*, \gamma_2^*, \ldots, \gamma_m^*,\, \underbrace{0, 0, \ldots, 0}_{m - 1}  \bigr).
\]
\end{definition}

This rule satisfies (A0)--(A4), (C), (SP), and (A5),  and reappears in the union analysis of \Cref{rem:union}.
Pareto efficiency (A2) holds by the order-statistic argument following \Cref{thm:characterization-Fgamma} (the median of the $m$ peaks and $m-1$ phantoms lies in $[\min_\ell \gamma_\ell^*, \max_\ell \gamma_\ell^*]$ for any phantoms), and illustrates that a phantom at $0$ need not lie in the realized peak range: 
 if all peaks are strictly positive the output is the smallest peak, not $0$, because the $m-1$ phantoms at $0$ are all outranked by the $m$ positive peaks and the median selects the lowest among the latter.

\begin{remark}[Consistency of (SP) with the Arrovian Core on $\F_\Sum$]\label{rem:Fgamma-SP-included}
The strategy-proofness sub-claim of \Cref{thm:characterization-Fgamma} reduces to Moulin's theorem on the line: 
the generalized median is strategy-proof under the order metric \citep[Theorem 1]{moulin-1980},  transported across $h = \arctan$, under which $\dSum$  is Euclidean distance and (SP) is the ordinal betweenness property $h$ preserves.
The class is nonempty (the $c$-anchored rule), so the impossibility of \Cref{sec:impossibility} does not arise on $\F_\Sum$.
The contrast with $\F_\Prod$ is not an artifact of the metric.
(SP) is metric-specific (on $\F_\Sum$ it refers to  $\dSum$, on $\F_\Prod$ to the Aitchison metric), 
 but no metric on $\F_\Prod$ rescues the impossibility.
The manipulation of \Cref{lem:sp-forces-projection} is purely linear-algebraic,  an exact misreport driving a linear aggregator on \mbox{$H \cong \R^{n-1}$} to the manipulator's peak, valid for any norm-induced metric via clr.
The key point is that the misreport drives the output distance to \emph{exactly zero} (the output lands at the manipulator's true peak): for any metric $d$ that is even marginally sensitive to the peak, the manipulator's distance drops from $d(M(\theta^*, D), \theta_{\ell_0}^*) > 0$ to $d(\theta_{\ell_0}^*, \theta_{\ell_0}^*) = 0$, so the manipulation is profitable under any metric whatsoever, not just the Euclidean one.
  What distinguishes $\F_\Sum$ is one-dimensionality, which makes (A3) vacuous, not a softer metric.
\end{remark}

 \subsection{Extending the Impossibility}\label{sec:impossibility-extensions}

We close this section by recording where the impossibility extends.
It does not arise on $\F_\Sum$ or on $\F_\Prod$ at $n = 2$, where (A3) is vacuous in each case, and on the union $\U$ it survives only on the $\F_\Prod$-arm with $n \geq 3$ (\Cref{thm:impossibility}).
It is thus concentrated precisely where (A3) does substantive work, namely on multi-asset weighted-product design spaces where pair-restrictions are genuinely lossy.

\begin{remark}[The Union $\U = \F_\Prod \cup \F_\Sum$]\label{rem:union}
On the union $\U = \F_\Prod \cup \F_\Sum$, glued at the symmetric product $c$ and given the wedge metric (restricting to $\Aitch$ on $\F_\Prod$ and to $\dSum$ on $\F_\Sum$), the impossibility survives on the $\F_\Prod$-arm and dissolves on the $\F_\Sum$-arm.
We sketch the argument as a regime-wise reading of the two characterizations.
 At any profile with all peaks in the relative interior of  $\F_\Prod$, Pareto efficiency (A2) confines the output to the Aitchison hull of the peaks, which lies in $\F_\Prod$.
On a neighborhood of such profiles, the mechanism factors through $\F_\Prod$ and \Cref{thm:characterization-F0} forces a weighted Aitchison centroid.
Symmetrically, at profiles with all peaks in the interior of $\F_\Sum$, the output is forced into $\F_\Sum$ and \Cref{thm:characterization-Fgamma} gives a generalized median.
Continuity (C) glues the two regimes at $c$, where the symmetric centroid and the symmetric median agree.
A full proof requires tracking the mechanism's behavior on mixed profiles (peaks distributed across both arms), which we do not pursue here.
Adding (SP) kills the $\F_\Prod$-regime by \Cref{thm:impossibility} but leaves the $\F_\Sum$-regime intact (\Cref{rem:Fgamma-SP-included}); the local asymmetry of $\U$ near $c$ (an $(n-1)$-disk wedged to a segment) is what lets the regimes differ.
\end{remark}

The impossibility passes through two logically independent pressure points,  \emph{linearity in clr-coordinates} (forced by the Arrovian core via the cocycle/Cauchy argument, needing $n \geq 3$ and (A3) but not single-peakedness) and \emph{dictatorship from strategy-proofness} (\Cref{lem:sp-forces-projection}, a statement about linear maps independent of preference shape).
Neither (A1) nor (C) furnishes an escape, since the manipulation uses a single profile and a single misreport, insensitive to both. 
 Escaping requires relaxing one of the four load-bearing conditions of \Cref{sec:intro-fork}.

\begin{remark}[Beyond Single-Peakedness]\label{rem:beyond-single-peaked}
Single-peakedness enters the impossibility only through the Pareto-set identification of \Cref{lem:pareto-hull}.
The Cauchy and strategy-proofness steps are preference-shape independent.
The impossibility therefore extends to any preferences with closed convex Pareto sets in clr-coordinates.
See \Cref{app:general-pareto} for the detailed treatment.
\end{remark}

\begin{remark}[Curve and Nonseparable AMMs]\label{rem:curve-nonseparable}
Nonseparable invariants such as Curve's StableSwap violate the trading-function-level independence of \citet{schlegel-kwasnicki-mamageishvili-2023}, and the mechanism-level analysis is open (\Cref{sec:open-questions}).
Curve carries a one-parameter family (the amplification parameter $A$), but unlike on $\F_\Sum$ the $(i,j)$-pricing depends jointly on all inventory coordinates, so a pair-restriction does not recover $A$ and the vacuity argument of \Cref{lem:A3-vacuous-Fgamma} does not apply.
Consistent with our results, strategy-proof Arrovian mechanisms may exist on such families precisely because (A3) is weakened or replaced; whether this is so is left open.
\end{remark}


\section{Practical Implications and the Design Fork}\label{sec:practical-implications}

The two main theorems together yield a sharp practical picture.
On the multi-asset weighted-product family $\F_\Prod$ with $n \geq 3$, \Cref{thm:impossibility} rules out any preference-aggregating rule that is both Arrovian-fair and strategy-proof, so a designer must drop one of four properties.
On the two-asset family ($n = 2$) and on the symmetric constant-elasticity family $\F_\Sum$, \Cref{thm:characterization-F0-n-two,thm:characterization-Fgamma} (proved in \Cref{sec:Fgamma-characterization,sec:two-asset-characterization}) show that the Moulin generalized medians satisfy the full Arrovian core jointly with (SP), so the impossibility does not arise.
The underlying reason in both cases is the vacuity of (A3) on a one-dimensional design space.
The two structurally distinct regimes between them cover most of the deployed AMM landscape, so the impossibility on $\F_\Prod$ at $n \geq 3$ is a structural obstruction confined to the multi-asset weighted-product family, while the rest of the design space admits Arrovian-fair strategy-proof rules.

\subsection{The Four Routes on \texorpdfstring{$\F_\Prod$ at $n \geq 3$}{F\_Prod at n at least 3}}\label{sec:four-routes}

Every major deployed protocol (Uniswap, Balancer, Curve, and forks) fixes its trading function at deployment, and \Cref{thm:impossibility} explains why ex post: on $\F_\Prod$ at $n \geq 3$, no rule is at once Arrovian-fair and strategy-proof, so any aggregator must give up one of these, and every deployed protocol resolves the conflict by giving up fairness, fixing the trading function in advance and so violating (A0) and (A2) at any profile whose peaks do not coincide with the chosen $f$.
A designer who nonetheless wishes to aggregate LP preferences must abandon one of four properties; we take them in turn, summarized in \Cref{tab:routes}.

\begin{table}[ht]
 \centering
\begin{tabular}{lll}
\hline
Route &    Axiom(s)  dropped & Real-world instance \\
\hline
Relax mechanism IIA & (A3) & joint cross-pair governance \\
Relax anonymity & (A4)  & Balancer, Uniswap token-weighted votes \\
Fix the trading function & (A0) and (A2)$^*$ & Uniswap, Curve (deployed practice) \\
Accept manipulation & (SP) &  strategic-equilibrium analysis \\
\hline
\end{tabular}
\caption{Escape routes from \Cref{thm:impossibility} and their deployed instances.
 ${}^*$Fixing the trading function forfeits (A0) outright, and forfeits (A2) at every profile at which the fixed $f$ does not lie in the Aitchison hull of the LP peaks; at any profile whose hull contains $f$ (in particular, the unanimous profile coincident with $f$), (A2) holds trivially.}
\label{tab:routes}
\end{table}

 \emph{Route 1: Relaxing mechanism IIA.} 
Let a pair's exchange rate depend on  LP opinions about unrelated pairs, negotiating the whole weight vector jointly. 
Such a rule can be anonymous and strategy-proof, but forfeits the auditable modularity (A3) encodes (\Cref{sec:arrovian-core}): one pair's price becomes hostage to beliefs about others, and no LP can verify it against any local input.
This is the route taken implicitly by any global-aggregation rule that consults the full peak profile to set each pairwise rate.

\emph{Route 2: Relaxing anonymity.}
Influence may instead scale with stake.
Balancer's weight-adjustment governance \citep{balancer-governance-2026} and Uniswap's fee-tier governance \citep{uniswap-governance-2021} both scale voting power with token holdings; in the limit of a single decisive holder the rule is a strategy-proof dictator, the Gibbard--Satterthwaite escape, and token-weighted votes evade the impossibility precisely by being non-anonymous.

\emph{Route 3: Relaxing nontriviality.}
Fixing the trading function at launch, the deployed practice, renders the aggregator constant in the LP reports, leaving the manipulation of \Cref{lem:sp-forces-projection} nothing to act on.
This is not the absence of aggregation but aggregation performed \emph{ex ante} by the designer, shifting the problem from the mechanism design layer to the protocol design layer (who decides which pool to deploy).
A launch-fixed constant rule forgoes (A0) and (A2), in the sense recorded in the caption of \Cref{tab:routes}.
Relaxing (A2) alone while retaining (A0) buys only \emph{preference-blind} rules whose exchange rates may track deposits but ignore LP peaks entirely, a renunciation of preference-responsiveness in mildly generalized form, so the genuinely preference-responsive escapes are the remaining three.

\emph{Route 4: Relaxing strategy-proofness.} 
A fourth option is to retain the Arrovian aggregator and accept strategic reports,  taking the realized trading function to be the equilibrium of the reporting game rather than the truthful centroid: the stance implicit whenever a protocol treats LP behavior as a strategic input rather than a preference to honor.
The Arrovian form is preserved by construction, but its content shifts: the centroid now averages equilibrium reports rather than honestly held peaks, and the welfare interpretation of the truthful centroid as a faithful aggregation of LP preferences does not transfer to the equilibrium one.
Characterizing the equilibrium centroid under specific belief and stake distributions is a separate exercise that we do not pursue.

\subsection{Where the Impossibility Does Not Arise}\label{sec:where-not-arise}

On the symmetric CEMM family $\F_\Sum$ (Curve-style) and on $\F_\Prod$ at $n = 2$ (Uniswap v2/v3 pairs), the picture is structurally different.
By \Cref{thm:characterization-Fgamma,thm:characterization-F0-n-two}, the Moulin generalized medians satisfy the Arrovian core jointly with (SP), so the impossibility does not arise; the reason in each case is the vacuity of (A3) on a one-dimensional design space.
Both characterizations assume deposit-dependence factors through $V$, a modeling choice in both cases, since (A3) is vacuous in both.
On $\F_\Prod$ at $n \geq 3$ that factoring is instead forced by (A3) (\Cref{rem:V-three-roles}).
The structural picture is therefore that the impossibility is concentrated where (A3) does substantive work, namely on multi-asset weighted-product design spaces.
On lower-dimensional spaces, where pair-restrictions recover the full design parameter, mechanism IIA cannot collide with strategy-proofness in the way the multi-asset cocycle makes possible.

A practical reading: the Moulin generalized median is a deployable, strategy-proof, fair aggregator on Uniswap-style two-asset pools and on Curve-style symmetric pools (\Cref{rem:two-asset-practical,def:canonical-Fgamma}), but is not available on multi-asset weighted-product pools.
That deployed two-asset protocols nonetheless take the constant-rule (Route~3) route, fixing the trading function at launch, is therefore a design choice rather than a forced one; the multi-asset case is the genuinely forced one.
Whether the Moulin route would be adopted in practice if implemented remains an empirical question, but the theoretical option is open at $n = 2$ in a way that \Cref{thm:impossibility} forecloses at $n \geq 3$.
A multi-asset Balancer-style pool faces no such clean reversal: any preference-aggregating mechanism must take one of the four routes of \Cref{tab:routes}.

\subsection{A Calibration Cross-Check}\label{sec:operational-crosscheck}

 On $\F_\Prod$ at $n \geq 3$, any deployed mechanism conforming to the Arrovian core (a fair preference-aggregating AMM) takes the weighted Aitchison centroid form (\ref{eq:aitchison-centroid}) with continuous equivariant weights $w(V)$ (\Cref{thm:characterization-F0}).
This gives a concrete design object even outside the strategy-proof regime.
A designer who wishes to deploy a multi-asset Arrovian-fair AMM faces a single design choice on the preference side: the weighting function $w$.
Within this family, withdrawal neutrality (A5) selects the symmetric centroid uniquely (\Cref{thm:characterization-F0}(b)), and stake-proportionality is the leading non-neutral alternative (\Cref{rem:DWN-tension}); the impossibility argument applies uniformly across the family, since it routes through the uniform weight at symmetric valuations.

An immediate structural consequence is that under any such centroid rule, the output's structural ratios $\rho_{ij}(M)$ and the input ratios $\rho_{ij}(\alpha_\ell^*)$ are linked by the closed-form identity
\[
\log \rho_{ij}\bigl( M(\alpha^*, D) \bigr) \;=\; \sum_{\ell = 1}^m w_\ell(V) \, \log \rho_{ij}(\alpha_\ell^*),
\]
which provides a cross-check that any future implementer of an Arrovian aggregator could exploit:
 regressing the log output ratio on the log peak ratios across observed profiles recovers the weighting, and persistent disagreement between the regression weights and the desk-reported weights would signal either a violation of (A3) or a deviation from the Arrovian family.
Since no currently deployed AMM implements an Arrovian aggregator, this identity is a design-time tool rather than a test of existing protocols; it becomes actionable once a mechanism conforming to \Cref{thm:characterization-F0} is actually deployed.


\section{The Transferred Impossibility for Opinion Pools}\label{sec:pooling-transfer}

The Frongillo--Papireddygari--Waggoner equivalence \citep{frongillo-papireddygari-waggoner-2024} between CFMMs and cost-function prediction markets carries the impossibility across to  a theory of probabilistic opinion pooling, where it becomes a statement about the manipulability of externally Bayesian aggregation.
The equivalence identifies $f_\alpha$ with the cost function of a prediction market whose price vector is $\alpha$, so an LP's preferred pool $\alpha_\ell^*$ reads as that expert's distribution over the $n$ assets and the aggregated pool as the pooled distribution.

\subsection{The Correspondence}\label{sec:F0-correspondence}

\Cref{rem:logpool-quickread} noted, after \Cref{thm:characterization-F0}, that the weighted Aitchison centroid is the logarithmic opinion pool of \citet{genest-1984}; we now record the full structural correspondence between the AMM-side and pooling-side axioms, on which the transferred impossibility rests.

\begin{proposition}[Forward Correspondence with Bayesian Opinion Pooling]  \label{prop:correspondence}
On $\F_\Prod$, the FPW equivalence \citep{frongillo-papireddygari-waggoner-2024} establishes a bijection between LP peak-profiles in $(\simplex)^m$ and profiles of expert distributions on $A = \{1, 2, \ldots, n\}$.
Under this bijection,   any Arrovian mechanism $M$ of \Cref{thm:characterization-F0} corresponds to a weighted logarithmic opinion pool with weighting function $w$; the resulting pool is externally Bayesian and unanimity-respecting.
\end{proposition}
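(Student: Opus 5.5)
The plan is to make the bijection explicit and then read off each claimed property from the centroid formula of \Cref{thm:characterization-F0}.

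First, the bijection. I take the FPW identification \citep{frongillo-papireddygari-waggoner-2024} as giving, for each $f_\alpha \in \F_\Prod$, a cost-function market whose implied price (belief) vector is $\alpha$. Concretely, the gradient of the conjugate at the normalized point recovers the exponent vector, in line with the Chen--Pennock log-utility reading. Thus the class $[f_\alpha]$ maps to the distribution $p = \alpha$ on $A$. This map is well defined on equivalence classes by \Cref{lem:well-defined-quotient}, since it depends only on the structural ratios $\rho_{ij}(\alpha)$. It is a bijection onto full-support distributions on $A$, because these are exactly the points of $\simplex$ (\Cref{lem:cocycle-rho}). Applying it coordinatewise gives a bijection of profiles $(\simplex)^m$.

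Second, transport the mechanism. By \Cref{thm:characterization-F0}(a) the output is
\[
p_i \propto \prod_\ell (p_{\ell,i})^{w_\ell(V)},
\]
which is by definition the weighted logarithmic opinion pool with weights $w(V)$. The deposit profile enters only as a parameter fixing the weights.

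Third, check the properties.

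\emph{Unanimity.} If $p_\ell = q$ for all $\ell$, then since $\sum_\ell w_\ell = 1$ the product equals $q_i$, so the pool returns $q$. This is also \Cref{lem:pareto-hull} applied to a singleton hull.

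\emph{External Bayesianity.} Take a positive likelihood $\lambda \in \R^n_{>0}$ and update each expert to $p_{\ell,i}\lambda_i / \sum_k p_{\ell,k}\lambda_k$. Then
\[
\prod_\ell \Bigl(\frac{p_{\ell,i}\lambda_i}{Z_\ell}\Bigr)^{w_\ell} = \lambda_i^{\sum_\ell w_\ell} \prod_\ell p_{\ell,i}^{w_\ell} \prod_\ell Z_\ell^{-w_\ell}.
\]
The last factor is independent of $i$ and cancels on normalization, and $\sum_\ell w_\ell = 1$ leaves exactly $\lambda_i \bar p_i$. So pooling commutes with Bayesian updating. In clr coordinates, updating adds $\clr(\lambda)$ to every peak, and an affine-weight average commutes with that translation.

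I expect the main obstacle to be the first step, not the algebra. The paper uses FPW only as an external black box, so one must state precisely which object (the price vector at the normalized inventory) is identified with $\alpha$. One must also check that this identification is $\alpha$ itself, not some reparametrization, because a nonlinear reparametrization would destroy the geometric-mean form. I would settle this by computing directly that the normalized marginal prices of $f_\alpha$ at $I=\mathbf 1$ are proportional to $\partial_i f_\alpha(\mathbf 1) = \alpha_i$, and therefore normalize to $\alpha$.
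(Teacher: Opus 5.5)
Your proposal is correct and follows essentially the paper's route: the paper likewise takes the FPW/Chen--Pennock identification to be the identity $\alpha \mapsto \alpha$, reads the weighted logarithmic pool directly off the centroid formula, verifies external Bayesianity by the same $\lambda_i^{\sum_\ell w_\ell}$ computation in which the $i$-independent factor $\prod_\ell Z_\ell^{w_\ell}$ cancels on normalization, and obtains unanimity by specializing (A2) to a coincident-peak profile. The paper's proof additionally records looser axiom-by-axiom alignments ((A3) with external Bayesianity, (SP) with Aitchison-strategy-proofness, (A4) with symmetry, (C) with continuity), which go beyond what the statement itself requires.
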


The correspondence we establish below is best read at the level of axiom \emph{systems}, and not individual axioms: the AMM-side bundle (A0)--(A4)+(C) and the pooling-side bundle of external Bayesianity, unanimity, symmetry, continuity, and nondegeneracy pin down the same family of rules (the weighted Aitchison centroids/weighted logarithmic pools).
At the level of individual axioms, the connection is looser: mechanism IIA (A3) gives pairwise locality of the output's structural ratios (the ratio $\rho_{ij}(\bar\alpha)$ depends only on the $(\rho_{ij}(\alpha_\ell^*))_\ell$ and $V$), but external Bayesianity additionally imposes translation equivariance on those ratios under common likelihood updates (the pool must commute with a common multiplicative update to each expert's distribution).
Crucially, once the full Arrovian core is in place and forces the centroid form, that form is automatically externally Bayesian (as the explicit calculation in the proof below confirms), so the equivalence at the system level is clean even if (A3) alone is strictly weaker than external Bayesianity.
The relationship is therefore: (A3) alone $\not\Rightarrow$ external Bayesianity; but (A0)--(A4)+(C), by forcing the centroid form, imply external Bayesianity as a consequence.
With this caveat, the rough alignment is: mechanism IIA (A3) $\leftrightarrow$ external Bayesianity (both operative as pairwise log-odds locality in context), Pareto efficiency (A2) $\to$ unanimity, anonymity (A4) $\leftrightarrow$ symmetry over experts, continuity (C) $\leftrightarrow$ continuity.
This axiom-level alignment is layered on top of the FPW equivalence, which addresses only the trade-level structure; a partial converse, holding under the external-Bayesianity hypotheses of \citet{genest-1984}, is established in \Cref{prop:partial-converse}.

\begin{proof}
The FPW equivalence identifies each $f_\alpha \in \F_\Prod$  with the cost function of a prediction market on $A$.
(The axiom-level correspondences verified below are rough alignments holding at the level of axiom \emph{systems}; the precise sense in which individual axioms correspond is laid out in the paragraph preceding this proof.)
On $\F_\Prod$, the Chen--Pennock log-utility reading  \citep{chen-pennock-2007} (also recovered as a special case by \citealp{frongillo-papireddygari-waggoner-2024}) makes $\alpha \in \simplex$ the implicit belief of that market.
LP $\ell$'s preferred weight vector $\alpha_\ell^* \in \simplex$ is therefore both the parameter of LP $\ell$'s preferred pool and LP $\ell$'s implicit subjective distribution over the $n$ states.
On this base, we verify each axiom-level correspondence in turn; none of these correspondences is established by FPW (which is silent on Arrovian aggregation), and each is a substantive observation of the present paper.

\emph{(A3) $\leftrightarrow$ External Bayesianity.}
On the AMM side, (A3) says that the output's $(i,j)$-pricing depends only on the $(i,j)$-restrictions of the inputs.
The $(i,j)$-pricing of $f_\alpha$ factors as \mbox{$p_{ij}(f_\alpha, I) = \rho_{ij}(\alpha) \cdot (I_j/I_i)$} with structural ratio $\rho_{ij}(\alpha) = \alpha_i/\alpha_j$, so (A3) says that the pooled structural ratio on each pair $(i,j)$ depends only on the corresponding input structural ratios $\alpha_{\ell,i}^*/\alpha_{\ell,j}^*$.
This pairwise log-odds locality is the operative form of external Bayesianity for the pooling operator.
Recall that a pool $T$ is \emph{externally Bayesian} if it commutes with a common Bayesian update: for every likelihood $\lambda$ on $A$, pooling the updated distributions $\lambda \cdot p_\ell/Z_\ell$ returns the update $\lambda \cdot T(p_1,  \ldots, p_m)/Z$ of the pooled distribution.
A common update shifts every expert's log-odds  $\log(p_{\ell,i}/p_{\ell,j})$ by the same pair-dependent constant $\log(\lambda_i/\lambda_j)$, so external Bayesianity is exactly the requirement that $T$ act on log-odds in a translation-equivariant, pairwise manner, which is what (A3) encodes.
The weighted log pool of \Cref{thm:characterization-F0}, written as a pooling operator  $T_w(p_1,  \ldots, p_m)_i \propto \prod_\ell p_{\ell, i}^{w_\ell}$, satisfies external Bayesianity directly: for a common likelihood update $p_\ell \mapsto \lambda \cdot p_\ell/Z_\ell$,
\begin{align*}
T_w(\lambda p_1/Z_1,  \ldots, \lambda p_m/Z_m)_i
&\propto \prod_\ell (\lambda_i p_{\ell, i}/ Z_\ell)^{w_\ell} \\
&= \lambda_i^{\sum_\ell w_\ell} \prod_\ell p_{\ell, i}^{w_\ell} \big/ \prod_\ell Z_\ell^{w_\ell} \\
&= \lambda_i \prod_\ell p_{\ell, i}^{w_\ell} \big/ \prod_\ell Z_\ell^{w_\ell},
\end{align*}
using $\sum_\ell w_\ell = 1$.
The factor $\prod_\ell Z_\ell^{w_\ell}$ is independent of $i$, so normalizing each side over $i$ yields
\[
T_w(\lambda p_1/Z_1, \ldots, \lambda p_m/Z_m)_i = \lambda_i \cdot T_w(p_1,  \ldots, p_m)_i / Z,
\]
with $Z = \sum_i \lambda_i \cdot T_w(p_1,  \ldots, p_m)_i$.
\citet{genest-1984} shows that, together with unanimity and regularity, external  Bayesianity characterizes the weighted logarithmic pool, in agreement with the form forced on the AMM side by \Cref{thm:characterization-F0}.
The two derivations proceed by different routes, one through the Cauchy cocycle on structural price ratios and the other through Bayesian updating, yet they converge on the same functional form.

\emph{(SP) $\leftrightarrow$ Aitchison-strategy-proofness.}
Aitchison-strategy-proofness of the pool corresponds to (SP) of the mechanism.
On $\F_\Prod$, the Chen--Pennock reading makes $\alpha$ play the same role on both sides (the parameter of $f_\alpha$ and the implicit distribution of the corresponding market), so the Aitchison metric on $\simplex$ is carried across unchanged, and a peak misreport on the AMM side is a distribution misreport on the pooling side.

\emph{Remaining correspondences.}
   (A2)$\to$ unanimity: specializing (A2) to a coincident-peak profile forces the output to that common point (the converse fails in general: pointwise unanimity does not imply the full hull-membership that (A2) demands).
   (A4)$\leftrightarrow$ symmetry: permuting LPs is permuting experts. 
   (C)$\leftrightarrow$ continuity: the FPW homeomorphism is bicontinuous, since on $\F_\Prod$ the correspondence is the identity $\alpha \mapsto \alpha$ between the AMM-side peak weight vector and the pooling-side implicit belief (the Chen--Pennock log-utility reading, \citealp{chen-pennock-2007}), and both sides carry $\simplex$ with the same topology; the AMM-side $C^2$ topology on $\F_\Prod$ reduces to the standard Euclidean topology on $\simplex$ via the homeomorphism $\alpha \mapsto f_\alpha$ recorded after \Cref{def:F0}, and the pooling-side topology on $\simplex$ is the standard one as well, so continuity in one direction is continuity in the other.
    (A0) and (A1) have no named counterpart in Genest's framework (nondegeneracy and all-profiles domain), being tacit there.
\end{proof}

\begin{proposition}[Partial Converse]\label{prop:partial-converse}
Let $n \geq 3$, $m \geq 2$, and let $T$ be a pooling operator on $m$ experts over $A$,  
  defined on all profiles in $(\simplex)^m$, that is externally Bayesian, unanimity-respecting, anonymous, continuous, and nondegenerate.
Then the pullback $M_T := (\,\cdot\,) \mapsto f_{T(\,\cdot\,)}$ of $T$ along the FPW correspondence is a mechanism on $\F_\Prod$ satisfying the full Arrovian core  (A0)--(A4) and (C).
Moreover, $T$ is Aitchison-strategy-proof if and only if $M_T$ satisfies (SP).
\end{proposition}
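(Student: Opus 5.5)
The plan is to invoke Genest's characterization and then verify the AMM axioms directly on the resulting closed form, so that no new functional-equation work is needed. Under the stated hypotheses (externally Bayesian, unanimity-respecting, continuous, nondegenerate, on all of $(\simplex)^m$ with $n \geq 3$), \citet{genest-1984} gives that $T$ is a weighted logarithmic pool, $T(p_1,\ldots,p_m)_i \propto \prod_\ell p_{\ell,i}^{w_\ell}$. The weights $w_\ell$ are real numbers summing to one. Anonymity of $T$ forces $w_\ell = 1/m$ by the same transposition argument as in Step~3 of the proof of \Cref{thm:characterization-F0}(b). In particular the weights lie in $\Delta^{m-1}$ and are nonnegative.

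This is the point I expect to be the main obstacle. In its most general form, Genest's theorem allows weights that are not necessarily nonnegative. Nonnegativity is then pinned down either by symmetry, as above, or by a hull-type unanimity. I will lean on anonymity, which settles it cleanly. I will also remark that the pooling operator here is deposit-blind, so the pullback uses the constant weighting $w(V) \equiv (1/m,\ldots,1/m)$. That weighting is trivially continuous and equivariant.

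Having the closed form, I define $M_T(\alpha^*, D) := f_{T(\alpha^*_1,\ldots,\alpha^*_m)}$. Under the FPW/Chen--Pennock identification of \Cref{prop:correspondence}, the peak $\alpha_\ell^*$ is identified with the expert distribution $p_\ell$. Then $M_T$ is exactly the mechanism (\ref{eq:aitchison-centroid}) with constant uniform weights. The converse direction of \Cref{thm:characterization-F0}(a), verified in \Cref{sec:F0-converse}, then yields (A0)--(A4) and (C) at once.
\begin{itemize}
\item (A2) follows from Direction~2 of \Cref{lem:pareto-hull}, because the clr-image is a convex combination of the $\clr(\alpha_\ell^*)$.
\item (A3) holds because $\rho_{ij}(M_T) = \prod_\ell \rho_{ij}(\alpha^*_\ell)^{1/m}$.
\item (A0) holds because the uniform weights are strictly positive.
\end{itemize}
In fact this shows more than the stated claim: it also yields (A5).

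For the strategy-proofness equivalence, I use the fact that the correspondence is the identity $\alpha \mapsto \alpha$ on $\simplex$. Both sides carry the Aitchison metric, a misreport of a peak is a misreport of a distribution, and deposits are inert for $M_T$. So the (SP) inequality for $M_T$ at a profile $(\alpha^*, D)$ with misreport $\hat\alpha_\ell$ is literally the Aitchison-strategy-proofness inequality for $T$ at $(\alpha^*)$ with the same misreport, and conversely. As a consistency check, both sides then fail by \Cref{lem:sp-forces-projection}, since $1/m \in (0,1)$. This is exactly the transferred impossibility the section is building toward.
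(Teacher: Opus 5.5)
Your proof is correct for the statement as literally written. Its skeleton is the paper's: Genest's log-linear form, identification of $M_T$ with a weighted Aitchison centroid, the converse of \Cref{thm:characterization-F0}(a), and the identity correspondence for (SP). The difference is in how the weights are controlled.

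You take $T$ to be a map on $(\simplex)^m$ alone, so anonymity is plain symmetry and collapses Genest's weights to $(1/m,\ldots,1/m)$. Nonnegativity, continuity and equivariance of the weighting are then trivial, and (A5) comes for free.

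The paper instead runs the argument for valuation-dependent pools $T(\,\cdot\,;V)$ through \Cref{lem:genest-boundary}. That lemma applies Genest on each $V$-slice, extends to the closed simplex for inert experts, extracts continuity of $V \mapsto w(V)$ from a one-expert test profile, and gets joint equivariance from anonymity. The result is a general continuous equivariant $w(V)$, which is the form \Cref{thm:pooling-impossibility} later invokes when it identifies ``this $w$'' with the weighting of $M_T$.

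Your argument does not carry over verbatim to that setting, because joint anonymity only forces $w(V) = (1/m,\ldots,1/m)$ at symmetric $V$. At asymmetric $V$ the paper takes $w(V) \in \Delta^{m-1}$ from Genest. Your caution that the finite-outcome log-linear form by itself only pins $\sum_\ell w_\ell = 1$ is exactly where that step is delicate. For $m = 2$, the weighting $w(V) = \bigl(\tfrac12 + \tanh(V_1 - V_2),\ \tfrac12 - \tanh(V_1 - V_2)\bigr)$ defines an externally Bayesian, unanimous, jointly anonymous, continuous, nondegenerate pool. Its pullback violates (A2) at suitable profiles whenever $|\tanh(V_1 - V_2)| > \tfrac12$.

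So your route gives a fully elementary proof of the deposit-blind statement with no hidden sign assumption. The paper's route gains the $V$-dependent generality but relies on Genest for nonnegativity of the weights at asymmetric valuations.
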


Before the proof, we need a lemma.

\begin{lemma}[Pooling-Side Log-Linear Form]\label{lem:genest-boundary}
Let $T \colon (\simplex)^m \times \R^m_{>0} \to \simplex$ be an externally Bayesian, unanimity-respecting, anonymous, continuous, and nondegenerate pooling operator.
Then there exists a continuous equivariant $w \colon \R^m_{>0} \to \Delta^{m-1}$ such that for every $V \in \R^m_{>0}$ and every profile $(p_1,  \ldots, p_m) \in (\simplex)^m$,
\begin{equation}\label{eq:genest-boundary}
T(p_1,  \ldots, p_m;\, V)_i \;\propto\; \prod_{\ell=1}^m p_{\ell, i}^{w_\ell(V)}, \qquad i = 1, 2, \ldots, n,
\end{equation}
with the convention $p_{\ell, i}^{0} = 1$ whenever $w_\ell(V) = 0$.
\end{lemma}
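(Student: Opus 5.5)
We establish the log-linear form on the pooling side by the same cocycle/Cauchy route used for \Cref{thm:characterization-F0}, with external Bayesianity standing in for mechanism IIA. Throughout we fix $V$ and treat it as a continuous side parameter, exactly as in the post-lemma paragraph following \Cref{lem:psi-decomposition}.

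The first step is pairwise locality. We work in clr-coordinates, writing $x_\ell := \clr(p_\ell) \in H$ and $\Theta(x_1,\ldots,x_m;V) := \clr(T(\clr^{-1}(x_1),\ldots;V))$. A common likelihood update $\lambda$ acts on every $x_\ell$ by the same translation $h := \clr(\lambda) \in H$, and it acts on the output by that same translation. External Bayesianity therefore reads
\[
\Theta(x_1+h,\ldots,x_m+h;V) = \Theta(x;V) + h \quad \text{for all } h \in H.
\]
This alone does not give pairwise dependence on log-odds, so we do not claim (A3) directly. Instead we build a linear skeleton.

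For the second step, we combine translation equivariance with unanimity. Unanimity gives $\Theta(x,\ldots,x;V)=x$. Genest's argument proceeds by applying external Bayesianity with likelihoods that are themselves ratios of profile entries, and we will try to reproduce it in clr-coordinates. We set $G(y_2,\ldots,y_m) := \Theta(0,y_2,\ldots,y_m)$. Translating by $x_1$ shows that $\Theta$ is determined by $G$ through $\Theta(x) = x_1 + G(x_2-x_1,\ldots,x_m-x_1)$.

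To obtain additivity of $G$, we use likelihoods supported on two coordinates. This is the main obstacle. Translation equivariance is a much weaker property than additivity, and Genest's proof requires $n\ge3$ together with the freedom to choose likelihoods that vanish on some states, so that the update restricts attention to a subset of outcomes. On the open simplex we cannot use zero likelihoods. We will try instead to approximate such likelihoods by extreme ones, using continuity and a limiting argument: updates concentrated on $\{i,j\}$ should force the $(i,j)$ log-ratio of the output to depend only on the inputs' $(i,j)$ log-ratios. That gives exactly the pairwise extraction of \Cref{lem:pair-extraction-F0}.

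If this approximation argument goes through, the remaining steps are routine. We run \Cref{lem:cocycle-psi}, which uses $n\ge3$, then \Cref{lem:psi-decomposition} and \Cref{lem:Phi-linear} to obtain $\Psi_{ij}(u;V)=\sum_\ell a_\ell(V)u_\ell+\beta_i(V)-\beta_j(V)$. Unanimity at $u=0$ kills the coboundary term. With unanimity alone, as opposed to the full hull membership given by (A2), we get $\sum_\ell a_\ell=1$ by testing $u=(t,\ldots,t)$, but not nonnegativity.

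Nonnegativity is a second delicate point. We plan to derive it from translation equivariance combined with continuity on the open simplex. If that fails, we will invoke Genest's regularity hypotheses, which we read as the tacit assumption in ``nondegenerate'' that the weights are nonnegative. Equivariance of $w$ follows from anonymity exactly as in Step~3 of \Cref{lem:V-dependent-weights}, and continuity of $w$ follows as in \Cref{lem:Phi-linear}. The convention $p^0=1$ simply records coordinates with $w_\ell=0$.
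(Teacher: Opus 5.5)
Your locality step cannot be completed, and the obstruction is not the lack of zero likelihoods on the open simplex. On $(\simplex)^m$, external Bayesianity is exactly the identity $\Theta(x_1+h,\ldots,x_m+h;V)=\Theta(x;V)+h$ for every $h\in H$. A likelihood concentrating on $\{i,j\}$ is just an $h$ with very negative off-pair coordinates, so any limit along such updates is still an instance of this identity and adds nothing. Your own reduction $\Theta(x)=x_1+G(x_2-x_1,\ldots,x_m-x_1)$ shows how weak it is: every continuous $G$ with $G(0)=0$ gives an externally Bayesian, unanimity-respecting, continuous pool. Concretely, for $m=2$ and $e:=(1,-1,0,\ldots,0)\in H$,
\[
\Theta(x_1,x_2):=\tfrac12(x_1+x_2)+\|x_2-x_1\|^2\,e
\]
is externally Bayesian, unanimity-respecting, anonymous, continuous and nondegenerate. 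Yet it is not log-linear, and its $(1,2)$ log-ratio depends on $\|x_2-x_1\|$, hence on the experts' other log-ratios. So neither pairwise locality nor the log-linear form follows from the hypotheses as listed, by a limiting argument or any other.

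Nonnegativity fails the same way. The log pool with $w_1(V)=\tfrac12+(V_1-V_2)$ and $w_2(V)=\tfrac12-(V_1-V_2)$ is externally Bayesian, unanimity-respecting, continuous, nondegenerate and jointly equivariant (hence anonymous in the paper's sense). It has a negative weight whenever $|V_1-V_2|>\tfrac12$. Deriving $w\geq0$ from translation equivariance and continuity is therefore impossible, and reading it into ``nondegenerate'' adds a hypothesis rather than proving anything.

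The paper does not attempt this derivation. It applies Genest's characterization to each slice $T_V$ as a black box (``under his regularity hypotheses''), taking both the log-linear form and the positivity of $w(V)$ from that citation. Genest's theorem is proved under a pointwise hypothesis of the form $T(p)_i\propto K(p_{1,i},\ldots,p_{m,i})$, which is exactly the locality your argument lacks. What the paper proves itself is only three things: the inert-expert extension to the boundary of $\Delta^{m-1}$, continuity of $w$ read off from a one-expert test profile, and equivariance from anonymity.

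Your route can be rescued only by adding a locality hypothesis explicitly. For instance, you could assume that the $(i,j)$ log-ratio of $T$ depends only on the experts' $(i,j)$ log-ratios and $V$. External Bayesianity does deliver this if $T$ is also defined on boundary profiles and commutes with updates by likelihoods vanishing off $\{i,j\}$. With it, your chain through \Cref{lem:cocycle-psi,lem:psi-decomposition,lem:Phi-linear} plus unanimity gives $\Psi_{ij}(u;V)=\sum_\ell a_\ell(V)u_\ell$ with $\sum_\ell a_\ell(V)=1$. To get $a(V)\in\Delta^{m-1}$ you still need something like requiring each pooled log-ratio to lie between the experts' log-ratios, the pooling-side analogue of the bound in \Cref{lem:pareto-hull}.
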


\begin{proof}
The hypotheses are stated on $T$ alone; we do not assume any prior weighting.
The argument proceeds in three steps: first invoke Genest's characterization on each slice to obtain pointwise weights, then deduce uniqueness and continuity from continuity of $T$, and finally extract equivariance from anonymity.

\emph{Step 1: pointwise existence on slices.}
Fix a vector $V \in  \R^m_{>0}$ and consider the slice given by 
$T_V := T(\,\cdot\,; V) \colon (\simplex)^m \to \simplex$.
This slice inherits external Bayesianity, unanimity, anonymity over experts, continuity, and nondegeneracy from $T$.
\citet{genest-1984} shows that any externally Bayesian, unanimity-respecting, anonymous, continuous, nondegenerate pool on a finite outcome space with $n \geq 3$ outcomes has the form $T_V(p_1,  \ldots, p_m)_i \propto \prod_\ell p_{\ell, i}^{w_\ell(V)}$ for a unique weight vector $w(V) \in \mathrm{int}(\Delta^{m-1})$ under his regularity hypotheses; 
we extend to the closed simplex by treating boundary points explicitly below.

The conclusion of Genest places $w(V)$ in the open simplex; the boundary extension proceeds as follows.
Suppose expert $\ell_0$'s report has no effect on $T_V$, for every choice of the other experts' reports.
Fixing $\ell_0$'s report to (say) the uniform distribution and varying the other $m-1$ reports defines a pool $T_V^{-\ell_0} \colon (\simplex)^{m-1} \to \simplex$ that inherits external Bayesianity, unanimity, anonymity over the remaining experts, continuity, and nondegeneracy from $T_V$, each property of $T_V$ specializing under the slice.
Genest's characterization, applied to $T_V^{-\ell_0}$, yields strictly positive weights $(w_\ell(V))_{\ell \neq \ell_0} \in \mathrm{int}(\Delta^{m-2})$; setting $w_{\ell_0}(V) := 0$ extends to a vector in $\Delta^{m-1}$ for which (\ref{eq:genest-boundary}) holds on $(\simplex)^m$, under the convention $p_{\ell_0,i}^0 = 1$.
Iterating handles multiple inert experts, each removal preserving the hypotheses on the residual; we adopt this throughout, so that $w(V) \in \Delta^{m-1}$ for every $V$.

\emph{Step 2: continuity from uniqueness.}
The weight vector $w(V)$ is uniquely determined by the slice $T_V$ at $V$, the assignment $V \mapsto w(V)$ is  therefore  well-defined as a function $\R^m_{>0} \to \Delta^{m-1}$.
To see continuity, fix any expert $\ell \in \Lc$ and any pair of distinct outcomes $i \neq j$ in $A$; the value $w_\ell(V)$ extracted at this pair is independent of the pair chosen, since (\ref{eq:genest-boundary}) is symmetric in the outcome labels.
For notational simplicity in the test profile that follows,  we write coordinates so that the active pair appears as $(1, 2)$.
Consider the test profile in which expert $\ell$ reports $p_\ell^{(s)} := \bigl( \tfrac{1-q}{1+e^{-s}}, \tfrac{1-q}{1+e^s}, p_3^{\ast}, p_4^{\ast}, \ldots, p_n^\ast \bigr)$ for a real parameter $s$ (with the remaining components $p_3^\ast, p_4^\ast, \ldots, p_n^\ast$ fixed and positive and $q := \sum_{k=3}^n p_k^\ast \in (0, 1)$, so that the entries sum to one), and every other expert $\ell' \neq \ell$ reports the uniform distribution $p_{\ell'} := (1/n, \ldots, 1/n)$.
At this profile the right-hand side of (\ref{eq:genest-boundary})  yields a pool whose $(1, 2)$-log-odds equal $w_\ell(V) \cdot s$: the other experts contribute zero by uniformity, and the $(1, 2)$-log-odds of $p_\ell^{(s)}$ are $s$ since the common factor $1-q$ cancels in the ratio.
By continuity of $T$ in $V$, the pool's $(1, 2)$-log-odds  at this test profile are continuous in $V$; hence so is $w_\ell(V) \cdot s$ for every $s$. 
We conclude that $w_\ell(V)$ is continuous in $V$.
This holds for every $\ell$, so $w$ is continuous as a function $\R^m_{>0} \to \Delta^{m-1}$.
This argument uses no density claim about the strict-positivity cone; it is a direct continuous extraction of $w$ from $T$.

\emph{Step 3: equivariance from anonymity.}
Anonymity of $T$ over experts says that permuting expert labels and the corresponding deposit valuations leaves $T$ unchanged: 
that is, we have 
$T(p_{\sigma(1)},  \ldots, p_{\sigma(m)}; \sigma \cdot V)  =  T(p_1,  \ldots, p_m; V)$ for every permutation $\sigma$.
Substituting into (\ref{eq:genest-boundary}) and matching coefficients using uniqueness of $w$ gives $w_{\sigma(\ell)}(V) = w_\ell(\sigma \cdot V)$ for every permutation $\sigma$, which is the equivariance claim.
\end{proof}

\begin{proof}[Proof of \Cref{prop:partial-converse}]
By \Cref{lem:genest-boundary},  it follows that $T$ takes the weighted-logarithmic-pool form (\ref{eq:genest-boundary}) with continuous equivariant $w$.
On $\F_\Prod$, the Chen--Pennock log-utility reading of the FPW correspondence (\Cref{prop:correspondence}) makes $\alpha$ play the same role on both sides, so the pullback $M_T$ is exactly  the weighted Aitchison centroid (\ref{eq:aitchison-centroid}) with this continuous equivariant $w$.
By the converse direction of \Cref{thm:characterization-F0}(a) (verified in \Cref{sec:F0-converse}), every such centroid satisfies (A0)--(A4) and (C); 
 hence so does $M_T$.
This routes (A2) through Genest's characterization, rather than through unanimity (which alone is insufficient, \Cref{prop:correspondence}), because the log pool form makes the output a convex combination in clr-coordinates, 
therefore Pareto by \Cref{lem:pareto-hull}.
For the biconditional in the second claim, recall that on $\F_\Prod$ the FPW correspondence acts as the identity $\alpha \mapsto \alpha$ on the parameter (the Chen--Pennock log-utility reading of \Cref{prop:correspondence}), so an LP peak $\alpha_\ell^* \in \simplex$ is the same object as expert $\ell$'s distribution $p_\ell$, and the Aitchison metric on the AMM side coincides with the Aitchison metric on the pooling side.
Concretely, a peak misreport $\alpha_{\ell_0}^* \mapsto \hat\alpha_{\ell_0} \in \simplex$ on the AMM side corresponds to the distribution misreport $p_{\ell_0} \mapsto \hat p_{\ell_0} = \hat\alpha_{\ell_0}$ on the pooling side, and the Aitchison distance from the output to the truth is preserved across the correspondence: $\Aitch\bigl(M_T(\alpha^*, D),\, \alpha_{\ell_0}^*\bigr) = \Aitch\bigl(T(p, V),\, p_{\ell_0}\bigr)$ for matched profiles. 
We finally show that $T$ is Aitchison-strategy-proof if and only if $M_T$ satisfies (SP).
($\Rightarrow$) Suppose $T$ is Aitchison-strategy-proof; then for every profile and every $\hat p_{\ell_0}$ the manipulator weakly prefers truth-telling on the pooling side, which transfers via the identity correspondence to weak preference for truth-telling on the AMM side, so $M_T$ satisfies (SP).
($\Leftarrow$) Suppose $M_T$ satisfies (SP); the same identity correspondence transfers AMM-side truth-telling preference to pooling-side, so $T$ is Aitchison-strategy-proof.
The biconditional therefore holds, completing the proof.
\end{proof}

\begin{remark}\label{rem:correspondence-directionality}
The Cauchy-equation derivation of \Cref{sec:F0-cauchy,sec:F0-pareto-pin} translates, across the FPW correspondence, into Genest's characterization \citep{genest-1984} of the logarithmic pool as the unique externally Bayesian rule. 
 \Cref{prop:partial-converse}  is the reverse-direction packaging that the manipulability theorem below invokes.
\end{remark}

\subsection{The Manipulability Theorem} \label{sec:pooling-theorem}

The pooling literature has studied external Bayesianity, marginalization, and unanimity in depth, and recent work has begun to study manipulation of probability aggregation rules along different routes from ours.
\citet{dietrich-list-2025} prove an impossibility with a propositionwise notion of nonmanipulability, under which the natural characterization pins down the linear opinion pool of \citet{aczel-wagner-1980} and \citet{mcconway-1981} rather than the logarithmic pool we study here.
\citet{chambers-echenique-hayashi-2024} study Nash implementation of probability aggregation in a portfolio-choice setting with linear pooling, again with the linear pool as primitive.
\citet{laraki-varloot-2022} introduce level-strategyproofness over cumulative distribution functions for aggregating beliefs on ordered outcomes, both distinct from our setting.

The question we ask is structurally different.
With external Bayesianity as primitive (so that the log pool, rather than the linear pool, is forced by Genest's characterization), is the resulting log pool Aitchison-strategy-proof?
The notion of ``closer'' requires a metric on the simplex, and the natural choice,  the one under which the logarithmic pool is the metric (Fr\'echet) mean of the experts' distributions, is the Aitchison metric, which is also the geometry intrinsic to the simplex from the standpoint of compositional data analysis.
Equipped with it the question becomes sharp, and the answer is negative.
The impossibility transfers from the AMM side via \Cref{prop:partial-converse}, giving the following theorem.

  \begin{theorem}[Manipulability of Externally Bayesian Pools]\label{thm:pooling-impossibility}
Let $n \geq 3$ and $m \geq 2$, and equip the simplex of distributions over $A$ with the Aitchison metric.
Then no pooling operator on $m$ experts is simultaneously externally Bayesian, unanimity-respecting, anonymous over experts, continuous, nondegenerate, and Aitchison-strategy-proof.
Equivalently, the only externally Bayesian, unanimity-respecting, continuous, Aitchison-strategy-proof pool is the single-expert dictator, which fails anonymity once $m \geq 2$.
\end{theorem}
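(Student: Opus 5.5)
The plan is to argue by contradiction and transport the problem across the FPW correspondence. Suppose $T$ is externally Bayesian, unanimity-respecting, anonymous, continuous, nondegenerate, and Aitchison-strategy-proof. If $T$ is stated without a valuation argument, regard it as a pool $T(\,\cdot\,;V)$ that is constant in $V$; this fits the format of \Cref{lem:genest-boundary}. By \Cref{prop:partial-converse}, the pullback $M_T$ satisfies the full Arrovian core (A0)--(A4) and (C) on $\F_\Prod$. By the biconditional in the same proposition, Aitchison-strategy-proofness of $T$ gives (SP) for $M_T$. So $M_T$ is a mechanism on $\F_\Prod$, with $n \geq 3$ and $m \geq 2$, satisfying (A0)--(A4), (C) and (SP). This contradicts \Cref{thm:impossibility}, which proves the first sentence of the statement.

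To make the argument transparent and independent of the AMM packaging, I would also give the direct pooling-side route. \Cref{lem:genest-boundary} writes $T$ as a weighted logarithmic pool with continuous equivariant weights $w(V) \in \Delta^{m-1}$. In clr-coordinates this is the linear aggregator $\clr(T) = \sum_\ell w_\ell(V)\clr(p_\ell)$ on $H$. Take any symmetric $V = (v,\ldots,v)$; if $T$ is $V$-independent, anonymity alone already forces uniform weights. Equivariance then gives $w = (1/m,\ldots,1/m)$, which is not an indicator for $m \geq 2$. \Cref{lem:sp-forces-projection} supplies an explicit profile and misreport that drives the pooled distribution exactly onto the manipulator's true distribution, so Aitchison-strategy-proofness fails. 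The misreport is admissible because $\clr^{-1}$ maps all of $H$ onto $\simplex$, and a distribution misreport leaves $V$ fixed, as in \Cref{rem:sp-V-dependent}.

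For the ``equivalently'' clause, drop anonymity and nondegeneracy and keep external Bayesianity, unanimity, continuity and Aitchison-strategy-proofness. I would again invoke the slicewise log-linear form; this is the step needing care, since \Cref{lem:genest-boundary} was stated with anonymity. The symmetry hypothesis is only used for equivariance in Step~3, and Genest's characterization of externally Bayesian pools with $n \geq 3$ yields log-linear weights without symmetry. Degenerate constant pools are excluded by unanimity. At each fixed $V$, \Cref{lem:sp-forces-projection} then forces $w(V)$ to be an indicator $e_{\ell_0(V)}$. Continuity of $w$ into $\Delta^{m-1}$ and connectedness of $\R^m_{>0}$ make $\ell_0$ constant, because the vertices of the simplex are isolated. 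So $T$ is a single-expert dictator. Conversely, a dictator is trivially Aitchison-strategy-proof, by the sufficiency part of \Cref{lem:sp-forces-projection}, and it violates anonymity whenever $m \geq 2$.

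The main obstacle is this last point: justifying the log-linear form without the anonymity hypothesis that \Cref{lem:genest-boundary} assumes. I would handle it by citing Genest's theorem in its nonsymmetric form and rerunning Steps~1--2 of that lemma verbatim, omitting Step~3. Failing that, I would fall back on the AMM-side derivation. \Cref{lem:psi-decomposition}, \Cref{lem:Phi-linear}, and Steps~1--2 of \Cref{lem:V-dependent-weights} do not use (A4), so on the AMM side the centroid form survives without anonymity.
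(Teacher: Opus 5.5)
Your proposal is correct and follows the paper's proof: the first claim by pulling $T$ back along \Cref{prop:partial-converse} and contradicting \Cref{thm:impossibility}, and the second by taking Genest's log-linear form without anonymity and applying \Cref{lem:sp-forces-projection} at each fixed $V$. Your connectedness argument that the dictator index cannot vary with $V$ tightens a point the paper leaves implicit; your AMM-side fallback, however, would not actually bypass Genest, because the paper obtains (A2) and (A3) for $M_T$ only from the log-pool form.
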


\begin{proof}
The logical chain proceeds in three steps using only the pooling-side hypotheses.
First, by \Cref{lem:genest-boundary}, the pool's external Bayesianity, unanimity, anonymity, continuity, and nondegeneracy together imply that $T$ has the weighted-logarithmic-pool form (\ref{eq:genest-boundary}) for a continuous equivariant $w \colon \R^m_{>0} \to \Delta^{m-1}$.
This step introduces $w$ from the pooling-side axioms alone, with no prior hypothesis about its existence.
Second, by \Cref{prop:partial-converse}, this $w$ is exactly the weighting function of the pullback mechanism $M_T$ on $\F_\Prod$, which satisfies the full Arrovian core (A0)--(A4) and (C); Aitchison-strategy-proofness of the pool corresponds to (SP) of $M_T$.
Third, by \Cref{thm:impossibility}, no mechanism on $\F_\Prod$ satisfies the Arrovian core jointly with (SP), 
 so no such pool exists, which is the first claim.

The second claim follows by dropping anonymity from the hypotheses while keeping the rest.
Without anonymity, \Cref{lem:genest-boundary} still applies on the substantive content:
\citet{genest-1984} gives the same log pool form (\ref{eq:genest-boundary}) for a continuous $w$ that need not be equivariant; the weights $w_\ell$ lie in $\Delta^{m-1}$ (nonneg, sum to one) by Genest's characterization, with the boundary extension to $\Delta^{m-1}$ proceeding by the same inert-expert iteration as in \Cref{lem:genest-boundary}, which does not require anonymity.
The manipulation of \Cref{lem:sp-forces-projection} applies verbatim to any such log pool
(equivalently, any weighted arithmetic mean in clr-coordinates with weights in $\Delta^{m-1}$): driving the output to the manipulator's peak requires only that LP $\ell_0$'s own weight be nonzero and strictly less than one (i.e., $0 < w_{\ell_0} < 1$), so Aitchison-strategy-proofness forces all but one weight to vanish, whereupon $\sum_\ell w_\ell = 1$ makes the survivor a dictator, which fails anonymity for $m \geq 2$.
\end{proof}

The notion here is metric, not ordinal, which places the result outside the direct reach of Gibbard--Satterthwaite and yields a clean dichotomy (dictator or manipulable) rather than the richer median structure ordinal single-peakedness permits.
Scale invariance (T4) places the natural geometry on log-ratio (clr) coordinates,  where the Aitchison distance is exactly the Euclidean pullback and the logarithmic pool is the metric (Fr\'echet) mean, making the Aitchison metric not an arbitrary choice but the one under which Aitchison-(SP) is the manipulation-resistance notion matched to the characterized rule.
Each escape route is instructive in its own way: 
 an \emph{ordinal} single-peaked reading retains only the order of magnitudes and admits the strategy-proof generalized medians (the $\F_\Sum$ side, where the obstruction dissolves), while a \emph{raw-Euclidean} metric on the weight vectors rather than on log-ratios would break the scale invariance the family is built on and is not the geometry under which the fair rule is an average.


\section{Discussion}\label{sec:discussion}

\subsection{Scope and Bounds}\label{sec:summary-scope}

The scope of the results above is bounded.
We cover weighted CPMMs, symmetric CEMMs, and their union;  LMSR and Curve's StableSwap sit outside the framework (\Cref{rem:LMSR-outside,rem:curve-nonseparable}), the full $1$-homogeneous space is open, and preferences are assumed Aitchison-single-peaked, with extensions to general continuous convex preferences in \Cref{rem:beyond-single-peaked}.
Fees are exogenous, passing through as protocol parameters our axioms do not constrain, and traders are passive responders; a genuinely two-sided framework with trader-side axioms remains to be built.
The number of LPs $m \geq 2$ is finite and fixed, and the impossibility holds uniformly in $m$ since \Cref{lem:sp-forces-projection} forces a dictator for every $m \geq 2$.
We conjecture the obstruction persists as $m \to \infty$ by the same linearity argument, but a rigorous continuum treatment, presumably via a measure-theoretic formulation in which the manipulating coalition has positive mass, is left for future work.

\subsection{Open Questions}\label{sec:open-questions}

\paragraph{(Q1) The full space of trading functions.}
On the full space $\F$ of $1$-homogeneous trading functions, the cocycle of structural price ratios no longer determines a design point from its pairwise data once the family separates pairs, so the Cauchy reduction fails.
We conjecture topological obstructions analogous to those identified by \citet{chichilnisky-1982} for social choice on parameter spaces with nontrivial homotopy type, since $\F$ is not contractible in the way $\F_\Prod$ is; making this precise requires identifying the right ambient topology on $\F$ and an analog of Chichilnisky's anonymity-plus-unanimity hypothesis adapted to the AMM setting, and we leave both the conjecture and  its formulation open.
The direct analog of Chichilnisky's framework is not immediate: she works with continuous social choice rules on a sphere of preferences, whereas the AMM problem treats trading functions modulo equivalence on a topological space whose homotopy structure has not been computed in the literature.

\paragraph{(Q2) Nonseparable design spaces.}
Curve's StableSwap and other nonseparable invariants have a pricing function $p_{ij}$ that depends on inventory coordinates beyond $I_i$ and $I_j$, so mechanism IIA in its present form does not type-check;  whether a weaker independence axiom conditioning on a larger but still proper subset of coordinates recovers a usable characterization is open.
The translation-invariant LMSR analog (obtained by replacing (T4) with translation invariance) should, via the Acz\'el--Wagner additive cocycle \citep{aczel-wagner-1980}, yield a weighted-\emph{arithmetic}-mean characterization and a corresponding manipulability obstruction (\Cref{rem:LMSR-outside}), though we do not prove it here.
The two directions are complementary: the first broadens the class of trading functions, while the second changes the invariance type from scale to translation.

\paragraph{(Q3) Loss-versus-rebalancing and Arrovian weighting.}
The loss-versus-rebalancing (LVR) of \citet{milionis-moallemi-roughgarden-zhang-2024} depends on the weighting $\alpha$, so different $w(V)$ give different expected-LVR profiles.
 Whether an axiomatically-motivated weighting approximately minimizes expected LVR across a natural class of price processes is open.
 A positive answer would provide a welfare-grounded criterion for selecting among the admissible weightings identified in \Cref{thm:characterization-F0}.

\paragraph{(Q4) Fee-sensitive axiomatics.}
Fees pass through as exogenous parameters here,  yet in practice the fee is as much a design choice as the invariant, so an axiomatic theory of fee schedules and their interaction with the trading function would close an obvious gap.
One natural starting point is to ask which fee schedules preserve the Arrovian properties established here, 
 or whether they necessarily conflict with some of them.

\subsection{Concluding Remarks}\label{sec:concluding-remark}

Two messages emerge.
The mathematical message: on $\F_\Prod$ at $n \geq 3$ the Arrovian core forces the weighted Aitchison centroid, which is Genest's logarithmic opinion pool under the Frongillo--Papireddygari--Waggoner equivalence.
Adding strategy-proofness yields an impossibility, the AMM-design instance of the joint Arrow and Gibbard--Satterthwaite tension, which transfers to a manipulability theorem for externally Bayesian pools under the Aitchison metric, complementing a growing recent literature on probability-aggregation impossibilities \citep{dietrich-list-2025, chambers-echenique-hayashi-2024, laraki-varloot-2022}.
At $n = 2$ and on $\F_\Sum$ the conflict dissolves entirely, mechanism IIA being vacuous and Moulin generalized medians appearing as the full characterization (\Cref{thm:characterization-F0-n-two,thm:characterization-Fgamma}); the impossibility is thus concentrated precisely where mechanism IIA does its substantive work.

The practical message is that the absence of preference-aggregation in deployed multi-asset AMMs is forced, not accidental.
On $n \geq 3$ design spaces (multi-asset Balancer and Curve pools), no rule is at once Arrovian-fair (in particular, anonymous and nontrivial) and manipulation-resistant; every deployed protocol resolves this conflict by fixing the trading function in advance, which sacrifices fairness (specifically nontriviality and Pareto efficiency).
Dropping any one of the four load-bearing axioms ((A3), (A4), (A0), or (SP)) opens room for the other three, with no fifth route preserving preference-responsiveness.
On two-asset pools (Uniswap v2/v3) the four-way trade-off does not bind: 
 the Moulin generalized median is an explicit fair-and-strategy-proof rule (\Cref{rem:two-asset-practical}), available but unused.

\appendix


\section{Logical Independence of the Axioms}\label{app:independence}

 We verify that (A1)--(A4) and (C) are mutually logically independent on $\F_\Prod$, exhibiting for each a mechanism that violates it alone; throughout, $n \geq 3$, $m \geq 2$, and the representative stake-proportional weighting $w_\ell(V) = V_\ell/V_{\mathrm{tot}}$ is used where applicable.
Axiom (A0) has a subtler status: on the full domain it is implied by (A1) + (A2) (at a common-peak profile the Aitchison hull is a singleton by \Cref{lem:pareto-hull}, so (A2) forces the output to that unique peak, which varies, making the mechanism nonconstant), and on the all-$c$ subdomain the constant rule $M \equiv c$ is an isolated violator. 
  We list (A0) separately because the converse construction of \Cref{thm:characterization-F0} verifies it independently, alongside (A1)--(A4) and (C), as one of the axioms the Aitchison centroid satisfies; the verification is direct (varying a single peak at $V^{\mathrm{sym}}$ changes the output, by equivariance forcing uniform weights) and does not appeal to the (A1)+(A2) implication.

\emph{(A1) Unrestricted domain.}
Restrict the centroid (\ref{eq:aitchison-centroid}) to profiles with $\Aitch(\alpha_\ell^*, c) \leq 1$ for every $\ell$: this is undefined once a peak lies farther than $1$ from $c$, violating (A1), while on its domain it inherits (A0), (A2)--(A4), and (C) from \Cref{thm:characterization-F0}.

\emph{(A2) Pareto efficiency.}
The \emph{anti-centroid} $M := \alpha^{\mathrm{anti}}$,  defined by $\clr(\alpha^{\mathrm{anti}}) = -\clr(\bar\alpha)$ for $\bar\alpha$ the deposit-share centroid, is continuous, everywhere defined, and nonconstant, and satisfies (A3) (its log-ratios $-\sum_\ell w_\ell(V) \log \rho_{ij}(\alpha_\ell^*)$ depend only on $(i,j)$-data and $V$) and (A4).
But for $m = 2$, equal valuations, $\clr(\alpha_1^*) = v$, $\clr(\alpha_2^*) = \lambda v$ ($v \neq 0$, $\lambda > 0$), the output $-\tfrac{1+\lambda}{2} v$ lies outside the hull $\{\mu v \mid \mu \in [\min(1,\lambda), \max(1,\lambda)]\}$, so (A2) fails.

\emph{(A3) Mechanism IIA.}
 Let $M$ be the centroid with dispersion-modulated weights 
 satisfying  $w^{\mathrm{disp}}_\ell \propto V_\ell \, e^{-\lambda \Aitch(\alpha_\ell^*, c)^2}$ ($\lambda > 0$; these weights are continuous, anonymous, strictly positive, and sum to one).
The convex combination lies in the Aitchison hull, giving (A2), and (A0), (A1), (A4), (C) are immediate. 
 But the weights use the full clr-norm of each peak, not merely its $(i,j)$-restrictions, so (A3) fails.
Concretely, with $n = 3$, $m = 2$, $V_1 = V_2 = 1$, and $\lambda = 1$, the profiles
 $A = ((1/2,1/4,1/4), (1/4,1/2,1/4))$ and $B = ((1/2,1/4,1/4), (1/7,2/7,4/7))$ share the $(1,2)$-restrictions $\rho_{12} = 2$ (LP $1$) and $\rho_{12} = 1/2$ (LP $2$).
 But $\Aitch(\alpha_2^A, c)^2 = \tfrac{2}{3}(\log 2)^2 \approx 0.320$  while $\Aitch(\alpha_2^B, c)^2 = 2(\log 2)^2 \approx 0.961$, 
 so the dispersion weights differ ($w_1 \approx 0.655$ in $B$), giving $\rho_{12}(M_B) = 2^{w_1 - w_2} \approx 1.240 \neq 1 = \rho_{12}(M_A)$.

\emph{(A4) Anonymity.}
The dictator defined by $M(\alpha^*, D) := \alpha_1^*$  is nonconstant in LP $1$'s peak (A0), 
defined everywhere (A1), outputs a peak and so lies trivially in the hull (A2), uses only LP $1$'s pair-restrictions (A3), and is continuous (C). 
It violates (A4), since swapping LPs $1$ and $2$ changes the output.

\emph{(C) Continuity.}
Fix $\tau > 0$ and switch the centroid weighting between $V_\ell/V_{\mathrm{tot}}$ and $V_\ell^2/\sum_k V_k^2$ according to whether $\mathrm{Var}(V) := \tfrac{1}{m} \sum_\ell (V_\ell - \bar V)^2 \geq \tau$.
Because the selector and both weightings are functions of $V$ alone, symmetric in the $V_\ell$, the rule satisfies (A0)--(A2) and (A4), and (A3) holds globally (its pair-rule $\rho_{ij} = \prod_\ell \rho_{ij}(\alpha_\ell^*)^{w_\ell(V)}$ uses only $(i,j)$-restrictions and $V$).
At $\mathrm{Var}(V) = \tau > 0$ the two weightings differ: the equality $V_\ell/V_{\mathrm{tot}} = V_\ell^2/\sum_k V_k^2$ for every $\ell$ rearranges to $V_\ell \cdot \sum_k V_k^2 = V_\ell^2 \cdot \sum_k V_k$, i.e., $\sum_k V_k^2/\sum_k V_k = V_\ell$, so the left-hand side is constant in $\ell$, forcing all $V_\ell$ equal; conversely if all $V_\ell$ coincide both weightings reduce to $(1/m,  \ldots, 1/m)$.
Thus,  the two weightings agree exactly on the symmetric profile, so for distinct peaks the output jumps at $\mathrm{Var}(V) = \tau$ and (C) fails.
Routing the discontinuity through $V$ rather than a peak coordinate is what isolates (C).

\medskip
\noindent
Therefore, (A1)--(A4) and (C) are mutually independent on $\F_\Prod$,  (A0) is independent only in the restricted sense above, and strategy-proofness (SP) is jointly inconsistent with the Arrovian core on $\F_\Prod$ (\Cref{thm:impossibility}) yet consistent with it on $\F_\Sum$ (\Cref{thm:characterization-Fgamma}).

\section{An Elementary Identity on \texorpdfstring{$H$}{H}}\label{app:identity}

For $y \in \R^n$ with $\sum_k y_k = 0$ (equivalently, $y \in H$),
\[
\sum_{i<j}(y_i - y_j)^2 = n \|y\|^2,
\]
where $\|y\|^2 = \sum_k y_k^2$.
This identity is used in \Cref{rem:lp-utility-aitchison} to convert the rate-targeting loss $\mathcal{D}_\ell$ into the squared Aitchison distance.

\begin{proof}
 Expanding and using that each $y_k^2$ appears in exactly $n-1$ pairs $\{(k,j) \mid j \neq k\}$,
\[
\sum_{i<j}(y_i - y_j)^2 = \sum_{i<j}\bigl(y_i^2 + y_j^2 - 2 y_i y_j\bigr) = (n-1)\|y\|^2 - 2\sum_{i<j} y_i y_j.
\]
From $(\sum_k y_k)^2 = \|y\|^2 + 2\sum_{i<j} y_i y_j$, we have $2\sum_{i<j} y_i y_j = (\sum_k y_k)^2 - \|y\|^2$; substituting,
\[
\sum_{i<j}(y_i-y_j)^2 = n\|y\|^2 - \textstyle(\sum_k y_k)^2.
\]
The hypothesis $\sum_k y_k = 0$ yields the claim.
\end{proof}


\section{Beyond Aitchison-Single-Peakedness}\label{app:general-pareto}

The body of the paper takes LP preferences to be Aitchison-single-peaked, namely $f \succeq_\ell g$ iff $\Aitch(f, \theta_\ell^*) \leq \Aitch(g, \theta_\ell^*)$ on $\F_\Prod \cong \simplex$.
This appendix records the precise sense in which this hypothesis is convenient rather than essential.
Examining the proofs of \Cref{thm:characterization-F0,thm:impossibility} shows that the single-peakedness hypothesis is used at exactly three structurally distinct places, and that two of them require no preference-shape assumption while the third is the genuine bottleneck.
The result is a generalization of both theorems to a broader class of LP welfare objects, paralleling the multi-asset extension treated at the welfare-level in adjacent work \citep{schlegel-kwasnicki-mamageishvili-2023,bichuch-feinstein-2025} but here at the mechanism-level.

\subsection{The Three Uses of Single-Peakedness}\label{app:two-uses}

In tracking the use of single-peakedness through the proofs:

\emph{Use 1: pairwise extraction and the cocycle.}
The Cauchy-cocycle derivation of \Cref{sec:F0-additive-cocycle,sec:F0-cauchy} uses no preference shape whatsoever.
It uses only (A1), (A3), (A4), and (C), which are statements about the mechanism's structural behavior under inputs.
The derived form $\Psi_{ij}(u; V) = \sum_\ell w_\ell(V) u_\ell + (\beta_i(V) - \beta_j(V))$ of \Cref{lem:psi-decomposition,lem:Phi-linear} survives any change in preference shape, as it depends only on $M$ and the structural ratios $\rho_{ij}$.

\emph{Use 2: pinning the weights via Pareto.}
\Cref{lem:pareto-hull} identifies the Pareto-undominated set $P(\theta^*)$ with the Aitchison hull $C$ of the peaks.
The forward direction (\emph{outside the hull implies dominated}) is the Hilbert projection argument, which uses single-peakedness through the equivalence $f \succ_\ell g \iff \Aitch(f, \theta_\ell^*) < \Aitch(g, \theta_\ell^*)$.
The reverse direction (\emph{inside the hull implies undominated}) is symmetric.
These two together feed (A2) into Step~1 of \Cref{lem:V-dependent-weights}, eliminating the coboundary $\beta_i(V) - \beta_j(V)$ and yielding nonnegative weights summing to one.

\emph{Use 3: strategy-proofness.}
The manipulation argument of \Cref{lem:sp-forces-projection} is a statement about linear maps on $H$ under Euclidean strategy-proofness.
It uses no preference shape at all beyond the metric on the simplex: the manipulator's optimal misreport drives the linear combination $\sum_\ell w_\ell x_\ell^*$ to the manipulator's reported peak by a closed-form construction, valid under any norm-induced metric on $\simplex$ via the clr-isometry.

Single-peakedness therefore enters substantively at exactly one place (Use~2), namely the Pareto-hull identification.
Use~1 and Use~3 require only that LPs' welfare functions agree with the metric structure in the sense made precise below; everything else is metric-independent.

\subsection{The General Welfare Object}\label{app:general-welfare}

 We isolate the property the Pareto step uses.
Recall that $H = \{x \in \R^n \mid \sum_i x_i = 0\}$ is the zero-sum hyperplane
 and $\clr \colon \simplex \to H$ is the centered log-ratio map.

\begin{definition}[Aitchison-Convex Welfare]\label{def:aitchison-convex-welfare}
An LP welfare object on $\F_\Prod$ is a tuple $(\succeq_\ell, \theta_\ell^*)_{\ell = 1}^m$ where each $\succeq_\ell$ is a complete, transitive, continuous preorder on $\F_\Prod$ with a unique maximum $\theta_\ell^* \in \F_\Prod$.
We say that the welfare object is \emph{Aitchison-convex} if for every LP $\ell$, the lower-contour sets
\[
\mathcal{L}_\ell(f) := \{g \in \F_\Prod \mid f \succeq_\ell g\} \quad \text{for } f \in \F_\Prod
\]
are \emph{Aitchison-convex} in the sense that $\clr(\mathcal{L}_\ell(f))$ is a closed convex subset of $H$ for every $f$.
\end{definition}

Aitchison-convexity is strictly weaker than Aitchison-single-peakedness: every Aitchison-single-peaked preference has lower-contour sets equal to closed Aitchison balls, which are Aitchison-convex; but the latter class also contains anisotropic preferences whose contours are general convex sets (ellipsoids, half-spaces, polytopes) in clr-coordinates.
Economically,  this captures any LP whose dislike of a pool $\alpha$ is convex in the log-ratio coordinates, but allows the dislike to weight different log-ratios asymmetrically, as a desk weighting downside-deviation more heavily than upside on certain pairs would do.

\begin{lemma}[Generalized Pareto Hull]\label{lem:pareto-hull-general}
Under Aitchison-convex welfares (\Cref{def:aitchison-convex-welfare}), the Pareto-undominated set $P(\theta^*)$ equals the Aitchison hull $C := \mathrm{Aitch\text{-}hull}\{\alpha_\ell^*\}_{\ell=1}^m$ of the peaks.
\end{lemma}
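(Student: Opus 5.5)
The plan is to reuse the two-direction structure of \Cref{lem:pareto-hull}, with the Euclidean norm replaced by the contour-set hypothesis of \Cref{def:aitchison-convex-welfare}. Everything is transported to $H$ by $\clr$, and we write $x_\ell^* := \clr(\alpha_\ell^*)$ and $K := \mathrm{conv}\{x_\ell^*\}$, which is compact and convex. First I would fix the reading of the hypothesis. For a metric-single-peaked preference, the sets that are closed Aitchison balls are the \emph{upper} contour sets $\{g \mid g \succeq_\ell f\}$; the sets $\mathcal{L}_\ell(f)$ as written are complements of open balls, which are not convex. So the proof will treat $\clr$ of the weak-better sets as closed and convex, i.e.\ quasiconcave utilities on $H$ with a unique maximiser $x_\ell^*$. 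That is the reading under which the paper's own claim that the isotropic case is included makes sense.

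\emph{Direction 1 (outside $K$ implies dominated).} For $x \notin K$, let $\pi(x)$ be its Hilbert projection onto $K$, and write $a := x - \pi(x)$. In the isotropic case the variational inequality $\langle a, \pi(x) - x_\ell^*\rangle \geq 0$ makes $\pi(x)$ strictly closer to every peak. I would try to show directly that $\pi(x) \succ_\ell x$ for each $\ell$. The plan is to look at the segment from $x_\ell^*$ to $x$: it meets the supporting hyperplane $\{y \mid \langle a, y - \pi(x)\rangle = 0\}$, and quasiconcavity gives monotonicity of the preference along that segment. The remaining task is to compare the point where the segment crosses the hyperplane with $\pi(x)$ itself.

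\emph{Direction 2 (inside $K$ implies undominated).} Suppose $x = \sum_\ell \lambda_\ell x_\ell^* \in K$ and $x'$ Pareto-dominates $x$. Then for each $\ell$, $x'$ lies in the strict better set of $x$, which is convex and contains $x_\ell^*$. I would separate $x$ from each strict better set by a supporting functional $\nu_\ell$, so that $\langle \nu_\ell, x_\ell^* - x\rangle > 0$ and $\langle \nu_\ell, x' - x\rangle > 0$. The aim is then a $\lambda$-weighted contradiction analogous to the summation step in \Cref{lem:pareto-hull}.

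The main obstacle is that both directions need the separating functionals to align: in Direction 1 with $a$, and in Direction 2 with one another. In the isotropic case this is automatic, because $\nu_\ell$ is proportional to $x - x_\ell^*$. For anisotropic contours (for instance, ellipsoids with different axes for two LPs) I expect the argument to fail outright, since the classical contract curve then bows out of the segment joining the peaks. My first move would therefore be to test precisely such an ellipsoidal two-LP example in $n = 3$. If it does produce Pareto points off the hull, the lemma as stated is false. In that case I would add the minimal extra hypothesis under which the argument goes through, namely that each $\succeq_\ell$ is weakly improved by nearest-point projection onto any closed convex set containing $x_\ell^*$ (equivalently, contours whose normals point radially toward the peak). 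Under that hypothesis both directions go through verbatim as in \Cref{lem:pareto-hull}.
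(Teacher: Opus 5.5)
Your diagnosis is correct, and you can state it as a conclusion rather than a conjecture: the lemma is false as stated, and the paper's sketch does not prove it.

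\textbf{The hypothesis has to be reread.} Your rereading is forced. Taken literally, convexity of $\{g \mid f \succeq_\ell g\}$ cannot coexist with a unique interior peak. Take $a,b$ with $\clr(a)+\clr(b)=2\clr(\alpha_\ell^*)$ and $a \succeq_\ell b$. Convexity of $\clr(\mathcal{L}_\ell(a))$ then gives $a \succeq_\ell \alpha_\ell^*$, so $a$ is a second maximum. The literal class is therefore empty and the lemma vacuous.

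\textbf{Counterexample under the intended reading.} Assume the weakly-better sets are closed and convex in clr-coordinates. Take $n=3$ and coordinatize $H \cong \R^2$ by an orthonormal basis.
\begin{itemize}
\item LP~1 has peak $0$ and welfare $-(s^2+4t^2)$.
\item LP~2 has peak $(1,1)$ and welfare $-\bigl((s-1)^2+(t-1)^2\bigr)$.
\end{itemize}
Here $C$ is the segment $\{(\lambda,\lambda) \mid \lambda\in[0,1]\}$.
\begin{itemize}
\item The unique maximizer of the sum of the two welfares is $(1/2,1/5)\notin C$. It is Pareto-undominated, since any point strictly better for both LPs would raise the sum.
\item Conversely, $(1/2,1/2)\in C$ is strictly dominated by $(0.7,0.4)$. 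LP~1's welfare rises from $-1.25$ to $-1.13$, and LP~2's from $-0.5$ to $-0.45$.
\end{itemize}
So both inclusions fail. The failure propagates: at a symmetric valuation profile the equal-weight centroid outputs $(1/2,1/2)$ here, so the converse half of \Cref{thm:general-impossibility}(a) fails too.

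\textbf{Where the paper's sketch breaks.} It fails exactly where you predicted the separating functionals would not align.
\begin{itemize}
\item In Direction~1 it takes an \emph{arbitrary} $\alpha'$ strictly on the peaks' side of a hyperplane separating $\clr(\alpha)$ from $C$, and asserts $\alpha'\succ_\ell\alpha$ for every $\ell$. This is false even for round contours: take $\alpha'$ far from all peaks. \Cref{lem:pareto-hull} needed the specific point $\pi(\clr(\alpha))$ together with the quadratic expansion.
\item In Direction~2 it places every peak $\alpha_k^*$ in LP~$\ell$'s contour set at $\alpha$, which is unjustified for $k\neq\ell$.
\item Its closing ``contradiction'' between $\langle v_\ell,\clr(\alpha)\rangle\le b_\ell$ and $b_\ell>\langle v_\ell,\clr(\alpha)\rangle$ is not a contradiction at all.
\end{itemize}

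\textbf{Your repair needs two adjustments.}

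First, ``weakly improved by projection'' is not enough for Direction~1. $P(\theta^*)$ is the weak-Pareto set, so you need $\pi(x)\succ_\ell x$ strictly whenever $x\notin K$. Without strictness even round contours fail. Use peaks $0$ and $(3,0)$, and welfare $-\phi(\|y-x_\ell^*\|)$ with $\phi(r)=r$ on $[0,1]$, $\phi=1$ on $[1,2]$, and $\phi(r)=r-1$ beyond. The point $(1.5,0.5)\notin K$ puts both LPs on their plateau. Their strictly-better sets are the disjoint open unit balls about the peaks, so the point is undominated, even though projection weakly improves both LPs.

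Second, the radial-normal condition you identify forces every weakly-better set to be a closed Aitchison ball about the peak. The strict form of your hypothesis is therefore ordinally just Aitchison-single-peakedness, and the repaired lemma is \Cref{lem:pareto-hull} itself, with no anisotropy gained. The genuinely anisotropic version that survives has all LPs single-peaked in a \emph{common} inner product $Q$ on $H$. The linear map $Q^{1/2}$ transports that case to \Cref{lem:pareto-hull} and preserves convex hulls. It is precisely LP-dependent contour shapes, as in the example above, that break the statement.
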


\begin{proof}[Proof sketch]
The argument follows the structure of \Cref{lem:pareto-hull}, replacing the Hilbert-projection argument (which exploits the round-ball geometry of single-peaked preferences) with a separating-hyperplane argument applicable to general convex lower-contour sets.

\emph{Direction 1 (outside the hull $\Rightarrow$ Pareto-dominated).}
Let $\alpha \notin C$.
Since $C$ is a closed convex set in $H$, the separating hyperplane theorem supplies $v \in H \setminus \{0\}$ and $b \in \R$ with $\langle v, \clr(\alpha)\rangle > b \geq \langle v, \clr(\alpha_\ell^*)\rangle$ for every $\ell$.
One constructs $\alpha' \in \simplex$ with $\langle v, \clr(\alpha')\rangle  < b$, placing $\clr(\alpha')$ strictly on the same side as all peaks and strictly away from $\clr(\alpha)$; such $\alpha'$ exists because  $\clr \colon \simplex \to H$ is a bijection, so every point in $H$ is the image of a unique element of $\simplex$.
For each $\ell$, the lower-contour set $\clr(\mathcal{L}_\ell(\alpha))$ is convex, contains $\clr(\alpha)$ by reflexivity and $\clr(\alpha_\ell^*)$ because the peak dominates $\alpha$, hence contains the entire segment $[\clr(\alpha_\ell^*), \clr(\alpha)]$.
The boundary passes through $\clr(\alpha)$, and $\clr(\alpha')$ lies strictly on the opposite side of that boundary (its $v$-value is below $b$, whereas the interior of $\clr(\mathcal{L}_\ell(\alpha))$ reaches above $b$ toward $\clr(\alpha_\ell^*)$); by convexity and the location of $\clr(\alpha')$, one concludes $\alpha' \succ_\ell \alpha$.
Since this holds for every $\ell$, the point $\alpha$ is Pareto-dominated.

\emph{Direction 2 (inside the hull $\Rightarrow$ Pareto-undominated).}
Let $\alpha \in C$, so $\clr(\alpha) = \sum_\ell \lambda_\ell \clr(\alpha_\ell^*)$ for some $\lambda \in \Delta^{m-1}$.
Suppose for contradiction some $\alpha'$ strictly dominates $\alpha$, i.e., $\alpha' \succ_\ell \alpha$ for every $\ell$.
Then $\clr(\alpha')$ lies strictly outside each closed convex lower-contour set $\clr(\mathcal{L}_\ell(\alpha))$.
Strict separation gives, for each $\ell$, a vector $v_\ell$ with $\langle v_\ell, \clr(\alpha')\rangle > b_\ell > \langle v_\ell, x\rangle$ for all $x \in \clr(\mathcal{L}_\ell(\alpha))$.
In particular $b_\ell > \langle v_\ell, \clr(\alpha_k^*)\rangle$ for every $k$ (each peak dominates $\alpha$, so each peak lies in $\clr(\mathcal{L}_\ell(\alpha))$), and $b_\ell > \langle v_\ell, \clr(\alpha)\rangle$ (by reflexivity, $\alpha \in \mathcal{L}_\ell(\alpha)$).
But $\clr(\alpha) = \sum_k \lambda_k \clr(\alpha_k^*)$ is a convex combination of the peaks, so $\langle v_\ell, \clr(\alpha)\rangle = \sum_k \lambda_k \langle v_\ell, \clr(\alpha_k^*)\rangle \leq b_\ell$, contradicting the strict inequality $b_\ell  > \langle v_\ell, \clr(\alpha)\rangle$.
Hence, no such $\alpha'$ exists and $\alpha \in P(\theta^*)$.
\end{proof}

We can now state the generalized result.

\begin{theorem}[Characterization and Impossibility under Aitchison-Convex Welfares]\label{thm:general-impossibility}
Let $n \geq 3$, $m \geq 2$, and suppose LP welfares are Aitchison-convex (\Cref{def:aitchison-convex-welfare}).
\begin{enumerate}
\item[(a)] A mechanism $M \colon (\simplex \times \R^A_{>0})^m \to \simplex$ satisfies the Arrovian core (A0)--(A4) and (C) if and only if it has the weighted Aitchison centroid form (\ref{eq:aitchison-centroid}) for some continuous, equivariant $w \colon \R^m_{>0} \to \Delta^{m-1}$.
\item[(b)] Suppose in addition that, for each LP $\ell$, the lower-contour set $\mathcal{L}_\ell(f)$ has a differentiable boundary at $\alpha_\ell^*$, and let $g_\ell$ denote the local Riemannian metric on a neighborhood of $\alpha_\ell^*$ induced by the gradient of that boundary (so that ``LP $\ell$ prefers strictly closer outputs'' is well-defined in $g_\ell$). 
If (SP) is taken with respect to these local metrics, then no mechanism satisfies the Arrovian core jointly with (SP).
\end{enumerate}
\end{theorem}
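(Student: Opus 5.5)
For part (a), I will rerun the proof of \Cref{thm:characterization-F0} with only the Pareto lemma replaced. The pairwise extraction (\Cref{lem:pair-extraction-F0}), the cocycle identity (\Cref{lem:cocycle-psi}), the coboundary decomposition (\Cref{lem:psi-decomposition}), and the linearity step (\Cref{lem:Phi-linear}) use only (A1), (A3), (A4), (C) and the structural ratios $\rho_{ij}$. None of them refers to preference shape (Use~1). So for any Aitchison-convex welfare object we still get
\[
\Psi_{ij}(u;V)=\sum_\ell a_\ell(V)u_\ell+\beta_i(V)-\beta_j(V),
\]
with continuous coefficients.

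Next, \Cref{lem:pareto-hull-general} replaces \Cref{lem:pareto-hull}: under (A2) the output lies in the Aitchison hull of the peaks. Because $\log\rho_{ij}$ is linear on $H$, this gives the same pairwise bound \eqref{eq:pareto-bound}. Steps 1--3 of \Cref{lem:V-dependent-weights} then carry over verbatim:
\begin{itemize}
\item $u=0$ kills the coboundary;
\item $u=te_k$ and $u=t\mathbf 1$ put $a(V)$ in $\Delta^{m-1}$;
\item (A4) gives equivariance.
\end{itemize}
For the converse, I take the centroid from \Cref{sec:F0-converse}. The only preference-dependent axiom there is (A2), and it follows from Direction~2 of \Cref{lem:pareto-hull-general}, since the output's clr is a convex combination of the peaks' clr-vectors.

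For part (b), I argue by contradiction using part (a). Any mechanism satisfying the core is a weighted centroid, and at a symmetric valuation profile $V^{\mathrm{sym}}$ equivariance gives $w=(1/m,\dots,1/m)$. As in \Cref{rem:sp-V-dependent}, a peak misreport leaves $V$ fixed. I then reuse the manipulation of \Cref{lem:sp-forces-projection}. Choose a profile with $x_{\ell_0}^*=\clr(\alpha_{\ell_0}^*)$ and truthful output $\phi^*\ne x_{\ell_0}^*$. The misreport $\hat x_{\ell_0}=w_{\ell_0}^{-1}\bigl(x_{\ell_0}^*-\sum_{\ell\ne\ell_0}w_\ell x_\ell^*\bigr)$ drives the output exactly to the manipulator's true peak. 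The key point is that the manipulated outcome is the peak itself, which is the unique maximum of $\succeq_{\ell_0}$. Hence it is strictly preferred to $\phi^*$ under the welfare order, and its distance to the peak is $0$ in any metric, in particular in $g_{\ell_0}$. So the manipulation is profitable whatever the local metric is, and (SP) fails.

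The main obstacle is how to read (SP) in part (b), because $g_\ell$ is only defined near $\alpha_\ell^*$ while $\phi^*$ may lie far away. I would handle this in one of two ways. The first is to note that only the comparison ``distance $0$ versus positive distance'' is needed, which holds for any metric or preorder with a unique maximum. The second, if a local statement is required, is to choose the profile so that $\phi^*$ lies inside the neighborhood where $g_{\ell_0}$ is defined. That is possible by taking the other peaks close to $x_{\ell_0}^*$, so that $\|\phi^*-x_{\ell_0}^*\|$ is small but nonzero; the misreport is unaffected. A secondary point to check is that Direction~1 of \Cref{lem:pareto-hull-general} is only sketched. I would verify it by taking $\alpha'$ to be the Euclidean projection of $\clr(\alpha)$ onto $C$, and I would use uniqueness of the maximum to obtain strict preference. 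Even so, part (a) only needs the inclusion $P(\theta^*)\subseteq C$ together with Direction~2.
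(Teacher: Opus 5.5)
Your plan follows the paper's own proof almost verbatim. You rerun \Cref{thm:characterization-F0} with \Cref{lem:pareto-hull-general} in place of \Cref{lem:pareto-hull}, obtain (A2) for the centroid from Direction~2, and break (SP) with the exact-to-peak misreport at a symmetric valuation profile. The gap is that \Cref{lem:pareto-hull-general}, on which both you and the paper rely, is false once contours are genuinely anisotropic. Here \Cref{def:aitchison-convex-welfare} is read, as you and the paper's sketch implicitly do, with convex \emph{better-than} sets. Read literally, with convex worse-than sets, the class is empty: the worse-than set of the better of $\theta^*\pm v$ would contain their midpoint $\theta^*$, contradicting the unique maximum.

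For a counterexample, take $n=3$ and identify $H$ with $\R^2$ by an orthonormal basis. Let LP~1 have welfare $u_1(s,r)=-(s^2+r^2)$ with peak $0$. Let LP~2 have $u_2(s,r)=-\bigl((s-2)^2+2(s-2)r+2r^2\bigr)$, a positive-definite form with peak $p=(2,0)$. Both have closed convex upper contour sets and unique maxima.
\begin{itemize}
\item At a symmetric valuation profile, equivariance forces $w=(1/2,1/2)$, so every member of the centroid family outputs $z=(1,0)$. But $z_\varepsilon=(1-\varepsilon,2\varepsilon)$ gives both LPs welfare $-(1-2\varepsilon+5\varepsilon^2)>-1$ for $0<\varepsilon<2/5$. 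So Direction~2 fails, the centroid violates (A2), and the ``if'' half of (a) is false for this welfare object, not merely unproved.
\item In the other direction, $\bar x=(0.8,0.4)$ satisfies $\nabla u_1(\bar x)=(-1.6,-0.8)=-\nabla u_2(\bar x)$. It therefore maximizes the strictly concave $u_1+u_2$ and is Pareto optimal, yet it lies off the segment $[0,p]$. Its projection $(0.8,0)$ lowers $u_2$ from $-0.8$ to $-1.44$, so your projection repair of Direction~1 fails and $P(\theta^*)\not\subseteq C$.
\end{itemize}
Your fallback ``$P(\theta^*)\subseteq C$ plus Direction~2'' therefore rests on two false inclusions. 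Losing the first also removes the pairwise bound \eqref{eq:pareto-bound}, and with it the nonnegativity argument ($u=te_k$) in Step~2 of \Cref{lem:V-dependent-weights}.

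Part (b) survives, because it needs much less. Every welfare here has a unique maximum, so (A2) still implies unanimity. The Cauchy steps plus unanimous profiles then give $\clr(M)=\sum_\ell a_\ell(V)\,\clr(\alpha_\ell^*)$ with $\sum_\ell a_\ell(V)=1$ and no coboundary. (A4) alone then forces $a(V^{\mathrm{sym}})=(1/m,\dots,1/m)$. That is all your part (b) actually uses, so (b) goes through if you cite this weaker form instead of (a). Your remark that the manipulated outcome is the unique maximum already settles the local-metric issue; placing the other peaks near $x_{\ell_0}^*$ is an optional tightening.

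Part (a), by contrast, needs an extra hypothesis that genuinely forces $P(\theta^*)=C$. One sufficient condition is that all LPs share one positive-definite quadratic form $Q$ on $H$. Then, by Gordan's alternative, $x$ is weakly Pareto optimal if and only if $\sum_\ell\lambda_\ell Q(x-x_\ell^*)=0$ for some $\lambda\in\Delta^{m-1}$, that is, if and only if $x\in C$. Differently oriented ellipsoids, as in the example, fall outside this condition and break the hull identification. Asymmetric downside-weighted contours of the kind used to motivate the definition also fall outside it and can break it.
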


\begin{proof}
\emph{Part (a): characterization.}
 The forward direction (axioms imply centroid form) follows from the proof of \Cref{thm:characterization-F0} step by step, with one substitution: \Cref{lem:pareto-hull-general} replaces \Cref{lem:pareto-hull} in the Pareto-pinning step (\Cref{sec:F0-pareto-pin}).
Every other step is unchanged, since pair extraction (\Cref{lem:pair-extraction-F0}) and the cocycle and Cauchy steps (\Cref{lem:cocycle-psi,lem:psi-decomposition,lem:Phi-linear}) use only (A1), (A3), (A4), (C), and no preference shape whatsoever.
The Pareto-pinning step (\Cref{lem:V-dependent-weights}) requires the bound $\Psi_{ij}(u;V) \in [\min_\ell u_\ell, \max_\ell u_\ell]$, which is equivalent to the output lying in the Aitchison hull of the peaks; under Aitchison-convex welfares this hull is still the Pareto-undominated set by \Cref{lem:pareto-hull-general}, so the bound goes through verbatim.

For the converse (centroid form implies axioms), we verify (A2) explicitly under Aitchison-convex welfares.
Given any continuous, equivariant $w$, the centroid output $\bar\alpha := M(\alpha^*, D)$ satisfies $\clr(\bar\alpha) = \sum_\ell w_\ell(V) \clr(\alpha_\ell^*)$ with $w(V) \in \Delta^{m-1}$, so $\clr(\bar\alpha)$ lies in the Euclidean convex hull of $\{\clr(\alpha_\ell^*)\}_{\ell=1}^m$, i.e., $\bar\alpha \in C$.
By Direction~2 of \Cref{lem:pareto-hull-general}, $C \subseteq P(\theta^*)$, so $\bar\alpha \in P(\theta^*)$ and (A2) holds.
The remaining axioms (A0), (A1), (A3), (A4), (C) are verified exactly as in \Cref{sec:F0-converse}, since they depend only on the form of $w$, not on the preference structure.

\emph{Part (b): impossibility.}
By part (a), any Arrovian mechanism is a weighted Aitchison centroid, a linear aggregator in clr-coordinates.
The manipulation of \Cref{lem:sp-forces-projection} is a statement about linear maps on $H$ under a metric: if the weights are not an indicator, there exists a profile and a misreport driving the output exactly to the manipulator's true peak, reducing the distance to zero under any metric.
Under the local gradient metric assumed in the theorem, (SP) requires that no LP can strictly reduce their distance to their peak by misreporting; the exact-zero misreport of \Cref{lem:sp-forces-projection} achieves this whenever $0 < w_{\ell_0} < 1$.
Equivariance forces uniform weights at any symmetric valuation profile, giving $w_{\ell_0} = 1/m \in (0,1)$ for every $\ell_0$ when $m \geq 2$; the impossibility follows as in \Cref{sec:impossibility-proof}.
\end{proof}

\begin{remark}[Scope and Optimality of the Generalization] \label{rem:generalization-scope}
\Cref{thm:general-impossibility} relaxes the symmetric quadratic loss of \Cref{rem:lp-utility-aitchison} to general Aitchison-convex losses, covering for instance LPs whose dislike of $\alpha$ weights different log-ratios asymmetrically (contours that are ellipsoids in clr-coordinates rather than spheres).
It does not cover preferences with genuinely nonconvex clr-contours or path-functional objectives (drawdown-aversion, time-average-targeting), which fall outside any wealth-summary-style framework.
The Aitchison-convex class is in fact the largest class on $\F_\Prod$ that preserves the two structural features the proof uses: the Pareto-undominated set equals the closed convex hull of the peaks in $H$, and a strict-improvement direction at a peak is well-defined as a half-space in $H$; outside this class either feature can fail.
\end{remark}

\iftoggle{journal}{%
 \section*{Declarations}

\noindent \textbf{Competing interest.}
The author declares no competing interests.

\medskip

\noindent  \textbf{Funding.}
No funding was received for this work.

\medskip

\noindent  \textbf{CRediT authorship contribution statement.}
 Frank M.\ V.\ Feys is the sole author and conducted all research, analysis, and writing.

\medskip

\noindent \textbf{Declaration of generative AI and AI-assisted technologies in the writing process.}
No generative AI  or AI-assisted technologies were used in the preparation
of this manuscript.
}{}

\end{document}